\newcommand{\ind}{\mathbbm{1}}
\newtheorem{theorem}{Theorem}{\bfseries}{\itshape}
\newtheorem{inftheorem}[theorem]{Informal Theorem}{\bfseries}{\itshape}
\newtheorem{corollary}[theorem]{Corollary}{\bfseries}{\itshape}
{\bfseries}{\itshape}
{\bfseries}{\itshape}
\newtheorem{definition}[theorem]{Definition}{\bfseries}{\itshape}
\newtheorem{proposition}[theorem]{Proposition}{\bfseries}{\itshape}
{\bfseries}{\itshape}
\newtheorem{lemma}[theorem]{Lemma}{\bfseries}{\itshape}
{\bfseries}{\em}
\newtheorem{claim}[theorem]{Claim}{\bfseries}{\em}
\newenvironment{varthm}[1][Theorem]{\begin{trivlist}
\item[\hskip \labelsep {\bfseries #1}]}{\end{trivlist}}
\newenvironment{prevproof}[2]{\noindent {\em {Proof of {#1}~\ref{#2}:}}}{$\Box$\vskip \belowdisplayskip}
\newcommand{\poly}{\text{poly}}
\newcommand{\costasnote}[1]{{\color{red}{#1}}}
\newcommand{\costasfootnote}[1]{{\footnote{\color{red}{{\bf Costas says:}~#1}}}}
\newcommand{\yangnote}[1]{{\color{blue}{#1}}}
\newcommand{\notshow}[1]{{}}
\newcommand{\angler}[1]{\langle #1 \rangle}
\begin{document}
\title {Extreme-Value Theorems for Optimal Multidimensional Pricing}
\author {Yang Cai\footnote{Work done while the author was a student at MIT, supported by NSF Awards CCF-0953960 (CAREER) and CCF-1101491.}\\
Computer Science, McGill University \\
cai@cs.mcgill.ca\\
\and
Constantinos Daskalakis\footnote{Supported by a Sloan Foundation Fellowship, a Microsoft Research Faculty Fellowship, and NSF Awards CCF-0953960 (CAREER) and CCF-1101491.}\\
EECS, MIT\\
costis@csail.mit.edu\\
}
\addtocounter{page}{-1}
\maketitle

\begin{abstract}
We provide a near-optimal, computationally efficient algorithm for the {unit-demand pricing problem}, where a seller wants to price $n$ items to optimize revenue against a unit-demand buyer whose values for the items are independently drawn from known distributions. For any chosen accuracy $\epsilon>0$ and item values bounded in $[0,1]$, our algorithm achieves revenue that is optimal up to an additive error of at most~$\epsilon$, in polynomial time. For values sampled from Monotone Hazard Rate (MHR) distributions, we achieve a $(1-\epsilon)$-fraction of the optimal revenue in polynomial time, while for values sampled from regular distributions the same revenue guarantees are achieved in quasi-polynomial time.

 
Our algorithm for bounded distributions applies probabilistic techniques to understand the statistical properties of revenue distributions, obtaining a reduction in the search space of the algorithm through dynamic programming. Adapting this approach to MHR and regular distributions requires the proof of novel extreme value theorems for such distributions. 

As a byproduct, our techniques establish structural properties of {approximately-optimal and near-optimal solutions. We show that, when the buyer's values are independently distributed according to MHR distributions, pricing all items at the same price achieves a constant fraction of the optimal revenue.} Moreover, for all $\epsilon >0$, at most $g(1/\epsilon)$ distinct prices suffice to obtain a $(1-\epsilon)$-fraction of the optimal revenue, where $g(1/\epsilon)$ is a quadratic function of $1/\epsilon$ that does not depend on the number of items. Similarly, for all $\epsilon>0$ and $n>0$, at most $g(1/\epsilon \cdot \log n)$ distinct prices suffice if the values are independently distributed according to regular distributions, where $g(\cdot)$ is a polynomial function. Finally, if the values are i.i.d. from some MHR distribution, we show that, as long as the number of items is a sufficiently large function of $1/\epsilon$, a single price suffices to achieve a $(1-\epsilon)$-fraction of the optimal revenue.
\end{abstract}

\thispagestyle{empty}
\newpage
\section{Introduction} \label{sec:intro}

We study the following {pricing problem}. A seller has $n$ items to sell  to a buyer who is looking to buy a single item. The seller wants to maximize profit from the sale, leveraging stochastic knowledge she has about the buyer to achieve this goal. In particular, we assume that the seller has access to a distribution $\cal F$ from which the values $(v_1,\ldots,v_n)$ of the buyer for the items are drawn. Given this information, the seller wants to compute prices $p_1,\ldots,p_n$ for the items  to maximize her revenue, assuming that the buyer is {\em quasi-linear}---i.e. will buy the item~$i$ maximizing $v_i - p_i$, as long as this difference is positive. That is, the seller's expected revenue from a price vector $P=(p_1,\ldots,p_n)$ is
\begin{align}
{\cal R}_P=\sum_{i=1}^n p_i \cdot \Pr \big[ (i = \arg \max\{v_j - p_j\})~\wedge~(v_i-p_i\ge0)\big], \label{eq:objective}
\end{align}
where we assume that the $\arg \max$ breaks ties in favor of a single item, when there are multiple maximizers. A more sophisticated seller could try to improve her revenue by pricing lotteries over items, that is also price randomized allocations of items~\cite{BriestCKW10}, albeit this may be less natural than item pricing, and we will not study it extensively in this paper.

While our problem  has a simple statement, it exhibits rich behavior depending on the nature of $\cal F$. For example, if $\cal F$ assigns the same value to all the items with probability $1$, i.e. when the buyer always values all items equally, the problem becomes {\em single-dimensional}. In this setting, it is clear that lotteries  do not improve  the revenue and that the optimal price vector can assign the same price to all the items. This observation is a special case of the more general, celebrated result of Myerson~\cite{Myerson81} on optimal mechanism design, i.e. the multi-buyer version of our problem, and generalizations thereof. Myerson's result provides a closed-form solution to the multi-buyer problem in a single sweep that covers many settings, but only works under the same limiting assumption that every buyer is single-dimensional, i.e. receives the same value from all the items. (More generally, every buyer receives the same value from all outcomes of the mechanism that provide her service.)

Following Myerson, a large body of research in both Economics and Engineering has been devoted to extending his result to the {\em multi-dimensional setting}, where the buyers' values come from general distributions.  And, while there has been sporadic progress (see survey~\cite{ManelliV07} and its references), an optimal multi-dimensional mechanism, generalizing Myerson's result, does not seem to be in sight. Indeed, there is not even an optimal solution known for the single-buyer item pricing problem.  Even the ostensibly easier version of that problem, where the values of the buyer for the items are independent and supported on a set of cardinality $2$ is unresolved.\footnote{Incidentally, the problem is trickier than it originally seems, and various intuitive properties that one would expect from the optimal solution fail to hold. See Appendix~\ref{sec:counter} for an interesting example.} Our main contribution in this paper is to develop {near-optimal polynomial-time algorithms for this problem,} when the buyer's values for the items are independent.


\subsection{Main Results}

We partition our results into algorithmic and structural. The former provide efficient algorithmic procedures for computing near-optimal price vectors. The latter shed light into the structure of optimal solutions.

\paragraph{Algorithmic Results.} Previous work on the item pricing problem has provided constant factor approximation algorithms. The best known polynomial-time algorithm obtains revenue that is at least $1/2$ of the revenue of the optimal price vector~\cite{ChawlaHK07,ChawlaHMS10}. We discuss these approaches in Section~\ref{sec:related}, also noting that they are limited to constant factor approximations. We are aiming instead for item pricing mechanisms that come arbitrarily close to the optimal revenue, obtaining the following results. Their proofs are overviewed in Sections~\ref{sec: cover view of the problem} through~\ref{sec:regular}, while complete details are provided in the appendix.

\begin{theorem}[{\bf Main Algorithmic Result:} Additive PTAS for Bounded Distributions]\label{thm:additive ptas}
Suppose that the values of the buyer for $n$ items are independent and normalized to lie in $[0,1]$. Then, for all $\epsilon >0$, {there exists an algorithm that computes a price vector whose revenue is within an additive $\epsilon$ of optimal, and whose running time is polynomial in $n^{{\log^{3} 1/\epsilon\over \epsilon^{4}}}$.}
\end{theorem}

\begin{theorem}[General Algorithm]\label{thm:general algorithm}
Suppose that the values of the buyer for $n$ items are independent and supported on some interval $[u_{min},r \cdot u_{min}]$ for some $u_{min} >0$ and $r \ge 1$. Then, for all $\epsilon >0$, there is an algorithm that computes a price vector whose revenue is at least a $(1-\epsilon)$-fraction of the optimal revenue, and whose running time is polynomial in {$\max\left\{n^{\log^{11} r \cdot  \log \log r}, n^{\log^{3} r \cdot\log {1 \over \epsilon}\over \epsilon^{8}}\right\}$}\notshow{$n^{{\rm poly}({1 \over \epsilon}, \log r)}$}.\footnote{We note that a natural approach for computing approximately optimal price vectors is to discretize the domain of price vectors and show that searching over the discretized domain suffices for approximating the optimal revenue. However, a straightforward application of the discretizations  proposed by Nisan~\cite{ChawlaHK07} and Hartline and Koltun~\cite{HartlineK05} to our problem would result in running time of  $\left({1 \over \epsilon} \log r\right)^{O(n)}$. The purpose of our theorem is to remove the exponential dependence of the running time on the number of items $n$.}
\end{theorem}

\begin{theorem}[Multiplicative PTAS for MHR Distributions]\label{thm:ptas mhr}
There is a Polynomial-Time Approximation Scheme\footnote{A Polynomial-Time Approximation Scheme (PTAS) is a family of algorithms $\{\mathcal{A}_{\epsilon}\}_{\epsilon}$, indexed by the accuracy parameter $\epsilon >0$, such that for every fixed $\epsilon>0$, $\mathcal{A}_{\epsilon}$ runs in time polynomial in the size of its input. See Section~\ref{sec:prelim} for a formal definition.} for computing an optimal price vector, when the values of the buyer are independently drawn from Monotone Hazard Rate distributions.\footnote{Monotone Hazard Rate (MHR)  distributions are a commonly studied class of distributions that contain such familiar distributions as the Uniform, Gaussian and Exponential distributions. See Section~\ref{sec:prelim} for a formal definition.} 

For any accuracy $\epsilon >0$, the algorithm runs in time polynomial {in $n^{{1 \over \epsilon^{7}}}$},  and outputs a price vector whose revenue is at least a $(1-\epsilon)$-fraction of the optimal revenue, where $n$ is the number of items.
\end{theorem}

\begin{theorem}[Multiplicative Quasi-PTAS for Regular Distributions]\label{thm:quasi ptas regular}
There is a Quasi-Polynomial-Time Approximation Scheme\footnote{A Quasi-Polynomial-Time Approximation Scheme (Quasi-PTAS) is a family of algorithms $\{\mathcal{A}_{\epsilon}\}_{\epsilon}$, indexed by the accuracy parameter $\epsilon >0$, such that for every fixed $\epsilon>0$, $\mathcal{A}_{\epsilon}$ runs in time quasi-polynomial in the size of its input. See Section~\ref{sec:prelim} for formal definition.} for computing an optimal price vector, when the values of the buyer are independent and drawn from regular distributions.\footnote{Regular distributions are another widely studied class of distributions that contain MHR distributions. See Section~\ref{sec:prelim} for a formal defintion.} 

For any accuracy $\epsilon >0$, the algorithm runs in time {polynomial in $\max\left\{ n^{\log^{11} {n\over \epsilon} \cdot  \log \log {n\over\epsilon}}, n^{ {\log^3 {n\over\epsilon} \cdot  \log {1\over\epsilon} \over {\epsilon}^{8}}} \right\}$},  and outputs a price vector whose revenue is at least a $(1-\epsilon)$-fraction of the optimal revenue, where $n$ is the number of items.
\end{theorem}

\notshow{\yangnote{In fact, Theorem~\ref{thm:general algorithm} easily implies
Theorem~\ref{thm:additive ptas}. When the values are distributed in $[0,1]$, by ignoring the values that are smaller than $\epsilon$, we face a new pricing problem whose values are distributed in $[\epsilon, 1]$. It is not hard to see that for any price vector, the revenue achieved under these two distributions will differ for at most $\epsilon$. If we use the general algorithm to solve the new pricing problem, the solution will be an additive $O(\epsilon)$ approximation for the original problem.(Yang: Reviewer 1 wants this paragraph. Shall we add a graph to show that we prove our theorems by reducing the problems to the general problem.)} }

\paragraph{Discussion of Algorithmic Results.} {Prior to our work, there were no (near-)optimal algorithms known for multi-dimensional auction problems without special structure.} In particular, only constant factor approximation algorithms were known for the item pricing problems addressed by Theorems~\ref{thm:additive ptas} through~\ref{thm:quasi ptas regular}. (For an extensive discussion of related work, we refer the reader to Section~\ref{sec:related}.) Our results are the first to obtain near-optimal solutions for these problems in polynomial time. We view the main contribution of our results not to be the practicality of our algorithms, but establishing that there is no lingering constant inapproximability results for item pricing. In particular, our results show that, for any desired accuracy $\epsilon>0$, there are polynomial-time algorithms that compute $\epsilon$-optimal solutions. Complemented with the {\tt NP}-hardness result for the item pricing problem discussed in Section~\ref{sec:related}, what is left open by our work is obtaining faster near-optimal algorithms.

 \paragraph{Structural Results.} Our algorithms are obtained by studying the distribution of the optimal revenue, as a function of the buyer's values (which are random) and the optimal price vector (which is unknown), as overviewed in Section~\ref{sec:techniques}. As a byproduct of our techniques, we deduce the following structural properties of optimal solutions, whose proofs are given in Appendix~\ref{sec:proofs of structural}. {Theorem~\ref{thm:single price constant factor} states that, when the values are independently distributed according to monotone hazard rate distributions,  then pricing all items at the same price guarantees a constant fraction of the optimal revenue. Theorem~\ref{thm:constant prices suffice} generalizes this to showing that only the desired approximation~$\epsilon$ dictates the number of distinct prices that are necessary to achieve a $(1-\epsilon)$-fraction of the optimal revenue, and {not the number of items or the size of the support of the distributions}, as long as they are monotone hazard rate. Theorem~\ref{thm:logn prices suffice} generalizes this result to a mild dependence on $n$ for regular distributions. 

\begin{theorem}[{\bf Structural 1 (MHR):} Constant Factor Approximation from a Single Price] \label{thm:single price constant factor} If the buyer's values for the items are independently distributed according to MHR distributions, there exists a price $p$ such that pricing all items at $p$ guarantees a constant fraction of the optimal revenue. Price $p$ can be computed efficiently from the value distributions.
\end{theorem}}

\begin{theorem}[{\bf Structural 2 (MHR):}  A Constant Number of Distinct Prices Suffice for Near-Optimal Revenue] \label{thm:constant prices suffice} There exists a quasi-quadratic\footnote{A function $g: \mathbb{R}_+ \longrightarrow \mathbb{R}_+$ is {\em quasi-quadratic} iff it satisfies $g(x)=O(x^2 \log^c x)$, for some absolute constant $c>0$. For the meaning of the $O(\cdot)$ notation please refer to Section~\ref{sec:prelim}.} function $g(\cdot)$ such that, for all $\epsilon>0$ and all $n>0$, $g(1/\epsilon)$ distinct prices suffice to achieve a $(1-\epsilon)$-fraction of the optimal revenue, when the buyer's values for the $n$ items are independently distributed according to MHR distributions. These distinct prices can be computed efficiently from the value distributions.
\end{theorem}
\begin{theorem}[{\bf Structural 3 (Regular):}  A Polylogarithmic Number of Distinct Prices Suffice for Near-Optimal Revenue] \label{thm:logn prices suffice} There exists a polynomial function $g(\cdot)$ such that, for all $\epsilon>0$ and $n>0$, $g(1/\epsilon \cdot \log n)$ distinct prices suffice to achieve a $(1-\epsilon)$-fraction of the optimal revenue, when the buyer's values for the $n$ items are independently distributed according to regular distributions. These  prices can be computed efficiently from the value distributions.
\end{theorem}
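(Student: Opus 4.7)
The plan is to mimic the strategy of Theorem~\ref{thm:constant prices suffice} for MHR distributions, but with a wider (yet still polylogarithmic) geometric grid that accommodates the heavier tails of regular distributions. I envision three steps: localize the ``useful'' prices to a range whose multiplicative width is polynomial in $n$ and $1/\epsilon$; overlay a geometric grid with ratio $(1+\Theta(\epsilon))$; and prove a rounding lemma showing that rounding any price vector to this grid costs at most a $(1+O(\epsilon))$ factor of revenue.

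For localization, let $p_i^\dagger$ be the Myerson reserve (monopoly price) of $\mathcal{F}_i$ and set $p_0 = \max_i p_i^\dagger$. Using the extreme-value theorems for regular distributions developed earlier in the paper, together with the standard tail bound $p \cdot \Pr[v_i \geq p] \leq p_i^\dagger$ that follows from regularity, the contribution to the optimal revenue of any item priced above $p_{\max} := (n/\epsilon)^{O(1)} \cdot p_0$ is at most $(\epsilon/n) \cdot \mathrm{OPT}$, and the contribution of any item priced below $p_{\min} := p_0 \cdot (\epsilon/n)^{O(1)}$ is likewise at most $(\epsilon/n) \cdot \mathrm{OPT}$ (either the price itself is too small, or the item is outbid with overwhelming probability by items priced near $p_0$). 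Hence, at a $(1+\epsilon)$-factor loss, we may assume every coordinate of $P^*$ lies in $[p_{\min}, p_{\max}]$ with $p_{\max}/p_{\min} = \operatorname{poly}(n, 1/\epsilon)$. A geometric grid $\mathcal{G} \subset [p_{\min}, p_{\max}]$ with ratio $(1+\epsilon/C)$ then has cardinality
\[ |\mathcal{G}| = O\!\left(\frac{\log(p_{\max}/p_{\min})}{\epsilon}\right) = \operatorname{poly}\!\left(\tfrac{\log n}{\epsilon}\right), \]
matching the claimed bound $g(\tfrac{1}{\epsilon}\log n)$.

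The technical crux is the rounding lemma: for any $P \in [p_{\min}, p_{\max}]^n$, rounding each coordinate down to the nearest grid point $\tilde p_i \in [p_i/(1+\epsilon/C),\, p_i]$ should satisfy $\mathcal{R}_{\tilde P} \geq \mathcal{R}_P/(1+O(\epsilon))$. My plan is a pointwise argument in the value vector $v$: if at prices $P$ the buyer picks item $i$, then at prices $\tilde P$ her utility on every coordinate has only increased, so she picks some item $j$ with non-negative utility; combining the two choice inequalities $v_i - p_i \geq v_j - p_j$ and $v_j - \tilde p_j \geq v_i - \tilde p_i$ with $\tilde p_k \geq p_k/(1+\epsilon/C)$ for every $k$ should yield $\tilde p_j \geq p_i/(1+O(\epsilon))$, which upon integrating over $v$ gives the desired factor on expected revenue. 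The main obstacle is precisely this step, because different coordinates are perturbed by different amounts, so the buyer's arg-max can in principle jump to an item of much smaller nominal price; the deterministic inequalities above must be sharp enough to preclude a large loss. If the direct analysis turns out to be too loose, a fallback is randomized rounding --- round each $p_i$ independently to one of its two enclosing grid points with probabilities calibrated so the expected effective price equals $p_i$ --- and then derandomize by averaging. Once the rounding lemma is in hand the theorem is immediate, and the prices in $\mathcal{G}$ can be computed from the $\mathcal{F}_i$'s by estimating the reserves $p_i^\dagger$.
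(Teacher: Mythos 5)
Your plan (localize prices to a window of polynomial multiplicative width, overlay a $(1+\Theta(\epsilon))$ geometric grid, round) is at the right level of abstraction and does match the paper's: Theorem~\ref{thm:regreduction} reduces to value/price ranges of ratio $\mathrm{poly}(n,1/\epsilon)$, and Lemma~\ref{cor:discreteprice} then lays down exactly such a geometric grid. Your worry about the rounding lemma is also not a real obstacle: because the grid is geometric, rounding every coordinate down perturbs all prices by a common $[1-\delta,1]$ multiplicative factor, and the pointwise choice-inequality argument you sketch is precisely Nisan's Lemma~\ref{lem:nisan-price}; no randomized-rounding fallback is needed.

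The genuine gap is in the localization step, and it is not cosmetic. Your claim that ``the contribution to the optimal revenue of any item priced above $p_{\max}=(n/\epsilon)^{O(1)}p_0$ is at most $(\epsilon/n)\cdot\mathrm{OPT}$'' is false for regular distributions, and the heavy tails that make it false are precisely what separate regular from MHR. Consider $n$ i.i.d.\ items with the equal-revenue distribution $F(v)=1-1/v$ on $[1,\infty)$: every price is a monopoly price, so $p_0=1$, and the bound $p\cdot\Pr[v_i\ge p]\le p_i^\dagger$ is \emph{tight} at every $p$. The optimal pricing sets all $n$ prices to some enormous common value $p$, giving revenue $p\bigl(1-(1-1/p)^n\bigr)\to n$ as $p\to\infty$; at $p=n^{100}$ each item still contributes $\approx 1=p_0\approx\mathrm{OPT}/n$, which is $\Theta(1/n)\cdot\mathrm{OPT}$ and does not shrink with $p$. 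So raising the price cap does not make high-priced items negligible, and your per-item tail argument cannot certify the truncation. What the paper actually proves --- and what you would have to prove --- is a \emph{coordinated re-pricing} statement: one may lower all high prices to a single moderate level and \emph{preserve} most of their revenue. Lemma~\ref{lem:regrestrictprice} sets every $p_i>2n^2\alpha/\epsilon^2$ to the same value $2(n^2/\epsilon-1)\alpha/\epsilon$; then, whenever the maximum value over those items exceeds $2n^2\alpha/\epsilon^2$, the winning gap among them is at least $2\alpha/\epsilon$, which beats the rest of the market with probability $1-\epsilon/n^2$. The bridge back to the original revenue is the homogenization inequality of Theorem~\ref{thm:regextremevalue}, which bounds the untruncated term $\sum_i t_i\Pr[X_{a_i}\ge t_i]$ by $(t-2\alpha/\epsilon)\Pr[\max_i X_{a_i}\ge t]$ plus an $O(\epsilon/n)$ fraction of revenue --- that is, by what the common moderate price recovers. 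This step rests on the concavity of the revenue curve (Lemma~\ref{lem:regconcavity} and Lemma~\ref{lem:reglittlegain}), not on the Myerson-reserve tail bound, and it is the essential ingredient missing from your proposal.
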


Finally, it seems intuitive that, when the value distributions are not widely different, a single price might suffice for extracting a $(1-\epsilon)$-fraction of the optimal revenue, as long as there is a sufficient number of items for sale. We show such a result for the case where the buyer's values are i.i.d. according to a MHR distribution. See Appendix~\ref{sec:MHR iid} for the proof of this theorem.

\begin{theorem}[{\bf Structural 4 (i.i.d. MHR):} A Single Price Suffices for Near-Optimal Revenue] \label{thm:structural theorem 3}There is a function $g(\cdot)$ such that, for all $\epsilon>0$, if the number of items is larger than $g(1/\epsilon)$ then pricing all the items at the same price obtains a $(1-\epsilon)$-fraction of the optimal revenue, if the buyer's values are i.i.d. according to a MHR distribution.
\end{theorem} 

 \paragraph{Extreme Value Theorems.} Establishing the above structural properties relies on understanding the tails of MHR and regular distributions. For this purpose, we develop {\em extreme value theorems} for these classes of distributions. {We state our extreme value theorems informally below, referring the reader to Theorems~\ref{thm:extreme MHR} and~\ref{thm:regextremevalue} (in Sections~\ref{sec:truncate} and~\ref{sec:regular} respectively) for formal statements.}
{ 
 \begin{inftheorem}\label{infthm:extMHR}[Extreme Values of MHR Distributions]
Let $X_1,\ldots,X_n$ be a collection of independent random variables whose distributions are MHR, and let $Z=\max_i X_i$. Then, for all $\epsilon$ sufficiently small, at least a $(1-\epsilon)$-fraction of $\mathbb{E}[Z]$ is contributed to by the event $Z \le O(\log_2{1 \over \epsilon}) \cdot \mathbb{E}[Z].$
\end{inftheorem}

 \begin{inftheorem}\label{infthm:extREG}[Extreme Values of Regular Distributions]
Let $X_1,\ldots,X_n$ be a collection of independent random variables whose distributions are regular, and let $Z=\max_i X_i$. Then the {tail} of $Z$ is eventually not fatter than the {tail} of the equal revenue distribution.\footnote{The {\em equal revenue distribution} is supported on $[1,+\infty]$ and has cumulative density function $F(x)=1-{1 \over x}$. Notice that, if a buyer's value for a single item is distributed according to this distribution, the buyer's expected value for the item is $+\infty$. However, if the item is priced at any price $x$, the expected revenue is $1$, hence the name ``equal revenue.'' {The equal revenue distribution is itself a regular distribution. So our theorem says that the fattest the tail of the maximum of $n$ regular distributions can eventually be is that of a regular distribution.}}
\end{inftheorem}}
 
{Bounding the size of the tail of the {\em maximum} of $n$ independent random variables, which are MHR or regular respectively, is instrumental} in establishing the following truncation property: {restricting all item prices into an interval of the form $[\alpha, {\rm poly}(1/\epsilon) \alpha]$ in the MHR case, and $[\alpha, {\rm poly}(n,1/\epsilon)\alpha]$ in the regular case, for some $\alpha$ that depends on the value distributions, only loses an $\epsilon$-fraction of the optimal revenue.}~\notshow{\footnote{\yangnote{Notice that a straightforward concentration bound will only show the probability for the maximum lying outside the interval is small. However, that is not sufficient to show the contribution to the revenue from these cases is small.}}} This is quite remarkable, especially when the value distributions are non-identical or have large tails. How is it possible to restrict the prices into a bounded interval, when the underlying value distributions may concentrate on different supports, or even worse when they do not exhibit good concentration at all as when they are power law distributions?~{\footnote{{A {\em power law} distribution is a distribution whose probability density function $f(x)\propto L(x)x^{-\alpha}$ where $\alpha>1$ and $L(\cdot)$ is a slowly varying function, that is, for any $t>0$, $\lim_{x\rightarrow \infty} \frac{L(tx)}{L(x)}=1$. It usually has large or even unbounded variance. Many power law distributions are also regular, for example when $L(x)$ equals some constant $c$.}}} To establish the  truncation properties claimed above, we follow a different approach depending on whether the underlying distributions are MHR or regular. In the MHR case, we argue (using Theorem~\ref{infthm:extMHR}) that even if we could extract full surplus in the event that $Z\geq \Omega(\log_2{1 \over \epsilon}) \cdot \mathbb{E}[Z]$, the revenue would only increase by a tiny factor. Thus, to obtain nearly-optimal revenue, it suffices to only consider item prices in a bounded range of the form $[\alpha, {\rm poly}(1/\epsilon) \alpha]$.  When the distributions are regular, this approach fails, simply because the expectation $\mathbb{E}[Z]$ could be infinite. We bypass this issue by arguing (using Theorem~\ref{infthm:extREG}) that the tail of $Z$ eventually becomes no heavier than the tail of the equal revenue distribution. Intuitively, this means that varying the extremely high item prices barely affects the revenue. Formally, we prove that, whenever some item price is set higher than some large enough threshold, then bringing it down to the threshold has little effect on revenue.

{Besides enabling the aforementioned structural results for our problem, we expect that our extreme value theorems will find applications in future work, and indeed they have already been used in followup research. In~\cite{DaskalakisW12,CaiDW12}, our theorem is used to convert nearly-optimal multi-item multi-bidder mechanisms for distributions with bounded support to nearly-optimal mechanisms for MHR distributions. For the same setting, \cite{CaiH13} use our theorem to show that relatively simple auctions can extract near-optimal revenue when the bidders are identical, by showing that the welfare is highly-concentrated.} We also note that extreme value theorems {have} been obtained in Statistics for large classes of distributions~\cite{HaanF06}, and indeed such theorems have been applied to optimal mechanism design prior to our work~\cite{BlumrosenH08}. Nevertheless, the known extreme value theorems are typically asymptotic, only hold for maxima of i.i.d. random variables, and are not known to hold for all MHR or regular distributions. We can instead handle the non-i.i.d. case, maxima of a finite number of random variables, and the full spectrum of MHR and regular distributions.



\subsection{Algorithmic Ideas: Covers of Revenue Distributions} \label{sec:techniques}

We overview our approach for Theorem~\ref{thm:additive ptas}. A natural strategy for reducing the search space for an approximately optimal price vector is to discretize the set of possible prices into a finite set, whose size scales mildly with the number of items, $n$, and the approximation accuracy,~$1/\epsilon$. Of course, even with discretization the number of possible price vectors is  exponential in the number of items, and it is not clear how to search this set efficiently. A natural idea to shortcut the search further is to cluster the value distributions into a small number of buckets, containing distributions with similar statistical properties, and proceed to treat all items in a bucket as essentially identical. However, the expected revenue is not sufficiently smooth for us to perform such bucketing. We do obtain a delicate discretization of the supports of the value distributions (Corollary~\ref{cor:additive to discrete}), but cannot discretize the probabilities used by these distributions into a coarse-enough accuracy to allow for polynomial-time solvability of the problem.

Our main algorithmic idea is to shift the focus of attention from the space of {\em value distributions}, which is inherently exponential in the number of items, to the space of all possible {\em revenue distributions}, which are {\em single-dimensional} distributions. The revenue from a given price vector can be viewed as a random variable that depends on the (random) values of the items. So, there is still an exponential number of possible revenue distributions, corresponding to all possible price vectors. Nevertheless, we can exploit the single-dimensional nature of these distributions to construct a polynomial-size $\delta$-cover of the set of all possible revenue distributions under the total variation distance between distributions. That is, for every possible revenue distribution, there exists a distribution in our cover that is within $\delta$ in total variation distance from it.

Our cover is implicit, i.e. we do not provide a closed-form description for it. We show instead that it can be constructed efficiently using dynamic programming. Our algorithm iteratively considers prefixes of the items and, for each prefix $1\ldots i$, constructs a cover of all possible revenue distributions from only pricing items $1, \ldots, i$. For the next iteration, we show that the cover for items $1,\ldots,i+1$ can be easily computed from the cover for items $1,\ldots,i$ and the distribution of~$v_{i+1}$. In the end of our iterations we obtain a polynomial-size $\delta$-cover of all possible revenue distributions, and we argue that only a $\delta$-fraction of revenue is lost if we replace the optimal revenue distribution with its closest one in our cover. And, because the cover has polynomial size, we can exhaustively try every distribution in the cover and its associated price vector to pick the one with the highest expected revenue. A more detailed description of our algorithm is given in Section~\ref{sec: cover view of the problem}, and complete details are provided in Section~\ref{sec:ptas for discretized problem}. Theorem~\ref{thm:additive ptas} follows then easily in Section~\ref{sec:additive}.

Theorem~\ref{thm:general algorithm} follows similarly, except we employ a stronger discretization (Theorem~\ref{thm:discretization}) before using dynamic programming to obtain a cover of all possible revenue distributions. Finally, our algorithms for MHR and regular distributions (Theorems~\ref{thm:ptas mhr} and \ref{thm:quasi ptas regular} respectively) are corollaries of Theorem~\ref{thm:general algorithm}, except they require some extra work for restricting the value distributions into a bounded range. This is achieved in Sections~\ref{sec:truncate} (for MHR distributions) and~\ref{sec:regular} (for regular distributions) using our extreme value theorems (Theorems~\ref{thm:extreme MHR} and~\ref{thm:regextremevalue}). The detailed proofs of Theorems~\ref{thm:ptas mhr} and \ref{thm:quasi ptas regular} are provided in Appendix~\ref{sec:overall ptas}.


\notshow{\yangnote{Yang: Remove this theorem?

An interesting byproduct of our  techniques is that any constant-factor approximation to the optimal revenue can be bootstrapped to achieve an arbitrarily close approximation to it in the case of MHR and regular value distributions. This result (whose proof is given in Section~\ref{sec:overall ptas}) follows from the structure of the proof of our extreme value theorems, which can be initialized with any constant factor approximation to the optimal revenue. 

\yangnote{Yang: Remove this theorem?}
\begin{theorem}[Constant Factor to Near-Optimal Transformation] \label{thm:constant factor to PTAS}
Given a constant-factor approximation to the optimal revenue of an instance of the pricing problem where the values are either MHR or regular, the algorithms of Theorems~\ref{thm:ptas mhr} and~\ref{thm:quasi ptas regular} can be sped up.
\end{theorem}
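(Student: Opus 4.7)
The plan is to use the constant-factor approximation $R$ of OPT to directly pin down the scale parameter $\alpha$ that appears in the extreme value theorems (Theorems~\ref{thm:extreme MHR} and~\ref{thm:regextremevalue}), thereby eliminating the enumeration over candidate scales that the generic algorithms of Theorems~\ref{thm:constant prices suffice} and~\ref{thm:logn prices suffice} must perform. Knowing OPT up to a constant factor should localize $\alpha$ to within a $\mathrm{poly}(1/\epsilon)$-window (MHR) or a $\mathrm{poly}(n,1/\epsilon)$-window (regular), instead of a window as large as the ambient range of the value distributions.

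First, I would establish that OPT and $\alpha$ are of the same order up to a factor polynomial in $1/\epsilon$ (MHR) or in $n$ and $1/\epsilon$ (regular). The upper bound $\mathrm{OPT} \le \mathrm{poly}(\cdot)\cdot \alpha$ is immediate from the truncation property itself: after truncation every sensible price lies in $[\alpha,\mathrm{poly}(\cdot)\alpha]$, so no price vector can exceed revenue $\mathrm{poly}(\cdot)\cdot\alpha$. For the matching lower bound $\mathrm{OPT} \ge \Omega(\alpha)$, I would exhibit an explicit single-price strategy that posts a price close to $\alpha$ on every item; by construction of $\alpha$, the maximum value exceeds $\alpha$ with at least constant probability, so this strategy yields $\Omega(\alpha)$ revenue. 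Combining the two bounds with the hypothesis that $R$ is within a constant factor of OPT localizes $\alpha$ to a set of $O(\log(1/\epsilon))$ (resp.\ $O(\log(n/\epsilon))$) geometric candidates around a computable function of $R$.

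For each candidate $\alpha$, I would then truncate the value distributions to $[\alpha,\mathrm{poly}(1/\epsilon)\alpha]$ (resp.\ $[\alpha,\mathrm{poly}(n,1/\epsilon)\alpha]$), apply the horizontal discretization of Lemma~\ref{lem:horizontal-discretization} to extract a candidate price set of size $g(1/\epsilon)$ (resp.\ $g((\log n)/\epsilon)$), and run the revenue-distribution cover dynamic program of Section~\ref{sec:ptas for discretized problem} on this set. Returning the best price vector found across the $O(\log(1/\epsilon))$ candidates yields a $(1+\epsilon)$-approximation, with a total running time in which the scale-enumeration factor has shrunk from depending on the ambient range of the distributions to merely polylogarithmic in $1/\epsilon$ (or $n/\epsilon$), which is the claimed speed-up.

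The main obstacle is the first step: matching the precise definition of $\alpha$ used in the proofs of the extreme value theorems with a single-price lower bound on OPT that is sharp enough to squeeze OPT and $\alpha$ into the same polynomial window. If $\alpha$ is defined via a quantile of $\max_i v_i$, the single-price lower bound is essentially immediate from the definition; but if $\alpha$ is defined through a conditional expectation or another auxiliary statistic, the matching price has to be chosen more carefully, and one needs to invoke the MHR or regular tail estimates from Theorems~\ref{thm:extreme MHR} and~\ref{thm:regextremevalue} in reverse to convert the $O(1)$-approximation into an $O(1)$-approximation of the scale.
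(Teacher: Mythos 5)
You identify the right ingredient --- use the constant-factor approximation as a proxy for the anchoring point of the extreme-value theorem and then truncate --- but the route you propose is more roundabout than the paper's and misattributes the source of the saving. The paper's generic algorithms do not perform any enumeration over candidate scales for you to shrink: $\beta$ is computed exactly by the tournament of Algorithm~1, and the speed-up of Theorem~\ref{thm:constant factor to PTAS} is simply that a constant-factor approximation $\beta'$ with $\beta' \le OPT \le a\beta'$ lets you skip that tournament. The key observation is that the conclusion of Theorem~\ref{thm:extreme MHR} is monotone in the anchor: Lemma~\ref{lem:constantopt} gives $OPT \ge c_1\beta$, hence $c\beta' \ge \beta$ for $c = 2a/(1 - 1/\sqrt{e})$, and so
\[
Con\!\left[\max_i\{v_i\} \ge 2\log\tfrac{1}{\epsilon}\,(c\beta')\right] \le Con\!\left[\max_i\{v_i\} \ge 2\log\tfrac{1}{\epsilon}\,\beta\right] \le 36\,\epsilon\log\tfrac{1}{\epsilon}\,\beta \le 36\,\epsilon\log\tfrac{1}{\epsilon}\,(c\beta').
\]
Thus $c\beta'$ substitutes \emph{directly} for $\beta$ in the truncation interval $[\epsilon\beta', 2c\log(1/\epsilon)\beta']$; since $\beta'\le OPT$, the $O(\epsilon\,c\beta')$ truncation loss is an $O(\epsilon)$ fraction of $OPT$, and a single pass of the rest of the pipeline finishes the job.

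By contrast, your plan enumerates $O(\log(1/\epsilon))$ geometric candidates for the anchor and reruns the DP for each, which multiplies rather than reduces the dominant cost, and your final paragraph worries about a tail-estimate-in-reverse obstacle that the monotone-substitution argument above makes vacuous: the anchor never needs to be matched precisely, only overestimated by a constant factor while remaining $O(OPT)$, and both properties are handed to you for free (the first via $c\beta' \ge \beta$, the second via $\beta' \le OPT$). Your underlying scale estimates ($OPT = \Omega(\beta)$ for MHR, $OPT = \Omega(\alpha/n^3)$ for regular) are the correct ones to invoke, so your enumeration scheme would ultimately succeed, but it is not what the paper does and it is not where the claimed speed-up lives.
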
}}

\subsection{Related and Future Work} \label{sec:related}

The focus of this paper is the multidimensional item pricing problem for a unit-demand buyer whose values for the items are independent. This problem is related to the celebrated multidimensional mechanism design problem, but it is restricted in two ways. First, there is a single bidder who is unit-demand. Second, we are interested in coming close to the revenue of the optimal {\em deterministic}---i.e. item pricing---mechanism and not the optimal unrestricted  mechanism, which may also price lotteries over items. While it is unclear whether the restriction to deterministic mechanisms should make the problem easier or harder computationally, the restriction to a single unit-demand bidder should make the problem easier compared to having many bidders with arbitrary valuations.

Despite the apparent simplicity of the item pricing problem,  (near-)optimal polynomial-time algorithms for it were not known prior to our work. Chawla et al.~\cite{ChawlaHK07} provide a  $3$-approximation algorithm, computing a price vector whose revenue is at least a third of the optimal revenue. Their technique is quite elegant, connecting the item pricing problem to a related, single-dimensional mechanism design problem, which can be analyzed using Myerson's result~\cite{Myerson81}. Using the same approach, the approximation factor is improved to $2$ in~\cite{ChawlaHMS10}, and the result is generalized to the multi-bidder setting, albeit with a worse approximation factor. Different work~\cite{BhattacharyaGGM10,Alaei11} obtains polynomial-time constant factor approximations for additive bidders, using convex programming relaxations of the problem. 

However, all these approaches are limited to constant factor approximations, as ultimately the attained revenue is compared to the optimal revenue in a related single-dimensional setting~\cite{ChawlaHK07,ChawlaHMS10}, or a convex programming relaxation of the problem~\cite{BhattacharyaGGM10,Alaei11}. In particular, the limitation of these approaches comes from avoiding a direct comparison of the attained revenue to the optimal revenue in the actual problem, comparing it instead to the optimal revenue in a related problem. Our work provides instead near-optimal algorithms, using a direct comparison to the real optimum via covers of revenue distributions. 

Our work leaves several directions open for exploration and some have already been studied following the announcement of our results~\cite{CaiD11}. We classify them into three categories discussed below.
\begin{itemize}
\item {\em Unit-demand Bidders:} Can our near-optimal algorithms be improved to be exactly optimal? Recent work has shown that the answer is no, namely that there are no exactly optimal polynomial-time  algorithms for product value distributions, unless ${\tt P} = {\tt NP}$~\cite{ChenDPSM14}. Still there is room for improving the dependence of our running times on the approximation parameter $\epsilon$. E.g., is there an algorithm that runs in time polynomial in $n$ and $1/\epsilon$ when the item values are bounded in~$[0,1]$?

And how about correlated distributions over item values? Here, it had already been known that computing an optimal price vector is highly inapproximable in polynomial-time~\cite{BriestK07}. So there cannot even be a polynomial-time constant factor approximation in this case.

Beyond item pricing, it is important to understand the complexity of optimal randomized mechanisms, which may increase revenue by also pricing lotteries over items~\cite{Thanassoulis04,BriestCKW10}. For product distributions, Chawla et al.~\cite{ChawlaMS10} {show that randomization does not increase revenue by more than a factor of 4, thus extending} the constant-factor approximation algorithms of~\cite{ChawlaHK07,ChawlaHMS10} to the randomized multi-bidder setting, except with worse approximation guarantees. Is there a polynomial-time optimal mechanism for this setting? No computational lower bound is known at the time of writing of this paper. 

For correlated distributions over item values, Cai et al.~\cite{CaiDW12b} obtain near-optimal, randomized mechanisms for multi-bidder multi-item settings with unit-demand bidders. For any desired accuracy $\epsilon>0$, they compute a mechanism whose revenue is within an additive error of $\epsilon$ from optimal in time polynomial in $1/\epsilon$ and the size of the support of each bidder's distribution over valuations, {when these distributions are discrete. (When they are continuous, they are handled via fine enough discretization.)} This algorithm is clearly also applicable  when every bidder's values for the items are independent (i.e. the setting discussed in the previous paragraph). However, the dependence of the running time on the support of the product distribution may be unreasonable computationally. Indeed, a {discrete} product distribution can be described by specifying all of its marginals, with description complexity logarithmic in the size of its support. 
%

\item {\em Additive Bidders:} Can our algorithms be extended to additive bidders? Here, an optimal mechanism may increase revenue by pricing bundles of items~\cite{ManelliV06}, or (if randomization is allowed) lotteries over bundles of items. Exploiting our extreme value theorems for MHR distributions, Cai and Huang~\cite{CaiH13} provide near-optimal polynomial-time mechanisms for multiple i.i.d. bidders, whose values for the items are independently distributed according to MHR distributions. Moreover, Daskalakis et al.~\cite{DaskalakisDT14} show that this result cannot be made exact for general product distributions. They show that, subject to widely held complexity theoretic beliefs---in particular that ${\tt ZPP} \not\supseteq {\tt P}^{\#{\tt P}}$,\footnote{${\tt ZPP} \supseteq {\tt P}^{\#{\tt P}}$ would imply that there are randomized polynomial-time algorithms for ${\tt NP}$-complete problems, which is widely believed to be unlikely.} computing and implementing an exactly optimal mechanism cannot be done computationally efficiently. Indeed, this is true even in the ostensibly simple setting where there is a single additive bidder whose values for the items are independently distributed on two rational numbers with rational probabilities. 

For correlated distributions, Cai et al.~\cite{CaiDW12} obtain (exactly) optimal mechanisms for multi-bidder multi-item settings with additive bidders, in time polynomial in the size of the support of each bidder's distribution over valuations.

On a different vein, Daskalakis et al.~\cite{DaskalakisDT13} study the structure (rather than the computational complexity) of optimal mechanisms, following earlier work on the topic by Economists, e.g.~\cite{Rochet85,Armstrong00,ManelliV06,ManelliV07,Pavlov11}. They provide a duality framework  based on Monge-Kantorovich duality for characterizing the structure of the optimal mechanism of selling multiple items to a single additive bidder.


\item {\em General Settings:}\notshow{\yangnote{\footnote{\yangnote{Yang says: I think going directly to general objective might be too big a jump for the readers. Can we talk about revenue with general valuations first, then go to general objectives? Also, I think we should specifically point out that for revenue the algorithmic problem is only maximizing welfare. I am worried that the readers will think revenue plus welfare is not a well defined objective (e.g. under what constraints?)...}}}} It is important to understand the computational complexity of mechanism design in general settings: multiple bidders, general valuations {(beyond unit-demand and additive), general constraints on what allocations of items to bidders are feasible,} and general objectives, potentially going beyond the familiar objectives of revenue and welfare.\footnote{{A general objective $O$ takes as input the valuations of the bidders $\vec{t}$ and a randomized allocation and price vector $(A,p)$ and outputs a real number $O(\vec{t},(A,p))$.}} In recent work, Cai et al.~\cite{CaiDW13b} provide a {computational black-box reduction from {\em mechanism design} for maximizing an arbitrary concave objective $O$\footnote{{An objective function $O(\vec{t},(A,p))$ is called {\em concave} iff, for all bidder valuations $\vec{t}$, and all $(A_1,p_1)$ and $(A_2,p_2)$, it holds that $O(\vec{t}, {1 \over 2} (A_1,p_1) + {1 \over 2}  (A_2, p_2)) \ge {1 \over 2}  O(\vec{t}, (A_1, P_1)) + {1 \over 2}  O(\vec{t}, (A_2, P_2))$, where ${1 \over 2}  (A_1,p_1) + {1 \over 2}  (A_2, p_2)$ denotes uniformly randomizing between $(A_1,p_1)$ and $(A_2,p_2)$. Clearly, revenue and welfare satisfy this condition with equality, but several other objectives are concave, such as the max-min fairness objective considered in~\cite{CaiDW13b}.}} under arbitrary allocation constraints and an arbitrary family of bidder valuations (e.g. submodular, supermodular, etc.) to {\em algorithm design} for that same objective $O$, {modified by an additive virtual welfare and virtual revenue term}, and under the same allocation constraints and family of allowed valuations.} Roughly speaking, they show that, whenever the algorithmic problem is polynomial-time solvable (exactly or approximately), the mechanism design problem also becomes solvable (exactly or approximately) in time polynomial in the size of the support of each bidder's distribution over valuations. It is important to find applications of this reduction to settings of interest beyond optimizing fractional max-min fairness for additive bidders, which was done in~\cite{CaiDW13b}.

\end{itemize}

\section{Preliminaries} \label{sec:prelim}

\paragraph{Computational Problems.} We define three variants of the item pricing problem. {\sc AdditivePrice} and {\sc Price} are the main computational problems that we aim to solve, but {\sc RestrictedPrice} is an auxiliary one that is helpful in the analysis. For the value distributions that we consider, it can be shown that all three problems have finite optimal solutions.

\medskip \noindent\framebox{\begin{minipage}[h]{16.5cm}{\sc AdditivePrice}: {\bf Input:} A collection of mutually independent random variables $\{v_i\}_{i=1}^n$, and some $\epsilon>0$. {\bf Output:} A vector of prices $P=(p_1,\ldots,p_n)$ such that the expected revenue ${\cal R}_P$ from using $P$, defined as in Eq.~\eqref{eq:objective}, is within an additive $\epsilon$ of the optimal revenue achieved by any price vector.
\end{minipage}}

\medskip \noindent\framebox{\begin{minipage}[h]{16.5cm}{\sc Price}: {\bf Input:} A collection of mutually independent random variables $\{v_i\}_{i=1}^n$, and some $\epsilon>0$. {\bf Output:} A vector of prices $P=(p_1,\ldots,p_n)$ such that the expected revenue ${\cal R}_P$ from using $P$, defined as in Eq.~\eqref{eq:objective}, is within a $(1+\epsilon)$-factor of the optimal revenue achieved by any price vector.
\end{minipage}}

\medskip \noindent\framebox{\begin{minipage}[h]{16.5cm} {\sc RestrictedPrice}: {\bf Input:} A collection of mutually independent random variables $\{v_i\}_{i=1}^n$ supported on a common discrete set $\mathcal{S}$, and a discrete set ${\cal P} \subset \mathbb{R}_{\ge 0}$ of possible prices.\\ {\bf Output:} A vector of prices $P=(p_1,\ldots,p_n) \in {\cal P}^n$ such that the expected revenue ${\cal R}_P$ from using $P$ is optimal among all vectors in ${\cal P}^n$.
\end{minipage}}

\bigskip \noindent In Section~\ref{sec:proof outline} we describe how these computational problems are interconnected through other results in this paper to establish Theorems~\ref{thm:additive ptas} through~\ref{thm:quasi ptas regular}.

\paragraph{Computational Efficiency.} Throughout the paper we use the standard convention of identifying ``computational efficiency'' with polynomial-time computation. Namely, we will say that an algorithm is ``computationally efficient'' iff its running time is polynomial in the number of bits required to describe the input to the algorithm.

\paragraph{Reductions Between Computational Problems.} We provide several reductions between different flavors of the item pricing problem. Formally, a (polynomial-time) reduction from a computational problem $P_1$ (e.g. {\sc Price}) to a computational problem $P_2$ (e.g. {\sc RestrictedPrice}) is a pair of (polynomial-time) algorithms ${\cal A}$ and ${\cal B}$ satisfying the following properties. For all inputs $\Pi_1$ to $P_1$:
\begin{itemize}
\item ${\cal A}(\Pi_1)$ is a valid input to $P_2$;
\item if $S$ is a solution to ${\cal A}(\Pi_1)$ then ${\cal B}(S)$ is a solution to $\Pi_1$.
\end{itemize}
For example, a polynomial-time reduction from {\sc Price} to {\sc RestrictedPrice} would allow us to convert (in polynomial time) any input to {\sc Price} to a valid input to {\sc RestrictedPrice} so that, if we found a solution to the latter, we would also be able to compute (in polynomial time) a solution to the former.

\paragraph{Approximation Algorithms.} Our algorithmic results use the concept of a {\em Polynomial-Time Approximation Scheme}, or PTAS. A PTAS for a computational problem such as {\sc Price} is a collection of algorithms $({\cal A}_{\epsilon})_{\epsilon}$, indexed by the approximation parameter $\epsilon$, such that, for all $\epsilon>0$ and for any given input ${ \Pi}$ to the algorithm, Algorithm ${\cal A}_{\epsilon}$ computes an $\epsilon$-optimal solution to $\Pi$ in time $d(\Pi)^{g(1/\epsilon)}$, where $d(\Pi)$ is the number of bits required to describe problem $\Pi$ and $g$ is some increasing function of $1/\epsilon$, which does not depend on $\Pi$ or its description complexity. The algorithms in the collection are called polynomial-time because for all fixed $\epsilon$, e.g. $\epsilon=1/3$, the running time of ${\cal A}_{\epsilon}$ is polynomial in the description of the problem. A {\em quasi-polynomial-time approximation scheme}, or Quasi-PTAS is a similar concept, except that running time is $2^{g(1/\epsilon)\cdot\rm{poly}(\log d(\Pi))}$ for some function $g$ instead of $2^{g(1/\epsilon)\cdot \log {d(\Pi)}}$ as in a PTAS.

\paragraph{Distributions.} For a random variable $X$ we denote by $F_X(x)$ the cumulative distribution function of $X$, and by $f_X(x)$ its probability density function. We also let $u^X_{min} = \sup\{x|F_X(x)=0\}$ and $u^X_{max}= \inf\{x|F_X(x)=1\}$. $u^X_{max}$ may be $+\infty$, but we assume that $u^X_{min}\geq 0$, since our distributions represent value distributions. We drop the subscript/superscript of $X$, if $X$ is clear from context.  

We proceed with a precise definition of {\em Monotone Hazard Rate (MHR)} and {\em Regular} distributions, which are two commonly studied families of distributions. 
    
\begin{definition}[Monotone Hazard Rate Distribution]
   We say that a one-dimensional differentiable distribution $F$ {has} {\em Monotone Hazard Rate} if $\frac{f(x)}{1-F(x)}$ is  non-decreasing in $[u_{min},u_{max}]$. We call such $F$ a {\em Monotone Hazard Rate, or MHR, distribution.}
\end{definition}
\begin{definition}[Regular Distribution]
    A one-dimensional differentiable distribution $F$ is called {\em regular} if $x-{1-F(x)\over f(x)}$ is  non-decreasing  in $[u_{min},u_{max}]$. 
\end{definition}
\noindent It is worth noting that all MHR distributions are also regular distributions, but there are regular distributions that are not MHR. The family of MHR distributions includes such familiar distributions as the Normal, Exponential, and Uniform distributions. The family of regular distributions contains a broader range of distributions, including fat-tail distributions $f_X(x)\sim x^{-(1+\alpha)}$ for $\alpha\ge1$ (which are not MHR). 
In Appendices~\ref{appendix:MHR} and~\ref{sec:regconcavity} we establish  important properties of MHR and regular distributions. These properties are instrumental in establishing our extreme value theorems for these distributions (Theorems~\ref{thm:extreme MHR} and~\ref{thm:regextremevalue} in Sections~\ref{sec:truncate} and~\ref{sec:regular}).

To formally study the complexity of the item pricing problem, we need to pay attention to how value distributions are described as part of the input to the problem. We discuss this technical issue in Appendix~\ref{sec:model}, entertaining three types of access to a distribution. Maybe we are given an {\em explicit description} of the distribution, specifying its support and the probabilities assigned to every value in the support. Such explicit description is appropriate if the distribution is discrete and has finite support. Alternatively, we may have more limited access to the distribution. E.g., we may only have {\em sample access} to the distribution via a procedure that generates independent samples from it. Or, we may have {\em oracle access} to the cumulative distribution function via a procedure that returns its value at any queried point. We formally discuss these types of access to a distribution in Appendix~\ref{sec:model}, also defining precisely what it means for an algorithm who takes as input a distribution or outputs a distribution to be ``computationally efficient'' in each case.

\paragraph{Order Notation.} Throughout the text we use the $O(\cdot)$, $\Omega(\cdot)$ and $\Theta(\cdot)$ notation. Let $f(x)$, $g(x)$ be two positive functions defined on some infinite subset of $\mathbb{R}_+$. Then:
\begin{itemize}
\item we write $f(x) = O(g(x))$ iff there exist some positive reals $M$ and $x_0$ such that $f(x) \le M g(x)$, for all $x>x_0$;
\item we write $f(x) = \Omega(g(x))$ iff there exist some positive reals $m$ and $x_0$ such that $f(x) \ge m g(x)$, for all $x>x_0$; and
\item we write $f(x) = \Theta(g(x))$ iff $f(x) = O(g(x))$ and $f(x) = \Omega(g(x))$.
\end{itemize}

\paragraph{Other Notation.} Whenever we write ${\rm poly}(x)$ in an expression providing a bound to some quantity, we mean that there exists some positive polynomial $p(x)$ which can replace ``${\rm poly}(x)$'' so that the bound is true. Whenever we write $\log$ in some expression without specifying the base of the logarithm, any constant base that is larger than $1$ works. For some positive rational number $x$, we write $\langle x \rangle$ to denote the bit complexity of $x$, i.e. the number of bits required to specify the numerator and denominator of an irreducible fraction representing $x$.
\section{Paper Organization}\label{sec:proof outline}

We provide a roadmap to the paper and a high-level description of our approach. We first study {\scshape RestrictedPrice}. Despite its input/output restrictions, it addresses the major algorithmic challenges of the item pricing problem. Our approach to {\scshape RestrictedPrice} is to first design a dynamic programming algorithm that produces a cover of all possible revenue distributions arising from price vectors in ${\cal P}^n$, where ${\cal P}$ is the set of possible prices specified in the input to the problem. Using the cover it is then easy to obtain a near-optimal algorithm for {\scshape RestrictedPrice}, which is not necessarily polynomial-time. Section~\ref{sec: cover view of the problem} provides an intuitive description of the dynamic programming approach for producing the cover, and Section~\ref{sec:ptas for discretized problem} describes our algorithm for {\scshape RestrictedPrice} formally. This algorithm serves as the main algorithmic tool of this paper, and is at the root of the tree structure of Figure~\ref{fig:roadmap}, summarizing our results and proofs.
\begin{figure}[h!]
 
  \includegraphics[height = 5.7in]{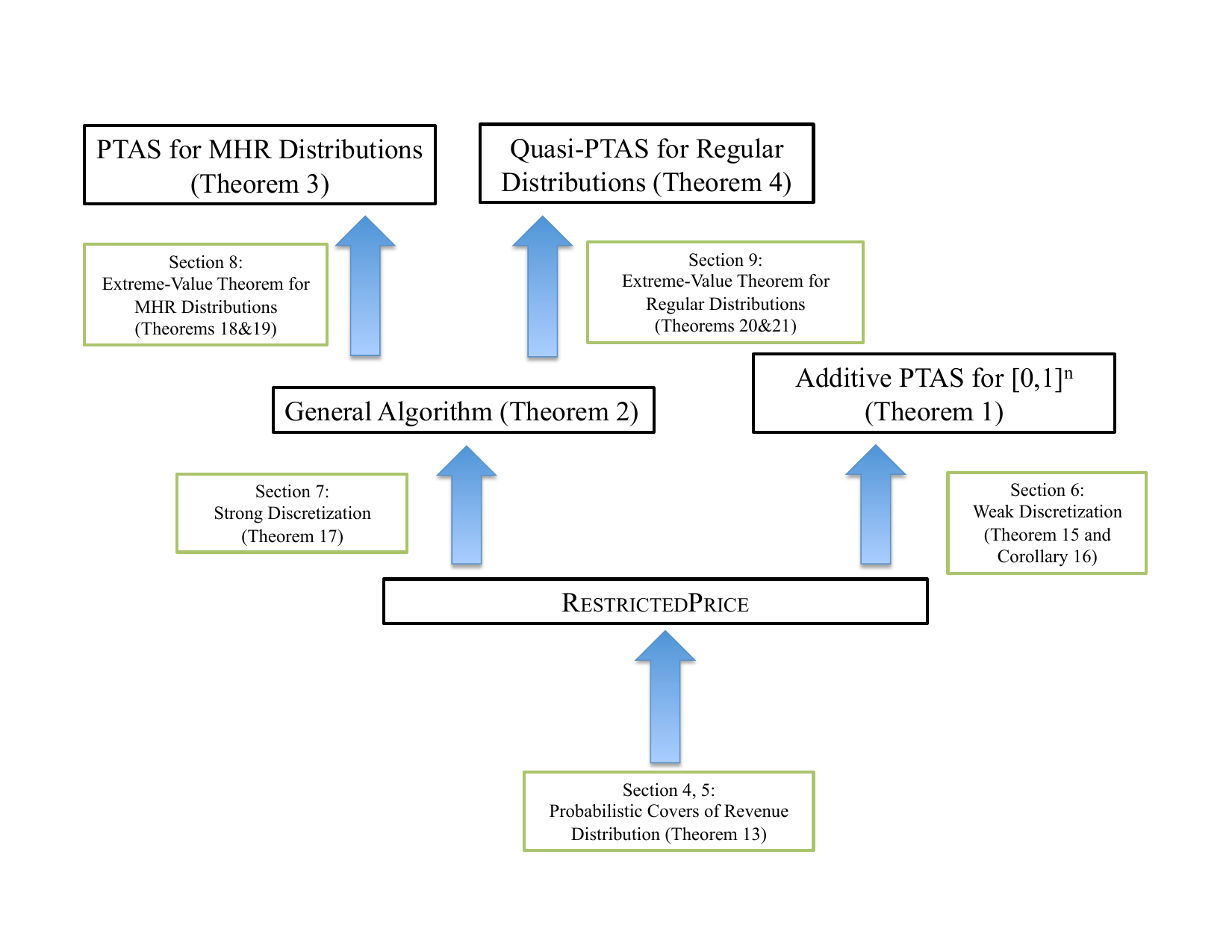}\\
   \caption{Overview of our results and the proof structure. Arrows are implications.}\label{fig:roadmap}
\end{figure}

In Section~\ref{sec:additive}, we obtain Theorem~\ref{thm:additive ptas} by reducing {\scshape AdditivePrice} for value distributions supported on $[0,1]$ to {\scshape RestrictedPrice}. The reduction is obtained by showing a discretization result, establishing that the supports of the value distributions as well as the candidate prices can be discretized without too much loss in revenue. The reduction is summarized by Corollary~\ref{cor:additive to discrete}, which together with our algorithm for {\sc RestrictedPrice} immediately shows Theorem~\ref{thm:additive ptas}. 

In Section~\ref{sec:balanced range}, we move on to multiplicative approximations, establishing Theorem~\ref{thm:general algorithm}. The approach is similar, reducing {\scshape Price} to {\scshape RestrictedPrice} by discretizing the supports of the value distributions as well as the set of available prices. However, Theorem~\ref{thm:additive discretization}, the discretization result at the heart of Corollary~\ref{cor:additive to discrete} (our reduction from Section~\ref{sec:additive}), is not strong enough for our purposes here. We establish instead a stronger discretization (Theorem~\ref{thm:discretization}) that is sufficiently powerful for our reduction.

In Sections~\ref{sec:truncate} and~\ref{sec:regular}, we establish our algorithms for MHR and regular distributions. In Section~\ref{sec:truncate}, we present an extreme value theorem for MHR distributions (Theorem~\ref{thm:extreme MHR}). This theorem enables us to obtain a polynomial-time reduction from {\sc Price} where the value distributions are MHR to {\sc Price} where the value distributions are supported on a common range of the form $[u_{\min}, r \cdot u_{\min}]$, where the multiplier $r$ is independent of the number of items $n$. Our reduction is summarized by Theorem~\ref{thm:reduction MHR to balanced}. Theorem~\ref{thm:ptas mhr} follows then as a corollary of Theorem~\ref{thm:reduction MHR to balanced} and Theorem~\ref{thm:general algorithm}. Our algorithm for regular distributions follows similarly in Section~\ref{sec:regular}. We show an extreme value theorem for regular distributions (Theorem~\ref{thm:regextremevalue}), enabling a reduction from {\sc Price} with regular distributions to {\sc Price} with distributions supported on a common range of the form $[u_{\min}, r \cdot u_{\min}]$, except that now the multiplier $r$ depends polynomially on the number  of items $n$. Our reduction is summarized in Theorem~\ref{thm:regreduction}. Theorem~\ref{thm:quasi ptas regular} follows then as a corollary of Theorem~\ref{thm:regreduction} and Theorem~\ref{thm:general algorithm}.

\paragraph{Reading the paper.} Sections~\ref{sec: cover view of the problem} through~\ref{sec:regular} are meant to provide a self-contained overview of the proofs of our algorithmic results, with the appendices containing the complete proof details. Appendix~\ref{app:roadmap} provides a roadmap to the appendices.

\section{Probabilistic Covers of Revenue Distributions} \label{sec: cover view of the problem}

\notshow{\yangnote{Let ${\cal V}:=\{v_i\}_{i}$ be an instance of {\sc AdditivePrice}, where the $v_i$'s are mutually independent random variables distributed  on $[0,1]$ according to the distributions $\{{F}_i\}_{i}$, and let ${\cal R}_{OPT}$ be the expected revenue from the optimal price vector for $\cal V$. Our goal in this section is to compute a price vector with expected revenue at least ${\cal R}_{OPT}-\epsilon$. 

Section~\ref{sec:additive} provides a {computationally} efficient reduction of this problem to the problem of approximating within additive $O(\epsilon)$ of the optimal revenue of a discretized problem, where both the values and the prices come from discrete sets whose cardinality is $O({{\log 1/\epsilon}\over{\epsilon^2}})$. 

To accommodate scenarios that will be considered in future sections, we provide a more general algorithm. We assume the values lie in an interval $[u_{min},u_{max}]$ where $r = {u_{max} \over u_{min}}$. For convenience, we denote by $\{\hat{F}_i\}_i$ the discretized distributions resulting from the reduction, by $\{\hat{v}_i\}_{i}$ a collection of mutually independent random variables distributed according to the $\hat{F}_i$'s, by $\{v^{(1)},v^{(2)},\ldots,v^{(k_1)}\}$ the (common) support of all the $\hat{F}_i$'s, and by $\{p^{(1)}, p^{(2)},\ldots, p^{(k_2)}\}$ the set of available price levels, where for {\sc RestrictedPrice}$(\mathcal{\hat{V}},\mathcal{P},O(\epsilon))$ in Corollary~\ref{cor:additive to discrete} both $k_1$ and $k_2$ are $O({{\log 1/\epsilon}\over{\epsilon^2}})$. 
}}

In this section, we discuss our algorithmic approach to {\scshape RestrictedPrice}, postponing the description of our algorithm for it to Section~\ref{sec:ptas for discretized problem}. As we have already discussed in Section~\ref{sec:proof outline}, although seemingly restricted this problem captures the main algorithmic challenges underlying  problems {\scshape Price} and {\scshape AdditivePrice}. In particular, our algorithm for {\scshape RestrictedPrice} will become a central building block in all our algorithmic results (Theorems~\ref{thm:additive ptas} through~\ref{thm:quasi ptas regular}). 

For convenience, throughout this section we will take ${F}_1,\ldots,{F}_n$ to be a collection of distributions supported on a discrete set ${\cal S}=\left\{v^{(1)},v^{(2)},\ldots,v^{(k_1)}\right\}$, and ${v}_1,\ldots,{v}_n$ to be a collection of mutually independent random variables distributed according to the ${F}_i$'s. We will then assume that the  input to {\scshape RestrictedPrice} comprises the ${v}_i$'s and a finite set of prices ${\cal P}=\{p^{(1)}, p^{(2)},\ldots, p^{(k_2)}\}$.

\smallskip The obvious algorithmic challenge in {\sc RestrictedPrice} is that, even though the set of possible prices is finite, there are still exponentially many (namely $k_2^n$) possible price vectors that we need to choose from for an optimal one. If $k_2$ were a constant and the items were i.i.d., then we could decrease the possible vectors to a polynomial number by exploiting the symmetry of the items.\footnote{A broader exposition of the role of symmetries in mechanism design can be found in~\cite{DaskalakisW12}.} Similarly, we can obtain polynomial-time algorithms for the case where there is only a constant number of possible value distributions and a constant number of possible prices. However, when all the $F_i$'s may be different, the problem looks inherently exponential, even if both $k_1$ and $k_2$ are absolute constants, e.g., even when the value distributions are supported on $2$ possible values and there are $2$ possible prices available.

Our algorithmic approach is enabled by a shift in perspective, which may be applicable to other problems with a similar structure. To illustrate the approach, let us view our problem in the graphical representation of Figure~\ref{fig:d1}. $C$ is a function that takes as input a price vector $P=(p_1,\ldots,p_n)$ and outputs the distribution ${F}_{{ R}_P}$ of the revenue of the seller under this price vector. Indeed, the revenue of the seller is a random variable ${R}_P$ that depends on the random variables $\{{v}_i\}_{i \in [n]}$. So in order to compute the distribution of the revenue, $C$ also takes as input the distributions $F_1,\ldots,F_n$. What we are aiming at maximizing is the expectation ${\cal R}_P$ of ${R}_P$.
\begin{figure}[h!]
  \centering
  \includegraphics[height = 1.4in]{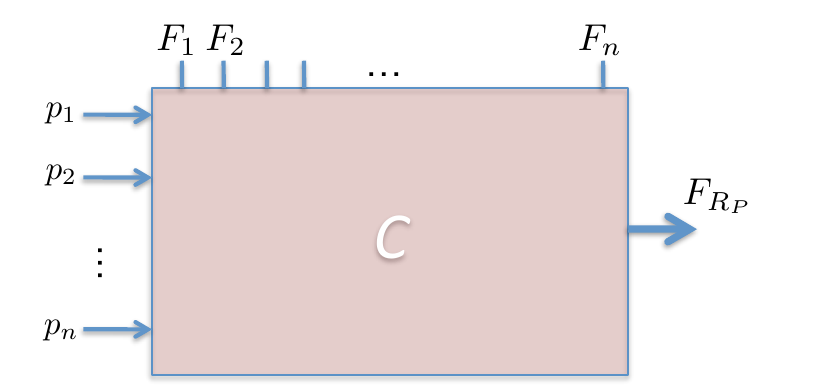}\\
   \caption{The Revenue Distribution as the output of a function. The inputs to the function are the prices and the value distributions.}\label{fig:d1}
\end{figure}
 
Given our restriction of the prices to a finite set $\{p^{(1)}, p^{(2)},\ldots, p^{(k_2)}\}$, there are $k_2^n$ possible inputs to the {function}, and at most $k_2^n$ possible revenue distributions that the {function} can output. Our main conceptual idea is the following:\\ 

\begin{minipage}[h]{15cm}{\em Instead of searching the space of possible price vectors that can be input to $C$, we search the space of possible outputs of $C$, i.e. the space of all possible revenue distributions resulting from different price vectors, for one with maximum expectation.

\smallskip Moreover, to efficiently search the space of all possible revenue distributions, we construct an appropriately small subset of it and only search the distributions in that subset.}\end{minipage}

\bigskip \noindent The subset we construct is a probabilistic cover (under some appropriate metric) of the space of all possible revenue distributions.\footnote{A {\em $\delta$-cover} of a set of distributions ${\cal F}$ with metric $d$ is a subset ${\cal F}' \subseteq {\cal F}$ such that for all $F \in {\cal F}$ there exists some $F' \in {\cal F}'$ such that $d(F,F') \le \delta$.} The properties of our cover that are crucial for our algorithmic applications are the following: (a) the cover has small cardinality, and (b) for any possible revenue distribution that the {function} may output, there exists a revenue distribution in our cover whose expectation is close. 

\paragraph{Constructing the Cover.} At a high level, we construct our cover using dynamic programming (henceforth DP for short), whose steps are interleaved with coupling arguments that prune the size of the DP table before proceeding to the next step. 

\smallskip Intuitively, our DP algorithm sweeps the items from $1$ through $n$, maintaining a cover of the revenue distributions produced by all possible price vectors on every prefix of the items. More precisely, for each prefix $1\ldots j$ of the items, our DP table keeps track of all possible feasible collections of $k_1\times k_2$ probability values, where $\Pr_{i_1,i_2}$, $i_1 \in [k_1], i_2 \in [k_2]$, denotes the probability that the item with the largest value-minus-price gap (i.e. the item that would have been sold in a sale that only sells items $1$ through $j$) has value $v^{(i_1)}$ for the buyer and is assigned price $p^{(i_2)}$ by the seller. I.e. we store in our DP table all possible (winning-value, winning-price) distributions that can arise from a price vector on every prefix of the items. The reasons we store these distributions are the following:
\begin{itemize}

\item First, if we have all possible (winning-value, winning-price) distributions for the full set of items, we can search for the one with the highest expected revenue. For every distribution we will also maintain in our DP table a price vector resulting in that distribution. So, once we have found the distribution with the optimal expected revenue, we will also find the price vector with that optimal revenue.

\item Second, we can construct the set of all possible (winning-value, winning-price) distributions for the full set of items, by considering one prefix at a time. In particular, suppose that we have all possible (winning-value, winning-price) distributions for the prefix of items $1\ldots j$. By combining every such distribution with all $k_2$ possible prices for item $j+1$, we can compute all possible (winning-value, winning-price) distributions for the prefix of items $1\ldots j+1$. That is, if we have these distributions for a prefix of items, we do not need any other information to extend the prefix by one item. For this scheme to work, observe that it is crucial to maintain the joint distribution of both the winning-value and the winning-price, rather than just the distribution of the winning-price.

%
%
%
%
%
\end{itemize}

Clearly, the dynamic programming approach that we just outlined for computing a cover of all possible revenue distributions achieves nothing in terms of reducing the number of distributions. Indeed, there could be one (winning-value, winning-price) distribution for every price vector, so that the total number of distributions that we need to store in our DP table is exponential. To control the size of our cover from exploding, we show that we can be coarse in our bookkeeping of the (winning-value, winning-price) distributions, without sacrificing much revenue. Indeed, it is here where viewing our problem in the ``upside-down'' manner illustrated in Figure~\ref{fig:d1} (i.e. targeting a cover of the output of $C$) is important. We show that we can discretize the probabilities used by the distributions stored in the DP table into multiples of some fraction $1 \over m$ without losing much revenue. In particular, after a prefix of items is processed by the algorithm, we show that we can discretize the probabilities in all distributions in the table before considering the next item. That the loss due to coarsening the probabilities is not significant follows from coupling arguments interleaved with the steps of  dynamic programming.

%
%

\medskip In the next section we make our ideas precise, obtaining our algorithm for {\sc RestrictedPrice}.

\section{The Algorithm for the Discrete Problem} \label{sec:ptas for discretized problem}

In this section, we formalize our ideas from the previous section, describing our main algorithmic result for {\sc RestrictedPrice}. We use the same notation as in Section~\ref{sec: cover view of the problem}, namely we assume that the input distributions are supported on a common set ${\cal S}$ of cardinality $k_1$ and the prices are restricted to a set ${\cal P}$ of cardinality $k_2$. We also denote by $OPT$ the optimal expected revenue for the input value distributions when the prices are restricted to ${\cal P}$.


\paragraph{The Algorithm.} As a first step we discretize the probabilities used by the input distributions. We prove a discretization lemma that provides a polynomial-time reduction from our problem into a new one, where additionally the probabilities that the value distributions assign to each point in their support~${\cal S}$  are integer multiples of $1/m$, for some integer $m$ that is a free parameter in our algorithm. We show that the loss in revenue resulting from our reduction is at most an additive {$\frac{4k_1n}{m} \max_i\{p^{(i)}\}$} in the following sense: for any price vector $P$, the expected revenue from the original value distributions $\{F_i\}_i$ and the expected revenue from the discretized distributions $\{\hat{F}_i\}_i$ are within an additive $\frac{4k_1n}{m} \max_i\{p^{(i)}\}$. Moreover, the construction of Lemma~\ref{lem:verticaldiscretization} is explicit, so from now on we can assume that we know the $\{\hat{F}_i\}_i$ explicitly, regardless of what type of access we have to the $\{F_i\}_i$ (see Appendix~\ref{sec:model}).

The second phase of our algorithm is the dynamic programming algorithm outlined in Section~\ref{sec: cover view of the problem}. We provide some further details on this now. Our algorithm computes a Boolean function $g(i,{{\Pr}})$, whose arguments lie in the following range:  $i\in[n]$ and ${{\Pr}}=({\Pr}_{1,1},{\Pr}_{1,2},\ldots,{\Pr}_{k_1,k_2})$, where each ${\Pr}_{i_1,i_2} \in [0,1]$ is an integer multiple of {$1\over m$}. The function $g$ is stored in a table that has one cell for every setting of $i$ and ${{\Pr}}$, and the cell contains a $0$ or a $1$ depending on the value of $g$ at the corresponding input. In the terminology of the previous section, argument $i$ indexes the last item in a prefix of items and ${{\Pr}}$ defines a (winning-value, winning-price) distribution whose probabilities are integer multiples of ${1 \over m}$. If ${{\Pr}}$ can arise from some pricing of the items $1 \ldots i$ (up to discretization of probabilities into multiples of ${1 \over m}$), we intend to store $g(i,{{\Pr}})=1$; otherwise we store $g(i,{{\Pr}})=0$. {For each cell of the table such that $g(i,{{\Pr}})=1$, we also store a price vector on the corresponding prefix of items $1\ldots i$ consistent with $\Pr$.}

For conciseness, we give next a high-level description of the dynamic programming algorithm, postponing its full details to Appendix~\ref{sec:DP step}. The table is filled in a bottom-up fashion from $i=1$ through $n$. At the end of the $i$-th iteration, we have computed all feasible ``discretized'' (winning-value,winning-price) distributions for the prefix of items $1\ldots i$, where ``discretized'' means that all probabilities have been rounded into multiples of $1/m$. For the next iteration, we try all possible prices $p^{(j)}$ for item $i+1$ and compute how each of the feasible discretized (winning-value,winning-price) distributions for the prefix $1\ldots i$ evolves into a discretized distribution for the prefix $1\ldots i+1$, setting the corresponding cell of layer $g(i+1,\cdot)$ of the DP table to $1$. Notice, in particular, that {\em we lose accuracy in every step of the dynamic programming algorithm}, as each step involves computing how a discretized distribution for items $1\ldots i$ evolves into a distribution for items $1\ldots i+1$ and then rounding the latter back into multiples of $1/m$. We show in the analysis of our algorithm that the error accumulating from these roundings can be controlled via coupling arguments.

After computing the truth-table of function $g$, we look at all cells such that $g(n,{\Pr})=1$ and evaluate the expected revenue resulting from the distribution ${\Pr}$, i.e.
$${\cal R}_{{\Pr}} = \sum_{i_1\in[k_1],i_2\in[k_2]} p^{(i_2)}\cdot {\Pr}_{i_1,i_2} \cdot \ind_{v^{(i_1)}\geq p^{(i_2)}}.$$
Having located the cell whose ${\cal R}_{{\Pr}}$ is the largest, {we output the price vector stored in that cell.} \notshow{we follow back-pointers to obtain a price-vector consistent with ${\Pr}$. At some steps of the back-tracing, there may be multiple choices; we pick an arbitrary one to proceed.}



\paragraph{Running Time and Correctness.} Next we bound the algorithm's running time and revenue.

\begin{theorem} \label{lem:dynamic programming revenue}     \label{lem:running time}
    Given an instance of {\sc RestrictedPrice}, where the value distributions are supported on a discrete set ${\cal S}$ of cardinality $k_1$ and the prices are restricted to a discrete set ${\cal P}$ of cardinality $k_2$, and for any choice of discretization accuracy $m \ge 2k_1$, the algorithm described in this section produces a price vector with expected revenue at least $$OPT -{(2nk_{1}k_{2} + 16k_{1}n)\over  m}\cdot \max\{{\cal P}\},$$
where $\max\{{\cal P}\}$ is the maximum element in ${\cal P}$ and $OPT$ the optimal expected revenue. The running time of the algorithm is  polynomial in the size of the input and {$m^{k_1 k_2}$}. 
\end{theorem}

%
The proof of the theorem is given in Appendix~\ref{sec:algorithm analysis}. Intuitively, if we did not perform any rounding of distributions, our algorithm would have been  {\em exact}, outputting an optimal price vector in $\{p^{(1)},\ldots,p^{(k_2)}\}^n$. What we show is that the roundings performed at the steps of the dynamic programming algorithm are fine enough that do not become detrimental to the revenue. To show this, we use coupling arguments, invoking the coupling lemma and the optimal coupling theorem after each step of the algorithm.~(See Lemma~\ref{lem:closeness} in Appendix~\ref{sec:correctness of DP}.) This way, we show that the rounded (winning-value,winning-price) distributions maintained by the algorithm for each price vector are close in total variation distance to the corresponding exact distributions arising from these price vectors, culminating in Theorem~\ref{lem:dynamic programming revenue}.

\notshow{Using Lemmas~\ref{lem:dynamic programming revenue} and~\ref{lem:running time} and our work in the later sections, we obtain our main algorithmic results in this paper (Theorems~\ref{thm:ptas mhr},~\ref{thm:quasi ptas regular},~\ref{thm:general algorithm}, and~\ref{thm:additive ptas}). See Section~\ref{sec:app overall ptas} for the proof of these theorems.}


\section{Additive PTAS for values distributed in $[0,1]^{n}$}\label{sec:additive}
In this section, we provide a polynomial-time reduction from {\sc AdditivePrice}$(\mathcal{{V}},\epsilon)$, for value distributions $\mathcal{{V}}=\{v_{i}\}_i$ supported on~$[0,1]$, to $O(\epsilon)$-approximating {\sc RestrictedPrice}$(\mathcal{\hat{V}},\mathcal{P})$, where $\mathcal{\hat{V}}$ is a collection of mutually independent random variables supported on a common set of cardinality $\poly(1/\epsilon)$ and $|\mathcal{P}|=\poly(1/\epsilon)$. A PTAS for {\sc AdditivePrice} then follows from Theorem~\ref{lem:dynamic programming revenue} with an appropriate choice of the discretization $m$.

\medskip As a first step, we reduce {\sc AdditivePrice}$(\mathcal{{V}},\epsilon)$ to {\sc AdditivePrice}$(\mathcal{\tilde{V}},O(\epsilon))$, where the random variables $\mathcal{\tilde{V}}=\{\tilde{v}_{i}\}_i$ are independently distributed in $[O(\epsilon),1]$. The reduction is quite straightforward, replacing all sampled values that are smaller than some $O(\epsilon)$ with $O(\epsilon)$ and keeping the rest unchanged. We argue that a nearly optimal price vector for the new value distributions is also nearly optimal for the original value distributions. Formally, 
\begin{lemma}\label{lem:additive to balanced}
Let $\mathcal{V}=\{v_{i}\}_{i\in[n]}$ be a collection of mutually independent random variables supported on $[0,1]$. For any $\epsilon>0$, there is a polynomial-time reduction from {\sc AdditivePrice}$(\mathcal{V},\epsilon)$ to {\sc AdditivePrice}$(\mathcal{\tilde{V}},\epsilon/3)$, where $\mathcal{\tilde{V}}=\{\tilde{v}_{i}\}_{i\in[n]}$ is a collection of mutually independent random variables supported on $[\epsilon/6,1]$.\end{lemma}

The proof can be found in Appendix~\ref{appendix:additive}.


\smallskip Next we want to discretize the problem {\sc AdditivePrice}$(\mathcal{\tilde{V}},\epsilon/3)$. As alluded to in Section~\ref{sec:intro}, the expected revenue can be sensitive even to small perturbations of the prices and the probability distributions. So our discretization, summarized in the next theorem, must be done delicately.

%

\begin{theorem}[Price/Value Discretization for Additive Approximation] \label{thm:additive discretization}
Let ${\cal V}=\{v_i\}_{i\in[n]}$ be a collection of mutually independent random variables supported on a {bounded} set $[u_{min},u_{max}] \subset \mathbb{R}_+$, and let $r={u_{max}\over u_{min} } \ge 1.$
For any $\epsilon  >0$, there is a reduction from {\sc AdditivePrice}$({\cal V}, \epsilon)$ to approximating {\sc RestrictedPrice}$(\hat{{\cal V}}, {\cal P})$ to within an additive error of {$\epsilon \over 6$}, where
\begin{itemize}
\item $\hat{{\cal V}}=\{\hat{v}_i\}_{i\in[n]}$ is a collection of mutually independent random variables that are supported on a common set of cardinality $O\left(\frac{u_{max}^{2}\log r}{\epsilon^{2}}\right)$; 

\item $|{\cal P}| = O\left( {{u_{max}^{2}} \log{r}\over \epsilon^{2}}\right)$ and $\max_{x \in {\cal P}}{x} \le {7 \over 6} u_{max}$.
\end{itemize}

Moreover, if $u_{min}$ and $u_{max}$ are given explicitly as input to the reduction,\footnote{This requirement is only relevant if we have oracle access to the distributions of the $v_i$'s, as if we are given the distributions explicitly we immediately also know $u_{min}$ and $u_{max}$.} the running time of the reduction is polynomial in the description of ${\cal V}$, $\angler{u_{min}}$, $\angler{u_{max}}$, and $1/\epsilon$.
%
\end{theorem}

\noindent That the prices can be restricted to a discrete set {without hurting the revenue too much} follows immediately from a discretization lemma attributed to Nisan~\cite{ChawlaHK07}. (See also~\cite{HartlineK05} for a related discretization.) Our price discretization result is summarized in Lemma~\ref{cor:discreteprice} of Appendix~\ref{sec:discretizing prices}.  The discretization of the support of the value distributions is inspired by Nisan's lemma, and our corresponding discretization result is summarized in Lemma~\ref{lem:horizontal-discretization additive}. 

\smallskip Combining Lemma~\ref{lem:additive to balanced} and Theorem~\ref{thm:additive discretization}, we complete our reduction from {\sc AdditivePrice} to {\sc RestrictedPrice}.
\begin{corollary}\label{cor:additive to discrete}
Let $\mathcal{V}=\{v_{i}\}_{i\in[n]}$ be a collection of mutually independent random variables supported on $[0,1]$. For any $\epsilon>0$, there is a polynomial-time reduction from {\sc AdditivePrice}$(\mathcal{V},\epsilon)$ to approximating {\sc RestrictedPrice}$(\mathcal{\hat{V}},\mathcal{P})$ to within an additive error of {$\epsilon \over 18$}, where $\mathcal{\hat{V}}$ is a collection of mutually independent random variables supported on a common set of cardinality $O({\log 1/\epsilon\over \epsilon^{2}})$, $|{\cal P}|=O({\log 1/\epsilon\over \epsilon^{2}})$ and $\max_{x \in {\cal P}}x \le 7/6$.
\end{corollary}

We are now ready to prove Theorem~\ref{thm:additive ptas}, using the reduction of Corollary~\ref{cor:additive to discrete} and our algorithm from Section~\ref{sec:ptas for discretized problem}. 

\medskip \begin{prevproof}{Theorem}{thm:additive ptas}
We first perform the reduction of Corollary~\ref{cor:additive to discrete}. In the resulting instance of {\sc RestrictedPrice} both the cardinality of the support of the value distribution and the number of available prices are $O({\log 1/\epsilon\over \epsilon^{2}})$. Using $m = O( {n\cdot\log^2 1/\epsilon\over \epsilon^{5}})$ we can solve the resulting instance of {\sc RestrictedPrice} to within additive error $O(\epsilon)$ using the algorithm of Theorem~\ref{lem:running time}. The running time of the algorithm is polynomial in the input and $n^{{\log^{3} 1/\epsilon\over \epsilon^{4}}}$. 
\end{prevproof}

\section{Multiplicative PTAS} \label{sec:balanced range} \label{sec:smoothness properties}
For the remainder of our main exposition, we move on to multiplicative approximations to the item pricing problem, obtaining algorithms for {\sc Price}. In this section, we study the general problem where the values are independently distributed on a bounded range $[u_{min}, u_{max} \equiv r \cdot u_{min}]$ according to arbitrary distributions, proving Theorem~\ref{thm:general algorithm}. 

Notice that, using our results from the previous sections, we can already get an algorithm for {\sc Price}. We can first apply our reduction from Theorem~\ref{thm:additive discretization} to discretize the prices and the supports of the value distributions. Then we can use our algorithm from Theorem~\ref{lem:dynamic programming revenue} to solve the discretized problem. However to convert the additive approximation of this algorithm to a multiplicative one, we need to choose the approximation to be no worse than $\epsilon \cdot u_{min}$. This requirement forces the support of the discretized value distributions to be $\Omega(r^{2}\log r/\epsilon^2)$ and the discrete set of prices to also have cardinality $\Omega(r^{2}\log r/\epsilon^2)$. Hence, the algorithm has running time polynomial in $n^{r^{4}\log^{2} r/\epsilon^4}$. 

In this section, we present a stronger discretization result, reducing the size of the support of the value distributions and the cardinality of the price set to linear in $\log r$. With this new discretization, we can speed up the running time of our algorithm to  $n^{ {\rm poly}(\log r, 1/\epsilon)}$.
Our improved discretization reduction is presented below, and proved in Appendix~\ref{app:discretizationreduction for multiplicative}.
\begin{theorem}[Price/Value Distribution Discretization] \label{thm:discretization}
Let ${\cal V}=\{v_i\}_{i\in[n]}$ be a collection of mutually independent random variables supported on a bounded range $[u_{min},u_{max}] \subset \mathbb{R}_+$, and let $r={u_{max}\over u_{min} } \ge 1.$
For any $\epsilon  \in \left(0,\frac{1}{(4 \lceil \log_2 r\rceil) ^{1/6}}\right)$, there is a reduction from {\sc Price}$({\cal V}, \epsilon)$ to the problem of approximating {\sc RestrictedPrice}$(\hat{{\cal V}}, {\cal P})$ to within a factor of {$(1-{\epsilon \over 16})$}, where
\begin{itemize}
\item $\hat{{\cal V}}=\{\hat{v}_i\}_{i\in[n]}$ is a collection of mutually independent random variables that are supported on a common set of cardinality $O\left(\frac{\log r}{\epsilon^{16}}\right)$; 

\item $|{\cal P}_{}| = O\left( {\log{r}\over \epsilon^{2}}\right)$.
\end{itemize}
Moreover, if $u_{min}$ and $u_{max}$ are given explicitly as input to the reduction,\footnote{This requirement is only relevant if we have oracle access to the distributions of the $v_i$'s, as if we are given the distributions explicitly we immediately also know $u_{min}$ and $u_{max}$.} the running time of the reduction is polynomial in the description of ${\cal V}$, $\angler{u_{min}}$, $\angler{u_{max}}$, and $1/\epsilon$.
%
%
%
\end{theorem}

Combining our discretization from Theorem~\ref{thm:discretization} with our algorithm from Theorem~\ref{lem:dynamic programming revenue}, it is easy to show Theorem~\ref{thm:general algorithm}. We only sketch the proof here, providing a formal proof in Appendix~\ref{app:proof of general multiplicative PTAS}.\\

\begin{prevproof}{Theorem}{thm:general algorithm} 
(sketch) We first perform the reduction of Theorem~\ref{thm:discretization} to get an instance of {\sc RestrictedPrice} where both the values and the prices come from discrete sets of cardinality $O({{\log r}\over \poly(\epsilon)})$. Using the algorithm of Theorem~\ref{lem:dynamic programming revenue}, we can then approximately solve this instance to within a factor of $1-O(\epsilon)$ in time polynomial in the input and $n^{{{\log^{2} r}\over \poly(\epsilon)}}$.
%
%
%
%
\end{prevproof}

\section{Extreme Values of MHR Distributions}\label{sec:truncate}

We reduce the problem of finding a near-optimal price vector for value distributions that are MHR to finding a near-optimal price vector for value distributions that are supported on a bounded range $[u_{min},u_{max}]$, where $u_{max}/u_{min}$ is only a function of the desired approximation $\epsilon>0$.  More precisely, we establish the following reduction.
\begin{theorem}[From MHR to Bounded Distributions]\label{thm:reduction MHR to balanced} 
Let $\mathcal{V}=\{v_i\}_{i\in[n]}$ be a collection of mutually independent  \textsl{MHR} random variables. Then there exists some $\beta=\beta({\cal V})>0$ such that for all $\epsilon\in(0,1/4)$, there is a reduction from {\scshape Price}$(\mathcal{V},c \epsilon \log_2({1 \over \epsilon}))$ to {\scshape Price}$(\tilde{\mathcal{V}},\epsilon)$, where $\tilde{\cal V}:=\{\tilde{v}_i\}_i$ is a collection of mutually independent random variables supported on the set $[{\epsilon \over 2} \beta, 2 \log_2{1\over \epsilon} \beta]$, and $c$ is some absolute constant.\footnote{{Clearly, by plugging $\epsilon = O({\hat{\epsilon} \over \log_2 {1 / \hat{\epsilon}}})$ into our reduction, we obtain a reduction from {\sc Price}$({\cal V},\hat{\epsilon})$ to {\sc Price}$(\tilde{\cal V},O({\hat{\epsilon} \over \log_2 {1 / \hat{\epsilon}}}))$, for any desired $\hat{\epsilon}$. We phrased our theorem as a reduction from {\scshape Price}$(\mathcal{V},c \epsilon \log_2({1 \over \epsilon}))$ to {\scshape Price}$(\tilde{\mathcal{V}},\epsilon)$ only to have better expressions in the supports of the $\tilde{v}_i$'s.}}

Moreover, $\beta$ is efficiently computable from the distributions of the $v_i$'s,\notshow{\yangnote{Remove:(whether we are given the distributions explicitly, or  we have oracle access to them,)}} and, for all $\epsilon$, the running time of the reduction is polynomial in the size of the input and ${1 \over \epsilon}$. 
\end{theorem}
We discuss the essential elements of our reduction below.  Most crucially, the reduction is enabled by the following theorem, characterizing the extreme values of a collection of independent MHR distributions. 
\begin{theorem}[Extreme Values of MHR distributions]\label{thm:extreme MHR}
Let $X_1,\ldots,X_n$ be a collection of independent random variables whose distributions are MHR. Then there exists some anchoring point $\beta$ such that $\Pr[ \max_i\{X_i\} \ge \beta/2] \ge 1-{1\over \sqrt{e}}$ and
\begin{align}\int_{2\beta \log_2{1/\epsilon}}^{+\infty} t \cdot f_{\max_i\{X_i\}}(t) dt \le 36 \beta \epsilon \log_2{1/\epsilon},~\text{for all }\epsilon \in (0,1/4), \label{eq:extreme MHR}
\end{align}
where $f_{\max_i\{X_i\}}(t)$ is the probability density function of $\max_i\{X_i\}$.
Moreover, $\beta$ is efficiently computable from the distributions of the $X_i$'s. 
\end{theorem}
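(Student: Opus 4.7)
My plan is to choose the anchoring point $\beta$ as the unique value satisfying $\prod_i F_{X_i}(\beta) = e^{-1/2}$, so that $\Pr[\max_i X_i \le \beta] = 1/\sqrt{e}$. The first assertion of the theorem then follows immediately, since $\Pr[\max_i X_i \ge \beta/2] \ge \Pr[\max_i X_i > \beta] = 1 - 1/\sqrt{e}$. Setting $q_i := 1 - F_{X_i}(\beta)$ and taking logs of the defining equation yields $\sum_i -\log(1-q_i) = 1/2$, and the elementary inequality $-\log(1-q) \ge q$ for $q \in [0,1)$ gives $\sum_i q_i \le 1/2$, which also forces each $q_i \le 1/2$.

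The heart of the argument is a pointwise tail bound for each marginal: for every $i$ and every $t \ge \beta$,
\[
1 - F_{X_i}(t) \le q_i^{\,t/\beta}.
\]
Writing $h_i$ for the hazard rate of $X_i$, monotonicity of $h_i$ gives $\int_{u^i_{\min}}^{\beta} h_i(s)\, ds \le \beta \cdot h_i(\beta)$; combined with the identity $\int_{u^i_{\min}}^{\beta} h_i(s)\, ds = -\log(1-F_{X_i}(\beta)) = -\log q_i$, this forces $h_i(\beta) \ge -(\log q_i)/\beta$. Plugging this into
\[
1 - F_{X_i}(t) \;=\; q_i\, \exp\!\Bigl(-\!\int_\beta^t h_i(s)\,ds\Bigr) \;\le\; q_i\,\exp\!\bigl(-h_i(\beta)(t-\beta)\bigr)
\]
and simplifying yields the claimed inequality.

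Armed with this bound, a union bound gives $\Pr[\max_i X_i \ge t] \le \sum_i q_i^{\,t/\beta}$ for $t \ge \beta$. Factoring one copy of $q_i$ out of each summand and using $q_i \le 1/2$ for the remaining exponent, this is at most $\sum_i q_i \cdot (1/2)^{t/\beta - 1} \le 2^{-t/\beta}$. With $L := 2\beta \log(1/\epsilon)$, integration by parts expresses the target integral as $L\,\Pr[\max_i X_i > L] + \int_L^\infty \Pr[\max_i X_i > t]\, dt$; substituting $2^{-t/\beta}$, both terms evaluate to expressions of the form $O(\beta)\cdot 2^{-L/\beta}\cdot \mathrm{poly}(\log(1/\epsilon))$, and since $2^{-L/\beta} \le \epsilon$ for $\epsilon < 1/4$ (whether $\log$ is natural or base $2$), the total is a small multiple of $\beta \epsilon \log(1/\epsilon)$, comfortably below the claimed $36\beta\epsilon \log(1/\epsilon)$. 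Efficient computability of $\beta$ follows by binary search on the continuous, monotone function $x \mapsto \prod_i F_{X_i}(x)$, using the given explicit or oracle access to the $F_{X_i}$.

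The main obstacle is the pointwise bound $1 - F_{X_i}(t) \le q_i^{\,t/\beta}$: it is the one step that uses MHR non-trivially, converting an anchoring condition on the \emph{product} $\prod_i F_{X_i}(\beta)$ into a sub-exponential tail for each individual marginal, and thereby for their maximum. Everything else---the union bound, the exponent manipulation, and the integration by parts---is routine; the remaining work is only careful tracking of constants and a sanity check that $\beta$ is well-defined (which holds because each $F_{X_i}$ is continuous and the product $\prod_i F_{X_i}$ rises continuously from $0$ to $1$).
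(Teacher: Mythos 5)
Your proof is correct and takes a genuinely different, cleaner route than the paper's. The paper constructs its anchoring point $\beta$ via an $O(\log n)$-round tournament: it repeatedly sorts the surviving variables by their $\alpha_{n/2^t}$ quantiles, discards the lighter half, records the boundary quantile in each round, and takes $\beta$ to be the maximum of these records; the tail bound is then proved group-by-group (Lemma~\ref{lem:tinycontribution}), combining Lemmas~\ref{lem:concentrate} and~\ref{lem:contribution} with a careful sum over the $\log n$ groups. You instead define $\beta$ directly as the $(1-e^{-1/2})$-quantile of $\max_i X_i$ itself, so $\sum_i q_i\le 1/2$ comes for free from $\prod_i F_{X_i}(\beta)=e^{-1/2}$ and $-\log(1-q)\ge q$, and the single clean lemma $1-F_{X_i}(t)\le q_i^{\,t/\beta}$ for $t\ge\beta$ (derived by lower-bounding $h_i(\beta)\ge(-\log q_i)/\beta$ via MHR monotonicity and $u_{\min}^i\ge 0$) yields the global tail $\Pr[\max_i X_i\ge t]\le 2^{-t/\beta}$ by a one-line union bound, after which the integral estimate is routine calculus. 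Your key lemma is essentially a per-marginal version of what the paper's Lemma~\ref{lem:concentrate} provides ($d\alpha_m\ge\alpha_{m^d}$), but calibrated at the global quantile $\beta$ rather than at per-variable quantiles, which is exactly what removes the need for the tournament bookkeeping; the constants you obtain are in fact considerably better than $36$. Two small points worth making explicit in a polished write-up: (i) if $\beta\ge u_{\max}^i$ for some $i$ then $q_i=0$ and the pointwise bound holds trivially as $0\le 0$ (so the passage through $-\log q_i$ should be restricted to $q_i>0$); and (ii) the anchoring property requires $\beta>0$, which follows because $\prod_i F_{X_i}(\beta)=e^{-1/2}>0$ forces $\beta>\max_i u_{\min}^i\ge 0$. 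Both are immediate, and neither affects the correctness of the argument.
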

 
{Theorem~\ref{thm:extreme MHR}, whose proof is given in Appendix~\ref{sec:proof of extreme MHR}, implies that, for $\epsilon$ sufficiently small, at least a $(1-\epsilon)$-fraction of $\mathbb{E}[\max_i{X_i}]$ is contributed to by values that are no larger than $\mathbb{E}[\max_i{X_i}]  \cdot O(\log_2{1 \over \epsilon}).$} Our result is quite surprising, especially for the case of non-identically distributed MHR random variables. Why should most of the contribution to $\mathbb{E}[\max_i{X_i}]$ come from values that are close ({\em within a function of $\epsilon$ only}) to the expectation, when the underlying random variables $X_i$ may concentrate on widely different supports? To obtain the theorem one needs to understand how the tails of the distributions of a collection of independent MHR random variables contribute to the expectation of their maximum. Our proof technique is intricate, defining a tournament between the tails of the distributions. Each round of the tournament ranks the remaining distributions according to the size of their tails, and eliminates the lightest half of the distributions. The threshold $\beta$ is then obtained by some side-information that the algorithm records in every round.  

Given our understanding of the extreme values of MHR distributions, our reduction of Theorem~\ref{thm:reduction MHR to balanced} from MHR to bounded distributions proceeds in the following steps:
\begin{itemize}
\item We start with the computation of the threshold $\beta$ specified by Theorem~\ref{thm:extreme MHR}. This computation can be done efficiently, as stated in the statement of the theorem. Given that $\Pr[ \max_i\{X_i\} \ge \beta/2]$ is bounded away from $0$, {the revenue from pricing every item at $\beta/2$ is $\Omega(\beta)$, hence the optimal revenue is also $\Omega(\beta)$.} See Appendix~\ref{sec:OPT vs Beta} for the precise lower bound we obtain. Such lower bound is useful as it implies that, if our transformation loses revenue that is a small fraction of $\beta$, this corresponds to a small fraction of optimal revenue lost. 


\item Next, using~\eqref{eq:extreme MHR} we show that, for all $\epsilon >0$, if we restrict the prices to lie in the range $[\epsilon\cdot \beta,2\log_2(\frac{1}{\epsilon})\cdot \beta]$, we only lose a $O(\epsilon \log_21/\epsilon)$ fraction of the optimal revenue; this step is detailed in Appendix~\ref{sec:bounding the prices}.

\item Finally, we show that we can efficiently transform the given MHR random variables  $\{v_{i}\}_{i\in[n]}$ into a new collection of random variables $\{\tilde{v}_{i}\}_{i\in[n]}$ that take values in  $[{\epsilon \over 2}\cdot \beta, 2\log_2(\frac{1}{\epsilon})\cdot\beta]$ and satisfy the following: a near-optimal price vector for the setting where the buyer's values are distributed as $\{\tilde{v}_{i}\}_{i\in[n]}$ can be  efficiently transformed into a near-optimal price vector for the original setting, i.e. where the buyer's values are distributed as $\{{v}_{i}\}_{i\in[n]}$. This step is detailed in Appendix~\ref{sec:truncating}. 
\end{itemize}

Theorem~\ref{thm:ptas mhr} is established by combining the reduction of Theorem~\ref{thm:reduction MHR to balanced} with our algorithm for bounded distributions of Theorem~\ref{thm:general algorithm}. See Appendix~\ref{sec:overall ptas}.

\section{Extreme Values of Regular Distributions}\label{sec:regular}

We reduce the problem of finding a near-optimal price vector for value distributions that are regular to finding a near-optimal price vector for value distributions that are supported on a bounded range $[u_{min},u_{max}]$ satisfying $u_{max}/u_{min} \le 16n^8/\epsilon^{4}$, where $n$ is the number of distributions and {$\epsilon$ is the desired approximation}. It is important to notice that our bound on the ratio $u_{max}/u_{min}$ does not depend on the distributions at hand,  just their number and the required approximation. We also emphasize that the given regular distributions may be supported on $[0,+\infty)$, so it is a priori not clear if we can truncate these distributions to any finite set without losing substantial revenue. Our reduction is the following.

\begin{theorem}[Reduction from Regular to ${\rm Poly}(n)$-Bounded Distributions]\label{thm:regreduction}
Let $\mathcal{V}=\{v_i\}_{i\in[n]}$ be a collection of mutually independent  \textsl{regular} random variables. Then there exists some $\alpha=\alpha({\cal V})>0$ such that, for any {$\epsilon\in(0,1)$}, there is a reduction from {\scshape Price}$(\mathcal{V},\epsilon)$ to {\scshape Price}$(\tilde{\mathcal{V}},\epsilon-\Theta(\epsilon/n))$, where $\tilde{\mathcal{V}}=\{\tilde{v}_i\}_{i\in [n]}$ is a collection of mutually independent random variables that are supported on $[{\epsilon\alpha\over 4n^{4}}, {4n^{4}\alpha\over \epsilon^{3}}]$.

Moreover, $\alpha$ is efficiently computable from the distributions of the $v_i$'s,\notshow{\yangnote{Remove: (whether we have the distributions of the $v_i$'s explicitly, or have oracle access to them.)}} and, for all $\epsilon$, the  running time of the reduction is polynomial in the size of the input and $1/\epsilon$. 
%
\end{theorem}

Our reduction is based on the following extreme value theorem for regular distributions, whose proof is provided in Appendix~\ref{sec:regextremevalue}.  Immediately following the statement of the theorem we sketch how it is used to establish our reduction, whose detailed proof is in Appendix~\ref{app: reduction for regular}. Section~\ref{app:discussion of xtreme regular} gives other example applications of the theorem to illustrate its usefulness in bounding extreme values of regular distributions.

\begin{theorem}[Homogenization of the Extreme Values of Regular Distributions]\label{thm:regextremevalue}
Let $\{X_{i}\}_{i\in[n]}$ be a collection of mutually independent regular random variables, where $n\ge2$. Then there exists some $\alpha=\alpha(\{X_i\}_{i\in[n]})$ such that:
\begin{enumerate}
\item $\alpha$ has the following ``anchoring'' properties:
\begin{itemize}
\item for all $\ell \ge 1$, $\Pr[X_{i}\geq \ell \alpha ]\leq 2/(\ell n^{3})$, for all $i\in[n]$;
 \item $\alpha/n^{3}\leq c\cdot \max_{z} (z\cdot \Pr[\max_{i}\{X_{i}\}\geq z])$, where $c$ is an absolute constant.
 \end{itemize}
 \item For all {$\epsilon\in (0,1)$}, the tails beyond $ {2n^{2}\alpha\over \epsilon^{2}}$ can be ``homogenized'', i.e.
 \begin{itemize}
 \item for any integer $m\leq n$, thresholds $t_{1},\ldots, t_{m}\geq t \geq {2n^{2}\alpha\over \epsilon^{2}}$, and index set  $\{a_1,\ldots,a_m\}\subseteq [n]$:
\begin{align*}
\sum_{i=1}^{m} t_{i} \Pr[X_{a_{i}}\geq t_{i}]\leq \left(t-{2\alpha\over \epsilon}\right)\cdot \Pr\left[\max_{i \in [m]}\{X_{a_{i}}\} \geq t\right]+{7\epsilon \over n}\cdot\left({2\alpha \over \epsilon}\cdot\Pr\left[\max_{i \in [m]}\{X_{a_{i}}\}\geq {2\alpha \over \epsilon}\right] \right).
\end{align*}
\end{itemize}
\end{enumerate}
Furthermore, $\alpha$ is efficiently computable from the distributions of the $X_i$'s. \notshow{\yangnote{Remove: (whether we are given the distributions explicitly, or have oracle access to them.)}}
\end{theorem}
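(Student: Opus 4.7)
The plan is to anchor the collection at a level determined by the monopoly revenue of the maximum. Let $r^{\star} := \sup_{z > 0} z \cdot \Pr[\max_i X_i \geq z]$, which is finite because each $X_i$ is regular (each $r_i^{\star} := \sup_z z \Pr[X_i \geq z]$ is finite by quantile-concavity, and $r^{\star} \leq \sum_i r_i^{\star}$), and define $\alpha := 2n^3 r^{\star}$. Both $r^{\star}$ and $\alpha$ are routinely computable by binary search on $z$ using oracle evaluations of $\Pr[\max_i X_i \geq z] = 1 - \prod_i F_i(z)$.

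The two anchoring bounds in part~(1) are then almost immediate. For the second bullet, $\alpha/n^3 = 2r^{\star}$ gives the claim with absolute constant $c = 2$. For the first bullet, each individual monopoly revenue satisfies $r_i^{\star} \leq r^{\star}$ (since $\Pr[X_i \geq z] \leq \Pr[\max_j X_j \geq z]$), so $\ell\alpha \cdot \Pr[X_i \geq \ell \alpha] \leq r_i^{\star} \leq r^{\star} = \alpha/(2n^3)$, which rearranges to $\Pr[X_i \geq \ell\alpha] \leq 1/(2\ell n^3) \leq 2/(\ell n^3)$.

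The homogenization inequality~(2) is the main technical step, and it is here that regularity enters essentially through the concavity of the revenue curve $R_{a_i}(q) := q F_{a_i}^{-1}(1-q)$ in quantile space, as established in Appendix~\ref{sec:regconcavity}. Write $q_i := \Pr[X_{a_i} \geq t_i]$ and $q_i^{\circ} := \Pr[X_{a_i} \geq t]$; note $q_i \leq q_i^{\circ}$, $R_{a_i}(q_i) = t_i q_i$, and $R_{a_i}(q_i^{\circ}) = t q_i^{\circ}$. Applying the standard chord inequality for concave functions with $R_{a_i}(1) \geq 0$ to the triple $(q_i, q_i^{\circ}, 1)$ yields $R_{a_i}(q_i) \leq \frac{1-q_i}{1-q_i^{\circ}} R_{a_i}(q_i^{\circ})$, i.e., $t_i q_i \leq \frac{1-q_i}{1-q_i^{\circ}} \cdot t\, q_i^{\circ}$. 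At the prescribed scale $t \geq 2n^2\alpha/\epsilon^2$, part~(1a) forces $q_i^{\circ} \leq r^{\star}/t \leq \epsilon^2/(4n^5)$, so $(1-q_i)/(1-q_i^{\circ}) = 1 + O(q_i^{\circ})$. Summing over $i$ and using the independence-based second-order estimate $\sum_i q_i^{\circ} \leq \Pr[\max_i X_{a_i} \geq t] + \tfrac{1}{2}(\sum_i q_i^{\circ})^2$ then converts the $\sum$-form into a max-form: $\sum_i t_i q_i \leq t\,\Pr[\max_i X_{a_i} \geq t]$ plus a genuinely second-order correction.

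The final step is to split off the shift via $t = (t - 2\alpha/\epsilon) + 2\alpha/\epsilon$, moving $(2\alpha/\epsilon) \Pr[\max_i X_{a_i} \geq t]$ together with the Bonferroni correction into the $\tfrac{7\epsilon}{n}(2\alpha/\epsilon)\Pr[\max_i X_{a_i} \geq 2\alpha/\epsilon]$ term on the right, using monotonicity $\Pr[\max_i X_{a_i} \geq t] \leq \Pr[\max_i X_{a_i} \geq 2\alpha/\epsilon]$. The main obstacle I expect is arithmetic rather than conceptual: matching the exact constants ($2$, $7$, and the threshold $2n^2\alpha/\epsilon^2$) requires careful tracking of the concavity-induced loss against the error budget, and the specific choice of $2n^2\alpha/\epsilon^2$ is calibrated precisely so that the leftover $(2\alpha/\epsilon)\Pr[\max_i X_{a_i} \geq t]$ and the quadratic Bonferroni error together fit inside $\tfrac{7\epsilon}{n}$ times the revenue at scale $2\alpha/\epsilon$.
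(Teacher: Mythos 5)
Your anchoring step and your overall template are sound and closely parallel the paper's proof: you anchor via monopoly revenues (the paper uses $\alpha = \frac{n^3}{c_1}\max_i \alpha_i(1-F_i(\alpha_i))$ built from quantile anchors $\alpha_i$, you use $\alpha = 2n^3 r^\star$ with $r^\star = \sup_z z\Pr[\max_i X_i\ge z]$; both make the two bullets of part~(1) essentially immediate), and the two workhorses for part~(2) are the same -- the chord inequality for the concave revenue curve (Lemma~\ref{lem:reglittlegain}) and near-tightness of the union bound for the independent tail events.

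There is, however, a genuine gap in the final step, and it is not an arithmetic matter. After writing $t\Pr[\max_i X_{a_i}\ge t] = (t - 2\alpha/\epsilon)\Pr[\cdots\ge t] + (2\alpha/\epsilon)\Pr[\cdots\ge t]$, you propose to absorb the leftover $(2\alpha/\epsilon)\Pr[\max_i X_{a_i}\ge t]$ into the error term ``using monotonicity $\Pr[\max\ge t]\le\Pr[\max\ge 2\alpha/\epsilon]$.'' Monotonicity only gives $(2\alpha/\epsilon)\Pr[\max\ge t]\le(2\alpha/\epsilon)\Pr[\max\ge 2\alpha/\epsilon]$, i.e., the \emph{full} revenue at scale $2\alpha/\epsilon$, not a $7\epsilon/n$ fraction of it --- off by a factor of $n/(7\epsilon)$, and since $\Pr[\max\ge 2\alpha/\epsilon]$ has no universal lower bound you cannot bound the leftover term absolutely either. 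What is actually required is a \emph{second} application of the chord inequality, this time relating the scale $t$ down to $2\alpha/\epsilon$ per variable: since $\Pr[X_{a_i}\ge t]\le \Pr[X_{a_i}\ge 2\alpha/\epsilon]\le \epsilon/n^3$, Lemma~\ref{lem:reglittlegain} gives $t\Pr[X_{a_i}\ge t]\le 2\cdot(2\alpha/\epsilon)\Pr[X_{a_i}\ge 2\alpha/\epsilon]$. Then
\begin{align*}
\left(\tfrac{2\alpha}{\epsilon}\right)\Pr\!\left[\max_i X_{a_i}\ge t\right] \le \tfrac{2\alpha}{\epsilon}\sum_i \Pr[X_{a_i}\ge t]
= \tfrac{2\alpha}{\epsilon t}\cdot t\sum_i \Pr[X_{a_i}\ge t]
\le \tfrac{2\alpha}{\epsilon t}\cdot 4\left(\tfrac{2\alpha}{\epsilon}\right)\Pr\!\left[\max_i X_{a_i}\ge\tfrac{2\alpha}{\epsilon}\right],
\end{align*}
and it is the factor $2\alpha/(\epsilon t)\le \epsilon/n^2$ --- spent here, not in a monotonicity step --- that delivers the $O(\epsilon/n)$ prefactor; the Bonferroni remainder must be handled by the same device. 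This second chord comparison (from $t$ down to $2\alpha/\epsilon$, in addition to your first one from $t_i$ down to $t$) is exactly what the paper does in Eq.~\eqref{eq:ususuuseful}, and it is a missing conceptual ingredient in your sketch rather than mere constant bookkeeping.
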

\noindent Given our homogenization theorem, our reduction of Theorem~\ref{thm:regreduction} is obtained as follows.
\begin{itemize}
\item First, we compute the threshold $\alpha$ specified in Theorem~\ref{thm:regextremevalue}.
This  can be done  efficiently as stated in Theorem~\ref{thm:regextremevalue}. Now given the second anchoring property of $\alpha$, we obtain an $\Omega(\alpha/n^{3})$ lower bound to the optimal revenue. Such a lower bound is useful as it implies that we can ignore prices below some $O(\epsilon\alpha/n^{3})$, without losing more than an $\epsilon$-fraction of revenue. 
\item Next, using the homogenization part of Theorem~\ref{thm:regextremevalue}, we show that, if we restrict a price vector to lie in {$[\epsilon\alpha/n^{4},2n^{2}\alpha/\epsilon^{2}]^{n}$}, we only lose a $O({\epsilon\over n})$ fraction of the optimal revenue. This step is detailed in Appendix~\ref{sec:regrestrictprice}.
\item Finally, we show that we can efficiently transform the input regular random variables $\{v_{i}\}_{i\in[n]}$ into a new collection of random variables $\{\tilde{v}_{i}\}_{i\in[n]}$ that are supported on $[{\epsilon\alpha\over 4n^{4}}, {4n^{4}\alpha\over \epsilon^{3}}]$ and satisfy the following: a near-optimal price vector for when the buyer's values are distributed as $\{\tilde{v}_{i}\}_{i\in[n]}$ can be efficiently transformed into a near-optimal price vector for when the buyer's values are distributed as $\{v_{i}\}_{i\in[n]}$. This step is detailed in Appendix~\ref{sec:regboundingdistribution}, and Appendix~\ref{app: finish regular reduction} concludes the proof of Theorem~\ref{thm:regreduction}.
\end{itemize}

Theorem~\ref{thm:quasi ptas regular} is established by combining the reduction of Theorem~\ref{thm:regreduction} with our algorithm for bounded distributions of Theorem~\ref{thm:general algorithm}. See Appendix~\ref{sec:overall ptas}.

\subsection{Discussion of Theorem~\ref{thm:regextremevalue}} \label{app:discussion of xtreme regular}

We give a couple of applications of Theorem~\ref{thm:regextremevalue} to gain some intuition about its content:

\begin{itemize}

\item Suppose that we set all the $t_{i}$'s equal to $t \ge 2n^2 \alpha/\epsilon^2$. In this case, the homogenization property of Theorem~\ref{thm:regextremevalue} implies that the union bound is essentially tight for $t$ large enough, as ${14 \alpha\over tn}\cdot\Pr\left[\max_{i \in [m]}\{X_{a_{i}}\}\geq {2\alpha \over \epsilon}\right]$ in the following calculation gets arbitrary close to $0$:
\begin{align*}\Pr\left[\max_{i \in [m]}\{X_{a_{i}}\} \geq t\right]&\leq \left(\sum_{i=1}^{m}  \Pr[X_{a_{i}}\geq t]\right)\\&\leq \left({t-{2\alpha\over \epsilon}\over t}\right)\cdot \Pr\left[\max_{i \in [m]}\{X_{a_{i}}\} \geq t\right]+{7\epsilon \over tn}\cdot\left({2\alpha \over \epsilon}\cdot\Pr\left[\max_{i \in [m]}\{X_{a_{i}}\}\geq {2\alpha \over \epsilon}\right] \right)\\
&\le  \Pr\left[\max_{i \in [m]}\{X_{a_{i}}\} \geq t\right]+{14 \alpha\over tn}\cdot\Pr\left[\max_{i \in [m]}\{X_{a_{i}}\}\geq {2\alpha \over \epsilon}\right].
\end{align*}
This is not surprising, since for all $i$, the event $X_{a_{i}}\geq t$ only happens with tiny probability, by the anchoring property of $\alpha$.

\item Now let's try to set all the $t_{i}$'s to the same value $t' >t\ge 2n^2 \alpha/\epsilon^2$. The homogenization property can be used to show that the probability of the event $\max_{i \in [m]}\{X_{a_{i}}\}\geq t'$ scales inverse proportionally with $t'$. Essentially this says that the tails of $\max_{i \in [m]}\{X_{a_{i}}\}$ are not fatter than those of the equal revenue distribution.\footnote{Recall that the {\em equal revenue distribution} is supported on $[1,+\infty]$ and has cumulative density function $F(x)=1-{1 \over x}$.}
\begin{align*}
\Pr\left[\max_{i \in [m]}\{X_{a_{i}}\} \geq t'\right]\leq& \sum_{i=1}^{m} \Pr[X_{a_{i}}\geq t']\\
\leq& \left({t-{2\alpha\over \epsilon}\over t'}\right)\cdot \Pr\left[\max_{i \in [m]}\{X_{a_{i}}\} \geq t\right]+{7\epsilon \over t'n}\cdot\left({2\alpha \over \epsilon}\cdot\Pr\left[\max_{i \in [m]}\{X_{a_{i}}\}\geq {2\alpha \over \epsilon}\right] \right)\\
\le & {1 \over t'} \cdot \left[t\cdot \Pr\left[\max_{i \in [m]}\{X_{a_{i}}\} \geq t\right]+{7\epsilon \over n}\cdot\left({2\alpha \over \epsilon}\cdot\Pr\left[\max_{i \in [m]}\{X_{a_{i}}\}\geq {2\alpha \over \epsilon}\right] \right)\right].
\end{align*}

A similar bound would follow from Markov's inequality, if the expression inside the brackets were within a constant factor of $\mathbb{E}[\max_{i \in [m]}\{X_{a_{i}}\}]$. The result is interesting as it is possible for that expression to be much smaller than $\mathbb{E}[\max_{i \in [m]}\{X_{a_{i}}\}]$. For example, if $m=1$ and $X_{a_1}$ is distributed according to the equal revenue distribution, the expectation of $X_{a_1}$ is $+\infty$, while the expression inside the brackets is $1+{7 \epsilon \over n}$.
\end{itemize}

\appendix
\newpage 
\section*{\huge Appendix}
\section{Roadmap to the Appendix}\label{app:roadmap}
Appendix~\ref{sec:model}  describes several computational models of accessing a value distribution, explaining what it means for an algorithm with each type of access to be ``computationally efficient'' or ``take time polynomial in the input.'' 

Appendix~\ref{sec:algorithm analysis} contains a formal description and analysis of our dynamic programming approach for {\sc RestrictedPrice}, culminating in the proof of Theorem~\ref{lem:dynamic programming revenue}. 

Appendix~\ref{appendix:balanced range} provides several reductions among item pricing problems, whose goal is to discretize some aspect of the problem such as the support of the value distributions, the probabilities they assign to their support, or the set of available prices. The appendix culminates in the reductions of Theorems~\ref{thm:additive discretization} and~\ref{thm:discretization}. 

Appendix~\ref{app:proof of general multiplicative PTAS} provides a proof of Theorem~\ref{thm:general algorithm}, our algorithm for bounded distributions. 

\smallskip The rest of the appendix is dedicated to our treatment of MHR and regular distributions. Appendix~\ref{sec:MHR to BD} provides the proof of our extreme value theorem for MHR distributions (Theorem~\ref{thm:extreme MHR}), as well as our reduction from item pricing problems with MHR distributions to item pricing problems with bounded distributions (Theorem~\ref{thm:reduction MHR to balanced}). Similarly, Appendix~\ref{appendix:regularbounding} provides the proof of our extreme value theorem for regular distributions (Theorem~\ref{thm:regextremevalue}), as well as our reduction from item pricing problems with regular distributions to item pricing problems with bounded distributions (Theorem~\ref{thm:regreduction}). The proofs of our algorithmic results for MHR and regular distributions (Theorems~\ref{thm:ptas mhr} and~\ref{thm:quasi ptas regular}) are provided in Appendix~\ref{sec:overall ptas}. The proofs of our structural results for independent MHR and regular distributions (Theorems~\ref{thm:single price constant factor}, \ref{thm:constant prices suffice} and~\ref{thm:logn prices suffice}) are provided in Appendix~\ref{sec:proofs of structural}.  Finally, Appendix~\ref{sec:MHR iid} contains the proof of our structural result for i.i.d. MHR distributions (Theorem~\ref{thm:structural theorem 3}).
\section{Access to Value Distributions, and Computational Complexity} \label{sec:model}

We consider three ways in which a distribution may be input to an algorithm, as well as what it means for the algorithm to run in time ``polynomial in the description of the distribution'' in each case.

\begin{itemize}
\item {\bf Explicitly:} In this case, the distribution has to be discrete, and we are given its support (as a list of numbers), and the probabilities that the distribution places on every element in its support. If a distribution is explicitly input to an algorithm, the algorithm is computationally efficient if it runs in time polynomial in its other inputs and the bit-complexity of the numbers required to specify the distribution, i.e. the numbers in the support of the distribution and the probabilities assigned to them.

\item {\bf As an Oracle:} In this case, we are given (potentially black-box) access to a subroutine, called an {\em oracle}, that answers queries about the value of the cumulative distribution function on a queried point. In particular, a query to the oracle consists of a point $x$ and a precision $\epsilon$, and the oracle outputs a value of bit-complexity polynomial in the bit-complexity of $x$ and $\epsilon$, which is within $\epsilon$ from the value of the cumulative distribution function at point $x$. Moreover, we assume that we are given an {\em anchoring point} $x^*$ such that the value of the cumulative distribution at that point is between two a priori known absolute constants $c_1$ and $c_2$, such that $0<c_1<c_2<1$. Having such a point is necessary, as otherwise it would be computationally impossible to find any interesting point  in the support of the distribution (i.e. any point where the cumulative is different than~$0$~or~$1$). 

If a distribution is provided to an algorithm as an oracle, the algorithm is {computationally} efficient if it runs in time polynomial in its other inputs and the bit complexity of $x^*$, ignoring the time spent by the oracle to answer queries (since this is not under the algorithm's control). 

If, as it so happens in practice, we have a closed-form description of our input distribution, e.g. if our distribution is ${\cal N}(\mu, \sigma^2)$, we think of it as given to us as an oracle, answering queries of the form $(x,\epsilon)$ as specified above. In most common cases, such an oracle can be implemented so that it also runs efficiently in the bit-complexity  of the query to the oracle.

\item {\bf Sample Access:} In this case, our only access to the distribution is our ability to take samples from it. It is easy to see that sample access to a distribution can be reduced to  oracle access as follows. {Suppose we have an algorithm ${\cal A}$ designed to work with oracle access to a distribution, and let $B$ be a bound on the total number of queries that the algorithm may make to the oracle. ($B$ is always upper bounded by the running time of the algorithm.) Suppose now that instead of oracle access we have sample access to the distribution. Here is how we can fix this: For any query $(x,\epsilon)$ that ${\cal A}$ needs to make to the oracle, we can simply take ${1 \over 2\epsilon^2} \ln ({ 2B \over \delta})$ samples from the distribution to estimate the cumulative distribution function at $x$. By Chernoff bounds, our estimate will have error greater than $\epsilon$ with probability at most $\delta \over B$. So a union bound shows that all (at most $B$) queries of the algorithm will have error smaller than $\epsilon$ with probability at least $1-\delta$. (We can tune this probability to be as close to one as we want at a cost of a factor of $\log {1\over \delta}$ in the running time.)} It is also easy to find an anchoring point. If we take many samples from the distribution and pick the median as the anchoring point, with very high probability the value of the cumulative distribution at this point is between $1/3$ and $2/3$.

Given the above, whenever we have sample access to a distribution we will pretend to have instead oracle access to it, and we will say that an algorithm is computationally efficient using the same criterion we used for oracle access.
\end{itemize}

{\paragraph{Polynomial-Time Reductions Involving Value Distributions.} This paper provides several polynomial-time reductions among item pricing problems. Recall from Section~\ref{sec:prelim} that a reduction contains an algorithm ${\cal A}$ that takes as input an instance of the item pricing problem, comprising distributions (and sometimes a restricted set of prices), and outputs another instance of the item pricing problem, comprising potentially different distributions (and prices). But what do we mean when we say that ``an algorithm ${\cal A}$ outputs a distribution $F$?'' The algorithm may either output an explicit description of the distribution or an oracle for it.\footnote{Our reductions never output a distribution by providing sample access to it.} In the former case, ${\cal A}$ must enumerate the support of the distribution and specify the probabilities assigned to every point in the support, as required by the first bullet  above. In the latter case, ${\cal A}$ outputs an oracle for $F$, i.e. the description of an algorithm that satisfies the requirements of the second bullet above. This oracle may use as subroutines the oracles of the distributions provided in the input to ${\cal A}$, if any. We will then say that ``${\cal A}$ runs in polynomial time'' if two properties are satisfied: 1. ${\cal A}$'s running time is polynomial in its input; and 2. if ${\cal A}$ outputs an oracle for some distribution ${\cal F}$, this oracle must run in time polynomial in the description of the oracle and the input $(x,\epsilon)$ to the oracle, excluding the time spent in oracles (from the input to ${\cal A}$) that the oracle may use as subroutines.} \notshow{\yangnote{\footnote{\yangnote{Yang says: I think we can also point out that our reductions almost always preserve the way to access the distributions, except when we solve {\scshape RestrictedPrice}, our reduction outputs an explicit distribution by rounding the probabilities. }}}}

\section{The Algorithm for Discrete Distributions} \label{sec:algorithm analysis}


\subsection{The Generic DP Step: Add an Item and Discretize Probabilities} \label{sec:DP step}

In Section~\ref{sec:ptas for discretized problem}, we described our intended meaning for the Boolean function $g(i,{\Pr})$. Here we explain how to compute $g$ using dynamic programming. Our algorithm works bottom-up (i.e. from smaller to larger $i$'s), filling in $g$'s table so that the following recursive conditions are met. 

\smallskip\noindent$\bullet$ If $i>1$, we set $g(i,{\Pr})=1$ iff there is a price $p^{(j)}$ and a distribution ${\Pr}'$ so that the following hold:
\begin{enumerate}
\item $g\left(i-1,{\Pr}'\right)=1$.
    \item Suppose that {$P_{i-1}$} is the price vector stored at cell $(i-1,\Pr')$ of the table, namely that under price vector $P_{i-1}$ the (winning-value, winning-price) distribution for the prefix $1\ldots i-1$ of the items is ${\Pr}'$. What would happen if we assigned price $p^{(j)}$ to the $i$-th item? {If the gap between the winning-value and winning-price among the first $i-1$ items is larger than the gap between the value and price for the i-th item, the winning-value and winning-price would remain the same. Otherwise, they will become the value and price for the i-th item. Based on this}, we can compute the resulting (winning-value, winning-price) distribution $\{{\Pr}''_{i_1,i_2}\}_{i_1\in[k_1],\ i_2\in [k_2]}$ for the prefix $1\ldots i$ from just ${\Pr}'_{i_1,i_2}$ and the distribution $\hat{F}_i$ of item $i$. Indeed:
    \begin{align}&{\Pr}''_{i_1,i_2} = {\Pr}'_{i_1,i_2}\cdot \Pr_{v_i \sim \hat{F}_i}[v_i-p^{(j)}<v^{(i_1)}-p^{(i_2)}]\notag\\&~~~~~~~~~~~~~~~~~~+\left(\sum_{
    \begin{subarray}{c}
         j_1,j_2\\
          s.t.\ v^{(j_1)}-p^{(j_2)}\\
          \quad \leq v^{(i_1)}-p^{(i_2)}
    \end{subarray}} {\Pr}'_{j_1,j_2}\right) \cdot \Pr_{v_i \sim \hat{F}_i}[v_i = v^{(i_1)}] \cdot \ind_{p^{(j)}=p^{(i_2)}}. \label{eq: DP step}
    \end{align}   
    We require that ${\Pr}$ is a rounded version of ${\Pr}''$ computed as above, where all the probabilities are integer multiples of $1\over m$. The rounding should be of the following canonical form. Setting $\delta_{i_1,i_2} = {\Pr}''_{i_1,i_2}-\left\lfloor\frac{{\Pr}''_{i_1,i_2}}{1/ m}\right\rfloor\cdot \frac{1}{m}$, and $l = \Big{(}\sum_{i_1\in[k_1],\ i_2\in[k_2]}\delta_{i_1,i_2}\Big{)}\Big{/}({1\over m}),$ we will round the first $ l$ probabilities in $\{{\Pr}''_{i_1,i_2}\}_{i_1\in[k_1],[i_2]\in k_2}$ in some fixed lexicographic order up to the closest multiple of $1\over m$, and round the rest down to the closest multiple of $1\over m$.{\footnote{{Any rounding would work. We use this one just to make the description of our algorithm explicit.}}}  
\end{enumerate}
If Conditions 1 and 2 are met, we also store price vector $(P_{i-1},p^{(j)})$ in cell $g(i,{\Pr})$ of the table.\notshow{we also keep a pointer from cell $g(i,{\Pr})$ to cell $g(i-1,{\Pr}')$ of the DP table recording the price $p^{(j_2)}$ on that pointer. We use these pointers  to recover price vectors consistent with a certain distribution.}

\medskip \noindent $\bullet$ To fill in the first slice of the table corresponding to $i=1$, we use the same recursive definition given above, imagining that there is a slice $i=0$, whose cells are all $0$ except for those corresponding to the distributions ${\Pr}$ that satisfy: ${\Pr}_{i_1,i_2}=0$, for all $i_1$, $i_2$, except for the lexicographically smallest $(i_1^*,i_2^*) \in \arg\min_{(k_1,k_2)}\  v^{(k_1)}-p^{(k_2)}$, where ${\Pr}_{i_1^*,i_2^*}=1$. 

\medskip \noindent While we decribed the function $g$ recursively above, we compute it iteratively from $i=1$ through~$n$.

\subsection{Proof of Theorem~\ref{lem:dynamic programming revenue}}\label{sec:correctness of DP}
In this appendix, we prove the correctness and running time of the algorithm presented in Section~\ref{sec:ptas for discretized problem}, providing a proof of Theorem~\ref{lem:dynamic programming revenue}. Intuitively, if we did not perform any rounding of distributions, our algorithm would have been  {\em exact}, outputting an optimal price vector in $\{p^{(1)},\ldots,p^{(k_2)}\}^n$. We show next that the rounding is fine enough that it does not become detrimental to our revenue. To show this, we use the probabilistic concepts of {\em total variation distance} and {\em coupling of random variables}. Recall that the {total variation distance} between two distributions $\mathbb{P}$ and $\mathbb{Q}$ over a finite set $\mathcal{A}$ is defined as follows 
$$||\mathbb{P}-\mathbb{Q}||_{TV}=\frac{1}{2}\sum_{\alpha\in\mathcal{A}}|\mathbb{P}(\alpha)-\mathbb{Q(\alpha)}|.$$ Similarly, if $X$ and $Y$ are two random variables ranging over a finite set, their total variation distance, denoted $||X-Y||_{TV}$ is defined as the total variation distance between their distributions.

Proceeding to the correctness of our algorithm, let $P = (p_1,p_2,\cdots, p_n) \in \{p^{(1)},\ldots,p^{(k_2)}\}^n$ be an arbitrary price vector. We can use this price vector to select $n$ cells of our dynamic programming table, picking one cell per layer. The cells are those that the algorithm would have traversed if it made the decision of assigning price $p_i$ to item $i$, for all $i$. Let us call the resulting cells $cell_1,cell_2, \ldots, cell_n$.

For all $i$, we intend to compare the distributions $\left\{\widehat{\Pr}^{(i)}_{i_1,i_2}\right\}_{i_{1}\in[k_{1}],\ i_{2}\in[k_{2}]}$ and $\left\{\Pr^{(i)}_{i_1,i_2}\right\}_{i_{1}\in[k_{1}],\ i_{2}\in[k_{2}]}$, which are respectively the (winning-value,winning-price) distribution:
\begin{itemize}

\item arising when the prefix $1\ldots i$ of items with distributions $\{\hat{F}_j\}_{j=1,\ldots,i}$ is priced according to price vector $(p_1,\ldots,p_i)$; 
\item stored in $cell_i$ of the DP table.
\end{itemize}
The following lemma shows that these distributions have small total variation distance. 

{
\begin{lemma}\label{lem:closeness}
  For all $i\in[n]$, $||\Pr^{(i)}-\widehat{\Pr}^{(i)}||_{TV}\leq ik_1k_2/m.$
\end{lemma}
\begin{proof}
At a high level, our argument shows two properties for every $i$: (1) if rounding was not performed at step $i$ of the DP algorithm, the distance between $\widehat{\Pr}^{(i)}$ and $\Pr^{(i)}$ would not increase  compared to the distance between $\widehat{\Pr}^{(i-1)}$ and $\Pr^{(i-1)}$; (2) after the rounding is performed the distance increases by at most  $k_{1}k_{2}/m$. Combining the two properties, we can prove the lemma.

\smallskip Formally, we prove the lemma by induction. The base case is trivially true as ${\Pr}^{(1)}$ is just a rounding of $\widehat{\Pr}^{(1)}$  into probabilities that are multiples of $1\over m$, whereby the probability of every point in the support is modified by no more than an additive $1\over m$. 

We proceed to show the inductive step. For convenience, for all $i$, let $X_{i}$ be a random variable distributed according to $\Pr^{(i)}$, i.e. $\Pr[X_{i}=(v^{(i_{1})},p^{(i_{2})})]=\Pr^{(i)}_{i_{1},i_{2}}$ for all $i_1,i_2$, and let $\hat{X}_{i}$ be a random variable distributed according to $\widehat{\Pr}^{(i)}$.

Now suppose that the claim is true for $i$. We want to show that it holds for $i+1$. For this purpose we define an auxiliary random variable $Z_{i+1}$. $Z_{i+1}$ is a function of the random variable ${X}_{i}$ and an independent random variable $\hat{v}_{i+1}$ distributed according to $\hat{F}_{i+1}$. If $\hat{v}_{i+1}-p_{i+1} \ge X_i(1)-X_i(2)$, we set $Z_{i+1}=(\hat{v}_{i+1},p_{i+1})$, otherwise we set $Z_{i+1}={X}_{i}$. Clearly, if  we replaced $X_{i}$ by $\hat{X}_{i}$ in this definition, we would get a random variable with the same distribution as $\hat{X}_{i+1}$. 

Now consider the following coupling of $\hat{X}_{i+1}$ and $Z_{i+1}$. Use the optimal coupling of $\hat{X}_{i}$ and $X_{i}$. Then generate both $\hat{X}_{i+1}$ and $Z_{i+1}$ using the above procedure with the same sample for $\hat{v}_{i+1}$. It is clear then that, conditioning on $X_{i}=\hat{X}_{i}$, $\hat{X}_{i+1}=Z_{i+1}$ with probability $1$. So 
\begin{align}
||\hat{X}_{i+1}-Z_{i+1}||_{TV}\leq \Pr[\hat{X}_{i+1}\neq Z_{i+1}]\leq \Pr[X_{i}\neq \hat{X}_{i}]=||X_{i}-\hat{X}_{i}||_{TV}, \label{eq:coupling inequality 1}
\end{align}
where the first inequality is true under any coupling, the second inequality is true for our particular coupling, and the last equality is true because we assumed an optimal coupling of $X_i$ and $\hat{X}_i$.

On the other hand, we know that, if we round the distribution of $Z_{i+1}$ into integer multiples of $1/m$, we will get the distribution of ${X}_{i+1}$. Therefore, 
\begin{align}
||Z_{i+1}-{X}_{i+1}||_{TV}\leq k_{1}k_{2}/m   \label{eq:coupling inequality 2}
\end{align}

Combining~\eqref{eq:coupling inequality 1} and~\eqref{eq:coupling inequality 2}, the triangle inequality implies that $||X_{i+1}-\hat{X}_{i+1}||_{TV}\leq ||X_{i}-\hat{X}_{i}||_{TV}+k_{1}k_{2}/m$, which completes the inductive step.
\end{proof}

\notshow{\begin{lemma}\label{lem:closeness}
   Let $\{X_i\}_{i\in[n]}$ and $\{\hat{X}_i\}_{i\in[n]}$ be two collections of $(k_1k_2)$-dimensional random unit vectors defined in terms of $\left\{\Pr^{(i)}\right\}_i$ and $\left\{\widehat{\Pr}^{(i)}\right\}_i$ as follows: for all $i\in[n]$, $i_1 \in [k_1]$, $i_2 \in [k_2]$, and $\ell = (i_1-1)\cdot k_2 + i_2\in[k_1k_2]$, we set $\Pr[X_i = e_{\ell}] = \Pr^{(i)}_{i_1,i_2}$ and $\Pr[\hat{X}_i = e_{\ell}] = \widehat{\Pr}^{(i)}_{i_1,i_2}$, where $e_{\ell}$ is the unit vector along dimension $\ell$. 
   
   Then, for all $i$, $$||X_i-\hat{X}_i||_{TV}\leq nk_1k_2/m.$$
\end{lemma}

\begin{proof}
We prove this by induction. For the base case ($i=1$), observe that $||X_1-\hat{X}_1||_{TV}\leq k_1k_2/m$, because $\widehat{\Pr}^{(1)}$ is just a rounding of ${\Pr}^{(1)}$ into probabilities that are multiples of $1\over m$, whereby the probability of every point in the support  is not modified by more than an additive $1\over m$. 

For the inductive step, it suffices to argue that for all $i\in[n-1]$, $$||X_{i+1}-\hat{X}_{i+1}||_{TV}-||X_i-\hat{X}_i||_{TV}\leq k_1k_2/m.$$ To show this, we are going to consider two auxiliary random variables $Y_{i+1}$ and $Z_{i+1}$:
\begin{itemize} 
\item $Y_{i+1}$ is a $(k_1k_2)$-dimensional random unit vector defined as follows: for $i_2^*\in[k_{2}]$ such that $p^{(i_2^*)}=p_{i+1}$, if $\ell = (i_1-1)\cdot k_2 + i_2^*$, then \yangnote{$\Pr[Y_{i+1} = e_{\ell}] = \hat{F}_{i+1}(v^{(i_1)})-\hat{F}_{i+1}(v^{(i_1-1)})$}, otherwise $\Pr[Y_{i+1} = e_{\ell}]=0$.~\footnote{In other words, $Y_{i+1}$ is the random unit vector analog of $\hat{F}_{i+1}$.}

\item $Z_{i+1}$ is a $(k_1k_2)$-dimensional random unit vector defined as follows
\begin{align}\Pr[Z_{i+1} = e_{(i_1-1)\cdot k_2 +i_2}] = \sum_{\begin{subarray}{c}
    j_1,j_2\\ 
    s.t.\ v^{(j_1)}-p^{(j_2)}\\
    \quad\leq v^{(i_1)}-p^{(i_2)}
    \end{subarray}}\Pr[\hat{X}_i+Y_{i+1} = e_{(i_1-1)\cdot k_2 +i_2}+e_{(j_1-1)\cdot k_2 +j_2}], \label{eq:lala1}
    \end{align}
    where for the purposes of the above definition $\hat{X}_i$ and $Y_{i+1}$ are taken to be independent.~\footnote{$Z_{i+1}$ is the random unit vector analog of the (winning-value,winning-price) distribution for the prefix $1\ldots i+1$, if item $i+1$ is assigned price $p_{i+1}$ and the (winning-value,winning-price) distribution for the prefix $1\ldots i$ is $\widehat{\Pr}_i$.}
  
  \end{itemize}
  
  \smallskip We claim that 
  \begin{align}
  ||X_{i+1}-Z_{i+1}||_{TV}\leq ||(X_i+Y_{i+1})-(\hat{X}_i+Y_{i+1})||_{TV}. \label{eq:kostasss}
  \end{align} Indeed, we can define $X_{i+1}$ as follows: for all $i_1$, $i_2$, \begin{align}\Pr[X_{i+1} = e_{(i_1-1)\cdot k_2 +i_2}] = \sum_{\begin{subarray}{c}
  j_1,j_2\\ 
  s.t.\ v^{(j_1)}-p^{(j_2)}\\
  \quad\leq v^{(i_1)}-p^{(i_2)}
  \end{subarray}}\Pr[X_i+Y_{i+1} = e_{(i_1-1)\cdot k_2 +i_2}+e_{(j_1-1)\cdot k_2 +j_2}],\label{eq:lala2}
  \end{align}
    where for the purposes of the above definition ${X}_i$ and $Y_{i+1}$ are taken to be independent. Now \eqref{eq:lala1}, \eqref{eq:lala2} and the definition of the total variation distance imply~\eqref{eq:kostasss}.
    
    To finish our proof, suppose further that we couple $X_i$ and $\hat{X}_i$ optimally. By the {\em optimal coupling theorem} our joint distribution satisfies  $\Pr[X_i\neq\hat{X}_i]= ||X_i-\hat{X}_i||_{TV}$. Defining $Y_{i+1}$ in the same space as $X_i$ and $\hat{X}_i$ so that $Y_{i+1}$ is independent from both $X_i$ and $\hat{X}_i$, the {\em coupling lemma} implies:
\begin{align*}
     ||(X_i+Y_{i+1})-(\hat{X}_i+Y_{i+1})||_{TV} &\leq \Pr[(X_i+Y_{i+1})\neq(\hat{X}_i+Y_{i+1})]\\
     &\leq \Pr[X_i\neq\hat{X}_i]\\
     &= ||X_i-\hat{X}_i||_{TV} \qquad \text{(since\ $X_i$ and $\hat{X}_i$ are optimally coupled)}
\end{align*}
\noindent On the other hand, it is easy to see that $||Z_{i+1}-\hat{X}_{i+1}||_{TV}\leq k_1k_2/m$. Indeed:
\begin{itemize}
\item $Z_{i+1}$ is the random unit vector analog of the (winning-value,winning-price) distribution for the prefix $1\ldots i+1$ of items, if item $i+1$ is assigned price $p_{i+1}$ and the (winning-value,winning-price) distribution for the prefix $1\ldots i$ is $\widehat{\Pr}_i$;

\item $\hat{X}_{i+1}$ is the random unit vector analog of the same (winning-value,winning-price) distribution as above, except after rounding that distribution according to the rounding rule used in our dynamic program;

\item  the rounding changes every probability in the support by at most an additive $1/m$. 
\end{itemize}
\medskip Combining the above and using the triangle inequality, we obtain $$||X_{i+1}-\hat{X}_{i+1}||_{TV}\leq ||Z_{i+1}-\hat{X}_{i+1}||_{TV}+||X_{i+1}-Z_{i+1}||_{TV}\leq k_1k_2/m+||X_i-\hat{X}_i||_{TV}.$$
\end{proof}}

\begin{prevproof}{Theorem}{lem:dynamic programming revenue}
{\bf Correctness:} Let $P^*$ be an optimal price vector for the instance of {\sc RestrictedPrice} resulting after the reduction of Lemma~\ref{lem:verticaldiscretization} is applied to discretize the $F_i$'s into $\hat{F}_i$'s. Let $cell^*$ be the cell at layer $n$ of the DP table corresponding to the price vector $P^*$. 
Lemma~\ref{lem:closeness} implies that
$$\sum_{i_1\in[k_1],\ i_2\in[k_2]} |{\Pr}^{(n)}_{i_1,i_2}-\widehat{\Pr}^{(n)}_{i_1,i_2}|\leq nk_1k_2/m,$$
where $\widehat{\Pr}^{(n)}$ is the true (winning-value,winning-price) distribution corresponding to price vector $P^*$ and ${\Pr}^{(n)}$ is the distribution stored in cell $cell^*$. Clearly, the expected revenues $\mathcal{R}_{P^*}$ and ${\cal R}_{cell^*}$ from these two distributions are related, as follows
$$|\mathcal{R}_{P^*}-{\cal R}_{cell^*}| \leq \sum_{i_1\in[k_1],\ i_2\in[k_2]} |{\Pr}^{(n)}_{i_1,i_2}-\widehat{\Pr}^{(n)}_{i_1,i_2}|\cdot p^{(i_2)}\leq {nk_1k_2\over m} \cdot \max_i\{p^{(i)}\}.
$$

Now let $cell'$ be the cell at layer $n$ of the DP table that has the highest expected revenue, and let{ $P'$ be the price vector stored in $cell'$.} Using the same notation as above, call ${\cal R}_{cell'}$ the revenue from the distribution stored at $cell'$ and $\mathcal{R}_{P'}$ the revenue from price vector $P'$. Then we have the following:
\begin{align}
{\cal R}_{cell'} &\ge {\cal R}_{cell^*};~~~~~~~~~~~~~~~~~~~~~~\text{(by the optimality of $cell'$)}\\
|\mathcal{R}_{P'}-{\cal R}_{cell'}|&\leq {n k_{1}k_{2}\over  m}\cdot \max_i\{p^{(i)}\}.~~~~~~~~~~\text{(using Lemma ~\ref{lem:closeness}, as above)}
\end{align}

Putting all the above together, we obtain that  
\begin{align}
\mathcal{R}_{P'} \ge \mathcal{R}_{P^*}-{2nk_{1}k_{2}\over  m}\cdot \max_i\{p^{(i)}\}. \label{eq:ah costa costa}
\end{align}

Hence, the price vector $P'$ output by our algorithm achieves revenue $\mathcal{R}_{P'}$ that is close to the optimal revenue $\mathcal{R}_{P^*}$ for the discretized distributions $\{\hat{F}_i\}_i$. We now have to relate this revenue to the optimal revenue for the distributions $\{F_i\}_i$.
So let us define the following quantities:
\begin{itemize}
\item ${\cal R}(P^*)$: the revenue achieved by price vector $P^*$ in the original instance $\{F_i\}_i$;
\item ${\cal R}(P')$: the revenue achieved by price vector $P'$ in the original instance $\{F_i\}_i$.
\end{itemize}
Using Lemma~\ref{lem:verticaldiscretization} we easily see the following: 
\begin{itemize}
\item ${\cal R}(P^*) \ge OPT-\frac{8k_{1}n}{m}\cdot \max_i\{p^{(i)}\}$;
\item ${\cal R}(P') \ge {\cal R}_{P'}-\frac{4k_{1}n}{m}\cdot \max_i\{p^{(i)}\}; \text{and}$
\item $ {\cal R}_{P^*} \ge {\cal R}(P^*) -\frac{4k_{1}n}{m}\cdot \max_i\{p^{(i)}\}.$
\end{itemize}
Combining these with~\eqref{eq:ah costa costa}, we get
\begin{align*}
{\cal R}(P') &\ge \mathcal{R}_{P^*}-{2nk_{1}k_{2}\over  m}\cdot \max_i\{p^{(i)}\}-\frac{4k_{1}n}{m}\cdot \max_i\{p^{(i)}\}\\
&\ge {\cal R}(P^*)-{2nk_{1}k_{2}\over  m}\cdot \max_i\{p^{(i)}\}-\frac{8k_{1}n}{m}\cdot \max_i\{p^{(i)}\}\\
&\ge OPT-{(2nk_{1}k_{2} + 16k_{1}n)\over  m}\cdot \max_i\{p^{(i)}\}.
\end{align*}
%
%
%

\noindent {\bf Running Time:} Recall that both the support ${\cal S}=\{v^{(1)},v^{(2)},\ldots,v^{(k_1)}\}$  of the value distributions and the set ${\cal P}:=\left\{p^{(1)},\ldots,p^{(k_2)}\right\}$ of prices  are explicitly part of the input to our algorithm.

Given this, the reduction of Lemma~\ref{lem:verticaldiscretization} (used as the first step of our algorithm) takes time polynomial in the size of the input and $\log m$. After this reduction is carried out, the value distributions $\{\hat{F}_i\}_i$ that are provided as input to the dynamic programming algorithm are known explicitly and the probabilities they assign to every point in $\cal S$ are integer multiples of $1 \over m$.

We proceed to bound the running time of the dynamic programming algorithm. First, it is easy to see that its table has  at most { $n \times (m+1)^{k_1k_2}$} cells, since there are $n$ possible choices for $i$ and {$m+1$} possible values for each $\Pr_{i_1,i_2}$. Our DP computation proceeds iteratively from layer $i=1$ to layer $i=n$ of the table. For every cell of layer $i$, there are at most $k_{2}$ different prices we can assign to the next item $i+1$. {For every such price we need to compute a distribution using Eq.~\eqref{eq: DP step} and then round that distribution.} Hence, the total work we need to do per cell of layer $i$ is polynomial in the input size and $\log m$, since our computation involves probabilities that are integer multiples of~$1\over m$. {Indeed the probability distributions maintained in the DP table use probabilities that are integer multiples of { $1\over m$}, and recall that the distributions $\hat{F}_i$ also use probabilities in multiples of~{$1\over m$}.}  Hence, the total time we need to spend to fill up the whole table is polynomial in the size of the input and  {$m^{k_1k_2}$}. In the last phase of the algorithm, we exhaustively search for the cell of layer $n$ with the highest expected revenue. This costs time polynomial in the size of the input and $m^{k_1k_2}$, 
since there are {$m^{O(k_1k_2)}$} cells at layer $n$, and the expected revenue computation for each cell can be done in time polynomial in the input size and $\log m$.
%
Overall, the running time of the algorithm is polynomial in the size of the input and $m^{k_1k_2}$.
\end{prevproof}

\section{Our Discretization Results} \label{appendix:balanced range}

This appendix provides various reductions among item pricing problems. All reductions discretize some aspect of a given item pricing problem. This could be the set of allowable prices, the support of the value distributions, or the probabilities assigned by these distributions to the points in their support. We will bound the loss in approximation resulting from each reduction. This bound is useful in telling us how much revenue we are losing if we solve the discretized problem instead of the given problem.

\subsection{Discretization of Probabilities} \label{sec:vertical discretizing values}

The following lemma allows us to discretize the probabilities assigned by value distributions supported on a discrete set to points in their support. 

%

\begin{lemma}[Probability Discretization]\label{lem:verticaldiscretization}
    Suppose we are given a collection of mutually independent random variables $\{v_i\}_{i\in[n]}$ supported on a discrete set $S=\{s_1,\ldots,s_k \} \subset \mathbb{R}_{\ge 0}$, an interval $[p_{min},p_{max}] \subset \mathbb{R}_{\ge 0}$ of possible prices, and an integer $m \ge 2 k$.\footnote{It is assumed that $S, p_{min}, p_{max}$ and $m$ are given explicitly. We may have any access to the value distributions (as discussed in Appendix~\ref{sec:model}).} In polynomial-time we can construct another collection of mutually independent random variables $\{v'_i\}_{i\in[n]}$ whose distributions are supported on the same set $S$ but only use probabilities that are integer multiples of $1/m$. The distributions of the $v_i'$'s are computed explicitly. Moreover, for any price vector $P\in[p_{min},p_{max}]^n$, the difference in expected revenue from the two collections of random variables is upper bounded $\frac{4kn}{m}\cdot p_{max}$.\end{lemma}

\begin{prevproof}{Lemma}{lem:verticaldiscretization}
If we know the distributions of the $v_i$'s explicitly, then, for all $i$, we construct the distribution of $v'_i$ as follows. Let $\pi_{s_j}= \Pr[v_i = s_j]$ and $\pi'_{s_{j}}=\Pr[v'_{i}=s_{j}]$. For all $j\geq 2$, round all $\pi_{s_{j}}$ down to the nearest integer multiple of $1/m$ to get $\pi'_{s_{j}}$. We then round $\pi_{s_{1}}$ up to get $\pi'_{s_{1}}$ to guarantee that $\pi'$ is still a distribution. We use $\delta_{s_j}$ to denote the rounding error at $s_j$.
    
    As the total variation distance between the distribution of $v_i$ and $v_i'$ is $\frac{1}{2}\sum_{j=1}^m \delta_{s_j}\leq \frac{k}{m}$, we can couple $v_{i}$ and $v'_{i}$ so that $\Pr[v_i\neq v'_i]\leq  \frac{k}{m}.$ Now taking a union bound over all $i$, the probability that the vector $v=(v_1,v_2,\ldots,v_n)$ is different from $v'=(v'_1,v'_2,\ldots,v'_n)$ is at most {$\frac{kn}{m}$}. In other words, with probability at least $1-\frac{kn}{m}$, $v=v'$. Clearly, for all draws from the distribution such that $v=v'$, the revenues are the same. When $v\neq v'$, the difference between the revenues is at most $p_{max}$, since $P\in[p_{min},p_{max}]^n$. And this only happens with probability at most {$\frac{kn}{m}$}. Therefore, the difference between the expected revenues under the two distributions should be no greater than $\frac{kn}{m}\cdot p_{max}$.  
    

    Clearly, we can compute the distributions of the $v'_i$'s in time polynomial in $n$, $k$, $\log m$ and the description complexity of the distributions of the variables $v_i$'s, if these distributions are given to us explicitly. If we have oracle access to the distributions of the $v_i$'s we can query our oracle with high enough precision, say {$1/m$}, to obtain a function $g_i:S \rightarrow [0,1]$ that satisfies {$\sum_{x \in S}g_i(x)=1\pm {k \over m}$}. Using the normalized $g_i$ as a proxy for the distribution of $v_i$ we can follow the algorithm outlined above to define the distribution of $v'_i$.
%
    It is not hard to argue that the total variation distance between $v_i$ and $v'_i$ can be bounded by {${4 k \over m}.$} Hence, we can couple $v_i$ and $v_i'$ so that {$\Pr[v_i\neq v'_i] \le {4 k \over m}$} and proceed as above.
\end{prevproof}



\subsection{Discretization of Prices} \label{sec:discretizing prices}

In this appendix, we present several easy lemmas that can be used to restrict the search space for a (nearly-)optimal price vector. First, it is staightforward to see that, when the value distributions are supported in some range $[u_{min}, u_{max}]$, it is sufficient to only consider prices that lie in the same range, without any sacrifice in revenue.

\begin{lemma}[Price Restriction]\label{lem:betweenminmax}
In an instance of the item pricing problem, suppose that the values are independently distributed in some range $[u_{min},u_{max}]$. Let also $P=(p_1,\ldots,p_n)$ be an arbitrary price vector, and suppose that we modify $P$ into a new price vector $P'$ as follows: for all $i$, set $p'_i = u_{max}$, if $p_i>u_{max}$; set $p'_i=u_{min}$, if $p_i<u_{min}$; otherwise set $p'_i=p_i$. The expected revenues $\mathcal{R}_{P}$ and $\mathcal{R}_{P'}$ achieved by the price vectors $P$ and $P'$ respectively satisfy $\mathcal{R}_{P'}\geq\mathcal{R}_{P}$.
\end{lemma}

\begin{prevproof}{Lemma}{lem:betweenminmax}
Let us do the modification in two steps. We first increase the prices that are below $u_{min}$ to $u_{min}$, and then decrease the prices that are above $u_{max}$ to $u_{max}$. We will show that each step will not decrease the expected revenue.

Let us increase the low prices first, and call $P''$ the resulting price vector. For every sample $(v_1,\ldots,v_n)$ from the value distributions, {if the buyer makes the same decision under $P$ and $P''$, his price under $P''$ is at least as high as under $P$. If the buyer makes a different decision under $P$ and $P''$, it must be that, under $P$, the buyer is buying an item priced less than $u_{min}$ and, because the price of that item was increased to $u_{min}$ in $P''$, the buyer prefers to buy a different item. In this case, the buyer was paying less than $u_{min}$ under $P$ and is paying at least $u_{min}$ under $P''$.} \notshow{let us compare the price paid by the buyer under the original vector $P$ and the vector $P''$, where the prices below $u_{min}$ have been increased to $u_{min}$. First, if the buyer is buying no item under $P''$, then he is also not buying any item under $P$, as boosting prices that are below $u_{min}$ to $u_{min}$ cannot turn any item's value-minus-price gap negative if it was not already negative. Now, if the buyer buys the same item under $P$ and $P''$, the price he pays under $P''$ is at least as high since $P''$ is coordinate-wise at least as large as $P$. If the buyer buys a different item under $P$ and $P''$, it must be that, under $P$, the buyer is buying an item priced less than $u_{min}$ and, because the price of that item was increased to $u_{min}$, the buyer prefers to buy a different item. Still the buyer was paying less than $u_{min}$ under $P$ and is paying at least $u_{min}$ under $P''$.}

Now let us decrease the prices in $P''$ that are above $u_{max}$ to $u_{max}$ to obtain $P'$, and let us compare the price that a buyer will pay under these two price vectors. Whenever the buyer does not buy anything under $P'$, he is also not buying anything under $P''$, as the items under $P'$ are at least as cheap. Also notice that the only items whose value-minus-price gap is different under $P''$ and $P'$ are those that are priced above $u_{max}$ in $P''$ and $u_{max}$ in $P'$, and their gap increases in $P'$. So if the buyer buys different items under $P''$ and $P'$, then the buyer is paying $u_{max}$ under $P'$ and at most $u_{max}$ under $P''$.\end{prevproof}

\notshow{\begin{prevproof}{Lemma}{lem:betweenminmax}
    The proof follows easily by combining Lemmas~\ref{lem:belowmax} and \ref{lem:abovemin} below.
    \begin{lemma}\label{lem:belowmax}
    Let $P$ be a price vector and let $S_{over}= \{i: p_i>u_{max}\}$. We define $P'$ to be the following price vector: $p'_i = u_{max}$, for all $i\in S_{over}$, and $p'_i=p_i$ otherwise. We have $\mathcal{R}_{P'}\geq\mathcal{R}_{P}$.
\end{lemma}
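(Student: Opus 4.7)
The plan is to prove the lemma by constructing an intermediate price vector and applying Lemma~\ref{lem:decreaseprice}, which was already established in the paper. Define an intermediate vector $P''$ by setting $p''_i = +\infty$ for $i \in S_{over}$ and $p''_i = p_i$ for $i \notin S_{over}$. I will argue $\mathcal{R}_P = \mathcal{R}_{P''}$ and then $\mathcal{R}_{P'} \geq \mathcal{R}_{P''}$.

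For the first equality, the key observation is that items in $S_{over}$ can never be sold under $P$: for every realization $(v_1,\ldots,v_n)$ and every $i \in S_{over}$, we have $v_i \leq u_{max} < p_i$, so $v_i - p_i < 0$ and item $i$ cannot be the winner. Hence moving the prices of items in $S_{over}$ from $p_i$ to $+\infty$ does not alter any winner or any payment pointwise on the sample space, so $\mathcal{R}_P = \mathcal{R}_{P''}$.

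For the inequality $\mathcal{R}_{P'} \geq \mathcal{R}_{P''}$, note that $P''$ has $S_\infty = S_{over}$ (the coordinates at $+\infty$), and that for every $i \notin S_{over}$ we have $p''_i = p_i \leq u_{max}$ by definition of $S_{over}$. Hence $u_{max}$ is a valid choice of $p_{max}$ in the statement of Lemma~\ref{lem:decreaseprice}, since $u_{max} \geq \max_{i \in [n] \setminus S_\infty} p''_i$. Replacing the $+\infty$ prices of $P''$ by $u_{max}$ yields exactly our $P'$, so the lemma gives $\mathcal{R}_{P'} \geq \mathcal{R}_{P''}$. Combining the two steps yields $\mathcal{R}_{P'} \geq \mathcal{R}_P$, which is the desired conclusion.

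There is no real technical obstacle here, since both building blocks (the pointwise irrelevance of unpurchasable items, and Lemma~\ref{lem:decreaseprice}) are already in hand; the only thing to notice is that the reformulation through $P''$ makes the hypothesis of Lemma~\ref{lem:decreaseprice} applicable, because the exclusion $p_i \leq u_{max}$ for $i \notin S_{over}$ is exactly what is needed to take $p_{max}=u_{max}$.
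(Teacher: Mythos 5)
Your proof is correct. You establish the result by factoring through the auxiliary vector $P''$ that sends each price in $S_{over}$ to $+\infty$: since an item $i\in S_{over}$ always has $v_i\le u_{max}<p_i$ and hence $v_i-p_i<0$, it can never be bought, so $\mathcal{R}_P=\mathcal{R}_{P''}$ pointwise; then Lemma~\ref{lem:decreaseprice} with $p_{max}=u_{max}$ (valid because $p_i\le u_{max}$ for all $i\notin S_{over}$) gives $\mathcal{R}_{P'}\ge\mathcal{R}_{P''}$.

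The paper instead proves the lemma directly by a self-contained case analysis on the winner: if the buyer purchases item $i$ under $P$, then under $P'$ either $i$ is still the winner (same revenue) or the new winner $j$ must lie in $S_{over}$, and $p'_j=u_{max}\ge p_i$ because a winner under $P$ necessarily has $p_i\le u_{max}$. This is essentially the same pointwise argument that underlies Lemma~\ref{lem:decreaseprice}, so the two proofs rest on identical ideas; your version is slightly more modular in that it reuses the earlier lemma rather than re-running the case analysis, at the small cost of having to check that $u_{max}$ qualifies as a valid $p_{max}$. Either route is fine.
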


\begin{proof}
    Observe that, for any valuation vector $(v_1,v_2,\ldots,v_n)$ where no item has value larger than or equal to its price in $P$, the revenue is $0$. If that's the case, the revenue under $P'$ can be no worse than $0$. So we only consider valuation vectors where some item $i$ is sold under $P$. Then it must be that  $v_i-p_i\geq 0$ and, for all $j$, $v_i-p_i\geq v_j-p_j$. Under price vector $P'$, there are two possibilities: (1) $i$ is still the item sold; in this case the revenue is the same under $P$ and $P'$; and (2) $i$ is not the winner anymore; the new winner must be among those items whose price was decreased going from $P$ to $P'$, i.e. some item in $j \in S_{over}$. Observe that $p'_j = u_{max}\geq p_i$ (since for $i$ to be a winner under $P$, $p_i$ must be no greater than $u_{max}$, as otherwise $v_i-p_i<0$). Hence, under $P'$ an item $j$ that is at least as expensive as $i$ is sold. Given that the above is true point-wise, $\mathcal{R}_{P'}\geq\mathcal{R}_{P}$. 
\end{proof}

\begin{lemma}\label{lem:abovemin}
    Let $P$ be a price vector and let $S_{below}= \{i: p_i<u_{min}\}$. We define $P'$ to be the following price vector: $p'_i = u_{min}$, for all $i\in S_{below}$, and $p'_i=p_i$ otherwise. We have $\mathcal{R}_{P'}\geq\mathcal{R}_{P}$.
\end{lemma}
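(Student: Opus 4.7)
The plan is to establish the inequality $\mathcal{R}_{P'}(v)\ge \mathcal{R}_P(v)$ pointwise for every valuation realization $v=(v_1,\ldots,v_n)\in[u_{min},u_{max}]^n$, and then take expectations. If no item is sold under $P$, the revenue is $0$ and the inequality is trivial since revenue is non-negative. So I focus on realizations where some item $i^*$ is the winner under $P$, meaning $v_{i^*}-p_{i^*}\ge 0$ and $v_{i^*}-p_{i^*}\ge v_j-p_j$ for every $j$. The key structural observation is that passing from $P$ to $P'$ only raises the prices of items in $S_{below}$ (up to $u_{min}$) and leaves the others unchanged, so every gap $v_j-p_j$ either stays the same or strictly decreases, and every price under $P'$ is at least $u_{min}$.

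I will then split into two cases according to whether $i^*\in S_{below}$. In the case $i^*\notin S_{below}$, we have $p'_{i^*}=p_{i^*}$ so $i^*$'s gap is unchanged, while for any $j\in S_{below}$ the gap strictly decreases, $v_j-p'_j=v_j-u_{min}<v_j-p_j\le v_{i^*}-p'_{i^*}$, and for $j\notin S_{below}$ the gap is unchanged. Consequently $i^*$ remains a maximizer under $P'$, and the set of items tied with it under $P'$ is a subset of the items tied with it under $P$. Invoking the paper's standing assumption that the $\arg\max$ breaks ties in a consistent way (e.g.\ by a fixed priority order), the same item $i^*$ is selected under $P'$, and the revenue is unchanged: $p'_{i^*}=p_{i^*}$.

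In the case $i^*\in S_{below}$, we have $p_{i^*}<u_{min}$ and $p'_{i^*}=u_{min}$. Since the values lie in $[u_{min},u_{max}]$, the gap $v_{i^*}-p'_{i^*}=v_{i^*}-u_{min}\ge 0$ is still non-negative, so some item $j^*$ is sold under $P'$. Because every price under $P'$ is at least $u_{min}$, we conclude $p'_{j^*}\ge u_{min}>p_{i^*}$, so the revenue strictly increases. Combining the two cases yields the pointwise inequality, and taking expectations gives $\mathcal{R}_{P'}\ge \mathcal{R}_P$.

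The only delicate point in this plan is the tie-breaking in the first case: we need that the winner $i^*$ under $P$ is still the winner under $P'$ even though other items may still be tied with it. This is not automatic from an arbitrary notion of ``$\arg\max$'' but follows cleanly from the paper's assumption of a consistent tie-breaking rule, since restricting a tied set to a subset containing $i^*$ preserves the priority-order winner. No other step requires real work; everything else is a direct consequence of the monotonicity of gaps under the price modification.
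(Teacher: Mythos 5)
Your proof is correct and follows essentially the same route as the paper's: a pointwise comparison split on whether the winner under $P$ lies in $S_{below}$, observing that raising prices of other items cannot unseat a winner outside $S_{below}$, while a winner inside $S_{below}$ is replaced by some item (possibly itself) whose new price is at least $u_{min} > p_{i^*}$. The only difference is cosmetic: you make explicit the tie-breaking argument that the paper leaves implicit.
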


\begin{proof}
For any valuation vector $(v_1,v_2,\ldots,v_n)$ where no item has value larger than or equal to its price in $P$, the revenue is $0$. If that's the case, the revenue under $P'$ can be no worse than $0$. So we only consider valuation vectors where some item $i$ is sold under $P$. Suppose first that $i\notin S_{below}$. Then $i$ is still the winner under price vector $P'$, since we only increased the prices of items different than $i$ going from $P$ to $P'$. If the winner $i \in S_{below}$, there are two possibilities: (1) $i$ is still the winner; then the price paid is higher under $P'$, since $p_i'=u_{min}\geq p_i$. (2) $i$ is not the winner anymore; nevertheless, there still needs to be a winner since $v_i\ge u_{min}=p'_i$, and the price paid is at least $u_{min}\geq p_i$. Given that the above hold point-wise,  $\mathcal{R}_{P'}\geq\mathcal{R}_{P}$.\end{proof}\end{prevproof}
 }
Combining Lemma~\ref{lem:betweenminmax} with a price discretization lemma attributed to Nisan~\cite{ChawlaHK07}, we can restrict the set of prices to a set of cardinality $O({\log u_{max}/u_{min}\over \epsilon^{2}})$, if $[u_{min},u_{max}]$ is the support of the value distributions. 

\begin{lemma}[Price Discretization]\label{cor:discreteprice}
    Suppose that the value distributions in an instance of the item pricing problem are independent and supported on $[u_{min},u_{max}] \subset \mathbb{R}_+$. For any $\epsilon\in(0,1/2)$, consider the following finite set of prices: $${\cal P}_{\epsilon}=\left\{ p\ \vline\ p =\frac{1+\epsilon^2-\epsilon}{(1-\epsilon^2)^i}\cdot u_{min},\ i\in \left[ \left\lfloor \log_{\frac{1}{(1-\epsilon^2)}} (u_{max}/u_{min}) \right\rfloor \right] \right\}.$$ For any price vector $P\in[u_{min},u_{max}]^n$, there exists a price vector $P'$ such that $p'_i\in {\cal P}_{\epsilon}$ and $p'_i\in[1-\epsilon, 1+\epsilon^2-\epsilon]
    \cdot p_i$, for all $i$. The expected revenue achieved by the two price vectors satisfies $\mathcal{R}_{P'}\geq(1-2\epsilon)\mathcal{R}_P$.
\end{lemma}

\begin{prevproof}{Lemma}{cor:discreteprice}
Our proof exploits the following lemma, attributed in~\cite{ChawlaHK07} to Nisan.
\begin{lemma}\label{lem:nisan-price}
    Let $\epsilon\in(0,1)$, and let $P$, $P'$ be price vectors satisfying $p'_i\in[1-\epsilon,1+\epsilon^2-\epsilon]\cdot p_i$, for all~$i$. Then the expected revenues achieved by the two price vectors in an instance of the item pricing problem satisfy $\mathcal{R}_{P'}\geq(1-2\epsilon)\mathcal{R}_P$.
\end{lemma}
 
 To prove Lemma~\ref{cor:discreteprice}, define for every $p_i$:
    $$p'_i =\frac{1+\epsilon^2-\epsilon}{(1-\epsilon^2)^{\left\lfloor \log_{1/(1-\epsilon^2)} (p_i/u_{min})\right\rfloor }}\cdot u_{min}.$$
    Observe that $$\frac{1}{(1-\epsilon^2)^{{\lfloor{\log_{{1}/{(1-\epsilon^2)}} (p_i/u_{min})\rfloor}}}}\cdot u_{min}\in [1-\epsilon^2,1]\cdot p_i.$$
    On the other hand, $(1-\epsilon^2)(1+\epsilon^2-\epsilon)=1-\epsilon+\epsilon^3-\epsilon^4\geq 1-\epsilon$. Thus, $p'_i\in[1-\epsilon, 1+\epsilon^2-\epsilon]
    \cdot p_i$, for all $i$. Now Lemma~\ref{lem:nisan-price} implies that $\mathcal{R}_{P'}\geq(1-2\epsilon)R_P$.
\end{prevproof}

We conclude with a lemma with a similar flavor as Lemma~\ref{lem:betweenminmax}.

\begin{lemma}\label{lem:trunclow}
Let $\alpha >0$ be arbitrary, let $P$ be any price vector, and define $P'$ as follows: set $p'_i = p_i$, if $p_i\geq \alpha$, and $p'_i = \alpha$ otherwise. Then the expected revenues $\mathcal{R}_{P}$ and $\mathcal{R}_{P'}$ from these price vectors in an instance of the item pricing problem satisfy $ \mathcal{R}_{P'} \geq \mathcal{R}_{P}-\alpha$.
 \end{lemma}

\begin{prevproof}{Lemma}{lem:trunclow}
Let $S_{low}=\{i\ |\ p_{i}< \alpha\ \}$ and fix the buyer's values for the items. The only case where the buyer's behavior is different under $P$ and $P'$ is when the buyer is buying some item $S_{low}$ under $P$, as these are the only items whose value-minus-price gap changed/decreased from $P$ to $P'$. So the difference in revenue is bounded by the contribution to $R_{P}$ of items in $S_{low}$, which is no greater than $\alpha$.
%
%
%
%
%
\end{prevproof}

\subsection{Discretization of Values} \label{sec:discretizing values}

In this appendix, we establish polynomial-time reductions, discretizing the support of the value distributions in the input to the item pricing problem. Our reductions are specialized depending on whether we want to achieve multiplicative (Lemma~\ref{lem:horizontal-discretization}) or additive (Lemma~\ref{lem:horizontal-discretization additive}) approximations to the optimal revenue. Both reductions are enabled by an extension of Nisan's lemma to value distributions, summarized in Lemma~\ref{lem:nisan-value}. 

\begin{lemma}\label{lem:nisan-value}
    Let $\{v_i\}_{i\in[n]}$ and $\{\hat{v}_i\}_{i\in[n]}$ be two collections of mutually independent random variables, {where all $v_i$'s are supported on a common set $[u_{min},u_{max}] \subset \mathbb{R}_+$, and let $r=u_{max}/u_{min}$}. Let also $\delta\in \left(0,\frac{1}{(4 \lceil\log_2 r\rceil)^{1/(2a-1)}}\right]$, where $a\in (1/2,1)$, and
%
    suppose that we can couple the two collections of random variables so that, for all $i\in[n]$, $\hat{v}_i\in[1+\delta-\delta^2, 1+\delta]\cdot v_i$ with probability $1$. Finally, let $\mathcal{R}_{OPT}$ be the optimal expected revenue from any price vector when the buyer's values are  $\{v_i\}_{i\in[n]}$. Then, for any price vector {$P\in[u_{min},u_{max}]^{n}$}, such that $\mathcal{R}_P(\{v_i\}_i)\geq \mathcal{R}_{OPT}/2$, it holds that
   $${\mathcal{R}}_P(\{\hat{v}_i\}_i) \geq (1-3\delta^{1-a})\mathcal{R}_P(\{v_i\}_i),$$ 
    where $\mathcal{R}_{P}(\{v_i\}_i)$ is the expected revenue under price vector $P$ when the values are $\{v_i\}_{i\in[n]}$, while ${\mathcal{R}}_P(\{\hat{v}_i\}_i)$ is the revenue under $P$ when the  values are $\{\hat{v}_i\}_{i\in[n]}$. 
\end{lemma}

\begin{prevproof}{Lemma}{lem:nisan-value}
    For notational convenience, throughout this proof we use ${{\mathcal{R}}}_P:={\mathcal{R}}_P(\{{v}_i\}_i)$ and $\hat{{\mathcal{R}}}_P:={\mathcal{R}}_P(\{\hat{v}_i\}_i)$. 
    
    Consider now the joint distribution of $\{v_i\}_{i\in[n]}$ and $\{\hat{v}_i\}_{i\in[n]}$ satisfying $\hat{v}_i\in[1+\delta-\delta^2, 1+\delta]\cdot v_i$, for all $i$, with probability $1$. For every point in the support of the joint distribution, we show that the revenue of the seller under price vector $P$ is {not much larger} \notshow{approximately equal} in ``Scenario A'', where the values of the buyer are  $\{v_i\}_{i\in[n]}$, than in ``Scenario B'', where the values are $\{\hat{v}_i\}_{i\in[n]}$. In particular, we argue first that the price paid in ``Scenario A'' is at most\notshow{the two scenarios are within} $\delta\cdot u_{max}$ {larger than the price paid in ``Scenario B,''} with probability $1$.  Indeed, for every point in the support of the joint distribution, we distinguish two cases: 
    \begin{enumerate}
    \item The items sold are the same in the two scenarios. In this case, the winning prices are also the same.
    \item The items sold are different in the two scenarios. In this case, we show that the winning prices are close. 
    Since $\hat{v_i}$ is greater than $v_i$ for all $i$, if there is a winner (item) in Scenario A, there is a winner in Scenario B. Let $i$ be the winner in Scenario A, and $j$ be the winner in Scenario B. We have the following two inequalities:
        \begin{align*}
            v_i-p_i \geq&\ v_j-p_j\\
            \hat{v}_j-p_j \geq&\ \hat{v}_i-p_i
        \end{align*}
    
    The two inequalities imply that 
    $$\hat{v}_j-v_j\geq \hat{v}_i-v_i.$$
    Since $\hat{v}_j\in [1+\delta-\delta^2, 1+\delta]\cdot v_j$, it follows that $\hat{v}_j-v_j\leq \delta\cdot v_j$. Using the same starting condition for $i$, we can show that $\hat{v}_i-v_i\geq (\delta-\delta^2)\cdot v_i$.
    
    Hence,
    $$\delta\cdot v_j\geq (\delta-\delta^2)\cdot v_i.$$
    
    Also we know that $$p_j\geq p_i + v_j-v_i.$$
    
    Therefore, \begin{equation}\label{eq:additive error}p_j\geq p_i + v_j-v_i\geq p_i + (1-\delta)\cdot v_i - v_i = p_i -\delta\cdot v_i.\end{equation}  
\end{enumerate}

The above establishes that with probability $1$ the price paid in ``Scenario A'' is larger than that paid in ``Scenario B'' by \notshow{the two scenarios are within} at most an {\em additive} $\delta u_{max}$. We proceed to convert this additive approximation guarantee into a multiplicative approximation guarantee. Observe that whenever $p_i\geq \delta^{a} v_i$, 
$ p_i -\delta\cdot v_i\geq (1-{\delta}^{1-a})p_i$. Hence, if we can show that most of the revenue ${{\mathcal{R}}}_P$ is contributed by value-price pairs $(v_i,p_i)$ satisfying $p_i\geq \delta^{a} v_i$, we can convert our additive approximation to a $(1-\delta^{1-a})$ multiplicative approximation.
Indeed, we argue next that when a price vector $P$ satisfies $\mathcal{R}_P\geq \mathcal{R}_{OPT}/2$, the contribution to the revenue from the event
    $$S=\{\text{the sold item }k\text{ satisfies } p_k<\delta^{a}v_k\}$$
    is small. More precisely, 
\begin{proposition}
    If $\mathcal{R}_P\geq \mathcal{R}_{OPT}/2$, then the contribution to $\mathcal{R}_P$  from the event $S$ is no greater than $2\delta^{1-a}\mathcal{R}_P$.
\end{proposition}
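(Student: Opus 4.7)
My plan is to decompose the contribution from event $S$ according to a geometric bucketing of the sold item's value, and to bound each bucket's contribution by invoking a uniformly-priced alternative mechanism as a witness for the bucket's probability. The hypothesis $\delta^{2a-1}\leq 1/(4\lceil\log r\rceil)$ is exactly what will pay for a union bound over $\Theta(\log r)$ buckets.

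Let $W$ denote the (random) item sold under $P$ (with $W=\emptyset$ if no item is bought), and let $C := \mathbb{E}[p_W\,\ind[S]]$ be the quantity to bound. For each integer $b \geq 0$ set
$$T_b := S \cap \{v_W \in [2^b u_{min},\, 2^{b+1} u_{min})\},$$
so that $S = \bigsqcup_b T_b$. Since $v_W \leq u_{max} = r\cdot u_{min}$ whenever $W\neq\emptyset$, only buckets with $b \in \{0,1,\ldots,\lceil\log r\rceil\}$ are non-empty, yielding at most $\lceil\log r\rceil+1$ relevant buckets. On $T_b$, the definition of $S$ and the bucket condition give the pointwise upper bound $p_W < \delta^a v_W < 2\delta^a\cdot 2^b u_{min}$.

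The main step is to bound $\Pr[T_b]$. I would use the uniform alternative pricing $Q_b$ that sets every item's price to $2^b u_{min}$: under $Q_b$ an item is sold iff $\max_i v_i \geq 2^b u_{min}$, so $\mathcal{R}_{Q_b} = 2^b u_{min}\cdot \Pr[\max_i v_i \geq 2^b u_{min}]$, and $\mathcal{R}_{Q_b}\leq \mathcal{R}_{OPT}\leq 2\mathcal{R}_P$ by the hypothesis. Since $T_b\subseteq \{\max_i v_i \geq 2^b u_{min}\}$, this yields $\Pr[T_b]\leq 2\mathcal{R}_P/(2^b u_{min})$, and combining with the pointwise bound on $p_W$,
$$\mathbb{E}[p_W\,\ind[T_b]] \leq (2\delta^a \cdot 2^b u_{min})\cdot \frac{2\mathcal{R}_P}{2^b u_{min}} = 4\delta^a \mathcal{R}_P.$$
Summing over the at most $\lceil\log r\rceil+1$ buckets gives $C\leq 4\delta^a(\lceil\log r\rceil+1)\mathcal{R}_P$. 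The hypothesis on $\delta$ rewrites as $\delta^{2a-1}\leq 1/(4\lceil\log r\rceil)$, from which a direct arithmetic check shows $4\delta^a(\lceil\log r\rceil+1)\leq 2\delta^{1-a}$ whenever $\lceil\log r\rceil\geq 1$ (the degenerate case $r=1$ makes $S$ empty, since then $p_W\geq u_{min}=u_{max}\geq v_W$ rules out $p_W < \delta^a v_W$).

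\textbf{The main obstacle.} The tempting per-item move---raise just $p_i$ to $p_i/\delta^a$ for the winner and compare revenues---does not work cleanly because the identity of the winner can shift under a coordinate-wise perturbation, and summing a single-item Myerson-style upper bound over $i$ would lose a factor of $n$ that we cannot afford. The bucketing trick sidesteps this: the alternative pricings $Q_b$ are invoked only as tail-probability witnesses for the \emph{scalar} $\max_i v_i$, not as would-be competitors that preserve the identity of the winner, so uniform pricing suffices and only a $\lceil\log r\rceil$ overhead is paid—which is exactly the factor the hypothesis on $\delta$ is calibrated to absorb.
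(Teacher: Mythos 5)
Your proof is correct, and it takes a genuinely different route from the paper's, though both exploit the same underlying mechanism: decompose $S$ into $\Theta(\log r)$ geometric buckets, use a uniform-price witness per bucket, and let the hypothesis $\delta^{2a-1}\leq 1/(4\lceil\log r\rceil)$ absorb the $\log r$ overhead. The paper buckets by the winner's \emph{price} (events $S_i = \{p_k\in[2^{i-1}u_{min},2^iu_{min})\}\cap S$) and argues \emph{by contradiction}: if some bucket $S_i$ carried more than $\delta^{1-a}\mathcal{R}_{OPT}/\lceil\log r\rceil$ of the revenue, then the modified price vector that sets in-bucket items to $2^{i-1}u_{min}/\delta^a$ and all others to $+\infty$ would achieve revenue at least $\Pr[S_i]\cdot 2^{i-1}u_{min}/\delta^a \geq 2\mathcal{R}_{OPT}$, which is impossible. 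This requires verifying that under the modified prices a sale always occurs on $S_i$, which follows because $2^{i-1}u_{min}/\delta^a \le p_k/\delta^a < v_k$. You bucket by the winner's \emph{value} instead and argue \emph{directly}: on $T_b$ the inclusion $T_b\subseteq\{\max_i v_i\geq 2^bu_{min}\}$ is immediate, the uniform price $2^bu_{min}$ witnesses $2^bu_{min}\Pr[\max_i v_i\geq 2^bu_{min}]\leq\mathcal{R}_{OPT}\leq 2\mathcal{R}_P$, and the pointwise price bound $p_W<2\delta^a\cdot 2^bu_{min}$ caps each bucket's contribution at $4\delta^a\mathcal{R}_P$. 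What your version buys is a slightly cleaner ``sale occurs'' step (trivial containment rather than an argument about a modified price vector), an explicit treatment of the degenerate $r=1$ corner (where $S=\emptyset$ and the hypothesis on $\delta$ is vacuous), and no contradiction scaffolding; what the paper's version arguably buys is the Myerson-flavored intuition of ``raise the price by $1/\delta^a$'' being made visible as the mechanism that extracts the bucket's revenue. Both reach the same constant $2\delta^{1-a}$.
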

\begin{proof}
    The proof is by contradiction. For all $i\in[\lceil \log_2 r \rceil]$, define the event  
    $$S_i=\{(\text{the sold  item }k \text{ has price } p_k<{\delta}^{a}v_k) \land (p_k\in [2^{i-1}u_{min}, 2^{i}u_{min})) \}.$$ Note that $S_i$ and $S_j$ are disjoint for all $i\neq j$. Let $n_p = \lceil \log_2 r \rceil$ and note that $S=\cup_{i=1}^{n_p} S_i$.\footnote{To be more accurate, replace the set $[2^{i-1}u_{min}, 2^{i}u_{min})$ by $[2^{i-1}u_{min}, 2^{i}u_{min}]$ for the definition of the event $S_{n_p}$.} Assuming that the contribution to $\mathcal{R}_P$ from the event $S$ is larger than $2\delta^{1-a}\mathcal{R}_P$, there must exist some $i$ such that the contribution to $\mathcal{R}_P$ from $S_i$ is at least $2\delta^{1-a}\mathcal{R}_P/n_p\geq \delta^{1-a}\mathcal{R}_{OPT}/n_p$. For this $i$, let us modify the price vector $P$ to $P'$ in the following fashion: $$p'_k=\begin{cases} +\infty & p_k\notin[2^{i-1}u_{min}, 2^{i}u_{min})\\ \frac{2^{i-1}u_{min}}{{\delta}^{a}} & otherwise\end{cases}$$
    
    We claim that for all outcomes $(v_1,v_2,\ldots,v_n)\in S_i$, there always exists an item sold under $P'$. Indeed, let $k$ be the winner under $P$. Then $p_k<{\delta}^{a} v_k$. By the definition of $p'_k$, we know that $$p'_k=\frac{2^{i-1}u_{min}}{{\delta}^{a}}\le p_k/{\delta}^{a}<v_k.$$ Thus, an item has to be sold. Moreover, the sold item has price $\frac{2^{i-1}u_{min}}{{\delta}^{a}}$, as all the other prices are set  to $+\infty$. Hence, we can lower bound $\mathcal{R}_{P'}$ as follows 
    $$\mathcal{R}_{P'} \ge \Pr[S_i]\cdot \frac{2^{i-1}u_{min}}{{\delta}^{a}} \geq \frac{\text{Contribution of } S_i \ \text{to} \  \mathcal{R}_{P}}{{2\delta}^{a}}\geq \frac{\delta^{1-a}\mathcal{R}_{OPT}}{2n_p{\delta}^{a}}.$$
    Given that $\delta \leq (\frac{1}{4n_p})^{1/(2a-1)}$, the above implies $\mathcal{R}_{P'}\geq 2\mathcal{R}_{OPT}$, which is impossible, i.e. we get a contradiction. This concludes the proof of the proposition.
\end{proof}
        Given the proposition, at least $(1-2\delta^{1-a})$ fraction of ${\cal R}_P$ is contributed by value-price pairs $(v_i,p_i)$ satisfying $p_i\geq \delta^{a} v_i$.  Recalling our earlier discussion, this implies that $\hat{\mathcal{R}}_{P}\geq(1-2\delta^{1-a})(1-{\delta}^{1-a})\mathcal{R}_P\geq(1-3\delta^{1-a})\mathcal{R}_{P}$.
%
\end{prevproof}

%
Lemma~\ref{lem:nisan-value} enables polynomial-time reductions from value distributions supported on some bounded range $[u_{min},u_{max}]$ to value distributions supported on some discrete set of cardinality $O(\log r)$, where $r=u_{max}/u_{min}$. We provide two reductions (Lemmas~\ref{lem:horizontal-discretization} and~\ref{lem:horizontal-discretization additive}) depending on whether the  approximation to the optimal revenue is intended to be additive or multiplicative. We note that a straightforward extension of Nisan's lemma to value distributions would have resulted in supports of cardinality $O(r^2 \log r)$. The exponential improvement in the size of the support comes from our more intricate extension obtained by Lemma~\ref{lem:nisan-value}.

\begin{lemma}[Value Discretization for Multiplicative Approximations]\label{lem:horizontal-discretization}
    Let $\{v_i\}_{i\in[n]}$ be a collection of mutually independent random variables supported on a bounded range $[u_{min},u_{max}] \subset \mathbb{R}_+$, and $r={u_{max}\over u_{min} }$. 
    For any $\delta \in \left(0,\frac{1}{(4 \lceil \log_2 r\rceil) ^{4/3}}\right)$, there exists a collection of mutually independent random variables $\{\hat{v}_i\}_{i\in[n]}$, which are supported on a discrete set of cardinality $O\left(\frac{\log r}{\delta^2}\right)$ and satisfy the following properties.
    \begin{enumerate}
    \item The optimal revenue when the buyer's values are $\{\hat{v}_i\}_{i\in[n]}$ is at least a $(1-3\delta^{1/8})$-fraction of the optimal revenue when the values are $\{v_i\}_{i\in[n]}$. 
   I.e. $\hat{\mathcal{R}}_{OPT}\geq (1-3\delta^{1/8})\mathcal{R}_{OPT}$, where $\mathcal{R}_{OPT}= \max_{P}\mathcal{R}_{P}(\{v_i\}_i)$ and $\hat{\mathcal{R}}_{OPT}= \max_{P}{\mathcal{R}}_{P}(\{\hat{v}_i\}_i)$.  
    
    \item Moreover, for any constant {$\rho\in (0,1/2)$} and any price vector $P$ such that ${\mathcal{R}}_{P}(\{\hat{v}_i\}_i)\geq (1-\rho)\hat{\mathcal{R}}_{OPT}$, we can construct in time polynomial in the description of $P$ and $1/\delta$ another price vector $\tilde{P}$ such that $\mathcal{R}_{\tilde{P}}(\{v_i\}_i)\geq (1-7\delta^{1/8}-\rho)\mathcal{R}_{OPT}$. 
\end{enumerate}
If $u_{min}$ and $u_{max}$ are provided explicitly as input to the reduction,\footnote{This requirement is only relevant if we have oracle access to the distributions of the $v_i$'s, as if we are given the distributions explicitly we immediately also know $u_{min}$ and $u_{max}$.} we can compute the distributions of the $\hat{v}_i$'s\footnote{The $\hat{v}_i$'s will inherit the same type of access that we have to the distributions of the $v_i$'s, according to Appendix~\ref{sec:model}. In particular, if the $v_i$'s are specified explicitly in the input to the reduction then the $\hat{v}_i$'s will also be specified explicitly in the output of the reduction; if the $v_i$'s are given as oracles then the $\hat{v}_i$'s will be given as oracles; etc. } and their support in time polynomial in the description of $\{v_i\}_{i\in[n]}$, $\angler{u_{min}}$, $\angler{u_{max}}$ and $1/\delta$. \notshow{\yangnote{Remove: We can also compute the distributions of the variables $\{\hat{v}_i\}_{i\in[n]}$ in time polynomial in the size of the input and $1/\delta$, if we have the distributions of the variables $\{v_i\}_{i\in[n]}$ explicitly. If we have oracle access to the distributions of the variables $\{v_i\}_{i\in[n]}$, we can construct an oracle for the distributions of the variables $\{\hat{v}_i\}_{i\in[n]}$, running in time polynomial in  $\log u_{min}$, $\log u_{\max}$, $1/\delta$, the input to the oracle and the desired precision.}}
\end{lemma}



\begin{prevproof}{Lemma}{lem:horizontal-discretization}
Informally, our reduction establishes the following properties: (1) Suppose that we transform a buyer with arbitrary valuations (call this buyer ``Buyer A'') to a buyer with discrete valuations (called ``Buyer B'') by first multiplying each of Buyer A's values by $(1+\delta)$ and then rounding them down to the closest real of the form $(1+\delta)(1+\xi)^{j}u_{min}$, for some integer $j$, where $\delta$ is fixed and $\xi=\frac{\delta^{2}}{1+\delta-\delta^{2}}$. {We show that the optimal revenue from Buyer B is very close to the optimal revenue from Buyer A by exploiting that Buyer B's values have been boosted, using Lemma~\ref{lem:nisan-value}.} (2) For the reduction to be computationally useful, we also show that given an approximately optimal price vector for Buyer B, if we divide all prices by $(1+\delta)(1+\delta-\delta^{2})$, the new price vector will be an approximately optimal price vector for Buyer~A. {Intuitively, scaling down the prices undoes the effect of boosting the values.}

\medskip   We proceed to make the above plan precise, beginning with the description of the random variables $\{\hat{v}_i\}_{i\in[n]}$. We will use $\{F_i\}_{i\in[n]}$ and $\{\hat{F}_i\}_{i\in[n]}$ to denote respectively the cumulative distribution functions of the variables $\{v_i\}_{i\in[n]}$ and $\{\hat{v}_i\}_{i\in[n]}$. 
    Our variables $\{\hat{v}_i\}_{i\in[n]}$ will only be supported on the set 
    $$\left\{a_{j}=(1+\delta)(1+\xi)^{j}u_{min}\ \vline \ j\in\Big{\{}0,\ldots,\big{\lfloor}\log_{1+\xi}\frac{u_{max}}{u_{min}}\big{\rfloor}\Big{\}}\right\}.$$
Moreover, for all $i$, $\hat{F}_i$ is defined in terms of $F_i$ as follows: 
$$\hat{F}_i(a_{j}) = F_i(a_{j}/(1+\delta-\delta^{2}))-F_i(a_{j}/(1+\delta))+\hat{F}_i(a_{j-1}) \ind_{j>0},~~~\forall j.$$
    
\noindent Now, for all $i$, we couple $v_i$ with $\hat{v}_i$ as follows: If $v_i\in[a_{j}/(1+\delta),a_{j}/(1+\delta-\delta^{2}))$, we set $\hat{v}_i = a_{j}$. Given our definition of the $\hat{F}_i$'s, this defines a valid coupling of the collections ${\cal V}=\{v_i\}_i$ and $\hat{{\cal V}}=\{\hat{v}_i\}_i$. Moreover, by definition, our coupling satisfies 
\begin{align}
\hat{v}_i\in[1+\delta-\delta^2, 1+\delta]\cdot v_i, \forall i, \label{eq: coupling costas 1}
\end{align} with probability $1$, {and all the $\hat{v}_i$'s are supported on $[(1+\delta)u_{min}, (1+\delta)u_{max}]$}.


We are now ready to establish the first part of the lemma. Using Lemma~\ref{lem:nisan-value} and the property of our coupling it follows immediately that $${\mathcal{R}}_P(\hat{{\cal V}})\geq (1-3\delta^{1/8})\mathcal{R}_P({\cal V}),$$ for any price vector {$P\in[u_{min},u_{max}]^{n}$} s.t. $\mathcal{R}_P({\cal V}) \ge {1 \over 2} \mathcal{R}_{OPT}$. {Lemma~\ref{lem:betweenminmax} implies that the optimal revenue for $\cal V$ is achieved by some price vector in $[u_{min},u_{max}]^{n}$.}  Hence, we get from the above that $\hat{\mathcal{R}}_{OPT}\geq (1-3\delta^{1/8})\mathcal{R}_{OPT}$. 

We proceed to show the second part of the lemma. We do this by defining another collection of random variables $\tilde{{\cal V}}=\{\tilde{v}_i\}_{i\in[n]}$. These are defined  implicitly via the following coupling between $\{\tilde{v}_i\}_{i\in[n]}$ and  $\{\hat{v}_i\}_{i\in[n]}$: for all $i$, we set $$\tilde{v}_i = \frac{\hat{v}_i}{(1+\delta-\delta^2)(1+\delta)}.$$
{It follows that the $\tilde{v}_i$'s are supported on $[u_{min}/(1+\delta-\delta^{2}), u_{max}/(1+\delta-\delta^{2})]$.}

Moreover, for any price vector $P$, let us construct another price vector $\tilde{P}$ as follows:
\begin{align}
\tilde{p}_i = \frac{{p}_i}{(1+\delta-\delta^2)(1+\delta)}.\label{eq: price modificationnnn}
\end{align}

Under our coupling between $\{\tilde{v}_i\}_{i\in[n]}$ and  $\{\hat{v}_i\}_{i\in[n]}$, it is not hard to see that if we use price vector $P$ when the buyer's values are $\{\hat{v}_i\}_{i\in[n]}$ and price vector $\tilde{P}$ when the buyer's values are $\{\tilde{v}_i\}_{i\in[n]}$, then the index of the item that the buyer buys is the same in the two cases, with probability $1$. Hence:

\begin{align}
{\mathcal{R}}_{\tilde{P}}(\tilde{\cal V}) = \frac{{\mathcal{R}}_P(\hat{\cal V})}{(1+\delta-\delta^2)(1+\delta)}.\label{eq:revenue tilde and hat}
\end{align}
This follows from the fact that both $\tilde{P}$ and  $\{\tilde{v}_i\}_{i\in[n]}$  are the same linear transformations of $P$ and $\{\hat{v}_i\}_{i \in [n]}$ respectively.

Composing the coupling between $v_i$ and $\hat{v}_i$ and the coupling between $\hat{v}_i$ with $\tilde{v}_i$, we obtain a coupling between $v_i$ and $\tilde{v}_i$. We show that this coupling satisfies $v_i\in[1+\delta-\delta^2,1+\delta]\cdot \tilde{v_i}$, with probability $1$. Since $(1+\delta-\delta^{2})v_{i}\leq\hat{v}_{i}\leq(1+\delta)v_{i}$, it follows that $$v_{i}/(1+\delta)\leq \hat{v}_{i}/(1+\delta-\delta^{2})(1+\delta)=\tilde{v}_{i}\leq v_{i}/(1+\delta-\delta^{2}).$$
Hence 
\begin{align}
v_i\in[1+\delta-\delta^2,1+\delta]\cdot \tilde{v_i}, \forall i, \label{eq: coupling costas 2}
\end{align}
with probability $1$. Now an application of Lemma~\ref{lem:nisan-value} implies that, for any price vector {$\tilde{P}\in[u_{min}/(1+\delta-\delta^{2}), u_{max}/(1+\delta-\delta^{2})]^{n}$} satisfying ${\cal R}_{\tilde{P}}(\tilde{V}) \ge {1 \over 2}{\cal R}_{OPT}(\tilde{V})$:
\begin{align}\mathcal{R}_{\tilde{P}} ({\cal V})\geq (1-3\delta^{1/8}){\mathcal{R}}_{\tilde{P}}(\tilde{\cal V}).\label{eq:revnue tilde and original}
\end{align}

Now let $P$ be a price vector satisfying  ${\mathcal{R}}_P(\hat{\cal V})\geq (1-\rho)\hat{\mathcal{R}}_{OPT}$. {Lemma~\ref{lem:betweenminmax} implies that WLOG we can assume that $P\in[(1+\delta)u_{min}, (1+\delta)u_{max}]^{n}$ (as if the given price vector is not in this set, we can efficiently convert it into one that is in this set without losing any revenue).} Then the vector $\tilde{P}$ obtained from $P$ via Eq.~\eqref{eq: price modificationnnn} {is in $[u_{min}/(1+\delta-\delta^{2}), u_{max}/(1+\delta-\delta^{2})]^{n}$}, and clearly satisfies 
${\mathcal{R}}_{\tilde{P}}(\tilde{\cal V})\geq (1-\rho){\mathcal{R}}_{OPT}(\tilde{\cal V})$, 
as $\tilde{P}$ and  $\tilde{\cal V}$  are the same linear transformations of $P$ and $\hat{\cal V}$ respectively. Hence, Equations~\eqref{eq:revenue tilde and hat} and \eqref{eq:revnue tilde and original} give 
\begin{align*}
{\mathcal{R}}_{\tilde{P}}({\cal V}) &\geq \Big{(}(1-3\delta^{1/8})\Big{/}(1+\delta)(1+\delta-\delta^{2})\Big{)}{\mathcal{R}}_P(\hat{\cal V}) \notag\\
&\geq(1-3\delta^{1/8})(1-2\delta){\mathcal{R}}_{P} (\hat{\cal V}) \notag\\
&\geq(1-4\delta^{1/8}){\mathcal{R}}_{P}(\hat{\cal V})\\ 
&\geq(1-4\delta^{1/8})(1-\rho)\hat{\mathcal{R}}_{OPT} \notag\\
&\geq(1-4\delta^{1/8})(1-\rho)(1-3\delta^{1/8})\mathcal{R}_{OPT}~~~~~~\text{(using the first part of the theorem)} \\
&\geq (1-7\delta^{1/8}-\rho)\mathcal{R}_{OPT}. \notag
\end{align*}
\end{prevproof}

\begin{lemma}[Value Discretization for Additive Approximations]\label{lem:horizontal-discretization additive}
    Let $\{v_i\}_{i\in[n]}$ be a collection of mutually independent random variables supported on a bounded range $[u_{min},u_{max}] \subset \mathbb{R}_+$, and $r={u_{max}\over u_{min} }$. 
    For any $\delta >0$, there exists another collection of mutually independent random variables $\{\hat{v}_i\}_{i\in[n]}$, which are supported on a discrete set of cardinality $O\left(\frac{\log r}{\delta^2}\right)$ and satisfy the following properties.
    \begin{enumerate}
    \item The optimal revenue when the buyer's values are $\{\hat{v}_i\}_{i\in[n]}$ is at most $\delta u_{\max}$ smaller than the optimal revenue when the values are $\{v_i\}_{i\in[n]}$. 
   I.e. $\hat{\mathcal{R}}_{OPT}\geq \mathcal{R}_{OPT}-\delta u_{\max}$, where $\mathcal{R}_{OPT}= \max_{P}\mathcal{R}_{P}(\{v_i\}_i)$ and $\hat{\mathcal{R}}_{OPT}= \max_{P}{\mathcal{R}}_{P}(\{\hat{v}_i\}_i)$.  
    
    \item Moreover, for any constant {$\rho>0$} and any price vector $P$ such that ${\mathcal{R}}_{P}(\{\hat{v}_i\}_i)\geq \hat{\mathcal{R}}_{OPT}-\rho$, we can construct in time polynomial in the description of $P$ and $1/\delta$ another price vector $\tilde{P}$ such that $\mathcal{R}_{\tilde{P}}(\{v_i\}_i)\geq \mathcal{R}_{OPT}-4\delta u_{\max}-\rho$. 
\end{enumerate}
If $u_{min}$ and $u_{max}$ are provided explicitly as input to the reduction,\footnote{This requirement is only relevant if we have oracle access to the distributions of the $v_i$'s, as if we are given the distributions explicitly we immediately also know $u_{min}$ and $u_{max}$.} we can compute the distributions of the $\hat{v}_i$'s\footnote{The $\hat{v}_i$'s will inherit the same type of access that we have to the distributions of the $v_i$'s, according to Appendix~\ref{sec:model}. In particular, if the $v_i$'s are specified explicitly in the input to the reduction then the $\hat{v}_i$'s will also be specified explicitly in the output of the reduction; if the $v_i$'s are given as oracles then the $\hat{v}_i$'s will be given as oracles; etc. } and their support in time polynomial in the description of $\{v_i\}_{i\in[n]}$, $\angler{u_{min}}$, $\angler{u_{max}}$, and $1/\delta$. \notshow{\yangnote{Remove: We can also compute the distributions of the variables $\{\hat{v}_i\}_{i\in[n]}$ in time polynomial in the size of the input and $1/\delta$, if we have the distributions of the variables $\{v_i\}_{i\in[n]}$ explicitly. If we have oracle access to the distributions of the variables $\{v_i\}_{i\in[n]}$, we can construct an oracle for the distributions of the variables $\{\hat{v}_i\}_{i\in[n]}$, running in time polynomial in  $\log u_{min}$, $\log u_{\max}$, $1/\delta$, the input to the oracle and the desired precision.}}
\end{lemma}

\begin{proof} 
The proof is very similar to the proof of Lemma~\ref{lem:horizontal-discretization}. In particular, let $\mathcal{\hat{V}}=\{\hat{v}_{i}\}_{i\in[n]}$ and $\mathcal{\tilde{V}}=\{\tilde{v}_{i}\}_{i\in[n]}$ be defined in the same way as in that lemma. So, with probability 1, \eqref{eq: coupling costas 1} and~\eqref{eq: coupling costas 2} are satisfied. So Eq.~\eqref{eq:additive error} of Lemma~\ref{lem:nisan-value} implies that $\hat{\mathcal{R}}_{OPT}\geq \mathcal{R}_{OPT}-\delta u_{\max}$.

Now let $P$ be a price vector satisfying  ${\mathcal{R}}_P(\hat{\cal V})\geq \hat{\mathcal{R}}_{OPT}-\rho$. Lemma~\ref{lem:betweenminmax} implies that WLOG we can assume that $P\in[(1+\delta)u_{min}, (1+\delta)u_{max}]^{n}$. Then the vector $\tilde{P}$ obtained from $P$ via Eq.~\eqref{eq: price modificationnnn} {is in $[u_{min}/(1+\delta-\delta^{2}), u_{max}/(1+\delta-\delta^{2})]^{n}$}, and satisfies 
${\mathcal{R}}_{\tilde{P}}(\tilde{\cal V})=\frac{{\mathcal{R}}_{{P}}(\hat{\cal V})}{(1+\delta)(1+\delta-\delta^{2})}$, 
as $\tilde{P}$ and  $\tilde{\cal V}$  are the same linear transformations of $P$ and $\hat{\cal V}$ respectively. Hence, 
\begin{align*}
{\mathcal{R}}_{\tilde{P}}({\cal V}) &\geq {\mathcal{R}}_{\tilde{P}}(\mathcal{\tilde{V}})-\delta u_{max} ~~~~~~~~~~\text{(by Eq.~\eqref{eq:additive error} of Lemma~\ref{lem:nisan-value} given~\eqref{eq: coupling costas 2})}\notag\\
&= \frac{{\mathcal{R}}_{{P}}(\hat{\cal V})}{(1+\delta)(1+\delta-\delta^{2})}-\delta u_{max} \notag\\
&\geq(1-2\delta){\mathcal{R}}_{P} (\hat{\cal V}) -\delta u_{max}\notag\\
&\geq\hat{\mathcal{R}}_{OPT}-3\delta u_{max}-\rho\\ 
&\geq \mathcal{R}_{OPT}-4\delta u_{max}-\rho. \notag
\end{align*}

\end{proof}

\subsection{Omitted Details from Section~\ref{sec:additive}}\label{appendix:additive}

\medskip \begin{prevproof}{Lemma}{lem:additive to balanced}
Let $OPT$ be the optimal revenue under $\mathcal{V}$. Also, let $\epsilon'=\epsilon/3$.

By Lemma~\ref{lem:betweenminmax}, we only need to consider price vectors in $[0,1]^{n}$ for an optimal one. Moreover, it follows from Lemma~\ref{lem:trunclow} that, if we restrict all prices to be higher than $\epsilon'$, we lose at most an additive $\epsilon'$ in revenue. So there exists a price vector $\bar{P}\in[\epsilon',1]^{n}$ such that $\mathcal{R}_{\bar{P}}(\mathcal{V})\geq OPT -\epsilon'$.

Now, let us define a new collection of random variables $\mathcal{\tilde{V}}=\{\tilde{v}_{i}\}_{i\in[n]}$ via the following coupling: for all $i\in [n]$, set  $\tilde{v}_i={\epsilon'\over 2}$ if $v_i < {\epsilon'}$, and $\tilde{v}_i=v_i$ otherwise. We claim the following:

\begin{claim}\label{claim:little claim}
For any price vector $P$ in $[\epsilon',1]^{n}$, $\mathcal{R}_{P}(\tilde{\mathcal{V}})=\mathcal{R}_{P}(\mathcal{V})$.
\end{claim}
\begin{prevproof}{Claim}{claim:little claim}
 Recall that the variables $\{\tilde{v}_i\}_i$ are defined via a coupling with the $v_i$'s. Under the same coupling, sample values from the $v_i$'s and the $\tilde{v}_i$'s. For all items $i$ such that ${v}_{i}\neq \tilde{v}_{i}$, the price of item $i$ is higher than the value of item $i$ in both cases, so item $i$ will not be purchased in both cases. That means the buyer will make the same decision in both cases, as she will only consider items whose values are the same. So the revenues under ${\cal V}$ and $\tilde{\cal V}$ are pointwise equal.
\end{prevproof}

We proceed to show that an approximately optimal solution for $\tilde{\cal V}$ provides an approximately optimal solution for ${\cal V}$. Suppose that a price vector $\tilde{P}$ satisfies $\mathcal{R}_{\tilde{P}}(\tilde{\mathcal{V}})\geq \mathcal{R}_{OPT}(\tilde{V})-\epsilon'$. By Lemmas~\ref{lem:betweenminmax} and~\ref{lem:trunclow}, we can efficiently convert $\tilde{P}$ to $P'\in[\epsilon',1]^{n}$, such that $\mathcal{R}_{P'}(\tilde{\mathcal{V}})\geq \mathcal{R}_{\tilde{P}}(\tilde{\mathcal{V}})-\epsilon'$.

Combining the inequalities above, we have $$\mathcal{R}_{OPT}(\tilde{\mathcal{V}})\geq \mathcal{R}_{\bar{P}}(\tilde{\mathcal{V}})= \mathcal{R}_{\bar{P}}(\mathcal{V})\geq OPT -\epsilon',$$
and
$$\mathcal{R}_{P'}({\mathcal{V}})=\mathcal{R}_{P'}(\tilde{\mathcal{V}})\geq \mathcal{R}_{\tilde{P}}(\tilde{\mathcal{V}})-\epsilon' \geq \mathcal{R}_{OPT}(\tilde{\mathcal{V}})-2\epsilon'.$$

Thus, $$\mathcal{R}_{P'}({\mathcal{V}})\geq OPT-3\epsilon'.$$
\end{prevproof}

\begin{prevproof}{Theorem}{thm:additive discretization}
Lemma~\ref{lem:horizontal-discretization additive} implies that we can reduce the problem {\sc AdditivePrice}$({\cal V},\epsilon)$ to the problem {\sc AdditivePrice}$(\hat{{\cal V}},{\frac{\epsilon}{3})}$, where $\hat{\cal V}=\{\hat{v}_i\}_i$ is a collection of mutually independent random variables supported on a common discrete set ${\cal S}=\{s^{(1)},\ldots,s^{(k_1)}\} \subset [(1+\frac{\epsilon}{6u_{max}})u_{min},(1+\frac{\epsilon}{6u_{max}})u_{max}]$ of cardinality $k_1=O({u_{max}^{2}\log r \over \epsilon^{2}})$. Now, Lemmas~\ref{lem:betweenminmax} and~\ref{cor:discreteprice} imply that we can reduce the problem {{\sc AdditivePrice}$(\hat{{\cal V}},{\epsilon \over 3})$} to the problem of approximating {\sc ResrtictedPrice}$(\hat{{\cal V}}, {\cal P})$ to within an additive $\epsilon\over 6$, where ${\cal P}$ is a discrete set of prices of cardinality $O(\frac{{u_{max}^{2}}\log r}{\epsilon^{2}})$, satisfying $\max_{x \in {\cal P}}{x} \le {7 \over 6} u_{max}$.
\end{prevproof}

\subsection{Proof of Theorem~\ref{thm:discretization}} \label{app:discretizationreduction for multiplicative}
\begin{prevproof}{Theorem}{thm:discretization}
Lemma~\ref{lem:horizontal-discretization} implies that we can reduce the problem {\sc Price}$({\cal V},\epsilon)$ to the problem {{\sc Price}$(\hat{{\cal V}},{{\epsilon \over 8}})$}, where $\hat{\cal V}=\{\hat{v}_i\}_i$ is a collection of mutually independent random variables supported on a common discrete set ${\cal S}=\{s^{(1)},\ldots,s^{(k_1)}\} \subset [(1+({\epsilon / 8})^8)u_{min},(1+({\epsilon / 8})^8)u_{max}]$ of cardinality $k_1=O({\log r \over \epsilon^{16}})$. Now, Lemmas~\ref{lem:betweenminmax} and~\ref{cor:discreteprice} imply that we can reduce the problem {{\sc Price}$(\hat{{\cal V}}, {{\epsilon \over 8}})$} to the problem of approximating {\sc ResrtictedPrice}$(\hat{{\cal V}}, {\cal P})$ to within a multiplicative factor of {$1-{\epsilon \over 16}$}, where ${\cal P}$ is a discrete set of prices of cardinality {$O(\frac{\log r}{\epsilon^{2}})$}. 
\end{prevproof}

\section{Proof of Theorem~\ref{thm:general algorithm}} \label{app:proof of general multiplicative PTAS}

We restate and prove Theorem~\ref{thm:general algorithm}. We note that we have not tried to carefully optimize the constants in the running time. There may be room for improvement with a more careful analysis.

\begin{varthm}{{\bf \ref{thm:general algorithm} [Restated]}}
    Let $\{F_i\}_{i \in [n]}$ be a collection of distributions that are supported on a bounded set $[u_{min},u_{max}] \subset \mathbb{R}_+$, where $u_{min}$ and $u_{max}$ are specified as part of the input,\footnote{This requirement is only relevant if we have oracle access to the $F_i$'s, as if we are given the distributions explicitly we immediately also know $u_{min}$ and $u_{max}$.}  and let $r:=u_{max}/u_{min}$. Then, for any constant $\epsilon >0$, there is an algorithm that runs in time polynomial in the size of the input and  {$\max\left\{n^{\log^{11} r \cdot  \log \log r }, n^{\log^{3} r \cdot\log {1 \over \epsilon}\over \epsilon^{8}}\right\}$} 
      and computes a price vector $P$ such that $$\mathcal{R}_{P}\geq(1-\epsilon) {OPT},$$ where $\mathcal{R}_{P}$ is the expected revenue under price vector $P$ when the buyer's values for the items are independent draws from the distributions $\{F_i\}_i$ and ${OPT}$ is the optimal revenue.
\end{varthm}

\begin{prevproof}{Theorem}{thm:general algorithm}
    First set $\hat{\epsilon} = \min\left\{\epsilon, {1 \over (4 \lceil \log_2 r \rceil)^{1/6}} \right\}$. Clearly, it suffices to find a price vector with expected revenue $(1-\hat{\epsilon})OPT$.  Now, let us invoke the reduction of Theorem~\ref{thm:discretization}, reducing this task  to approximating {\sc RestrictedPrice}$(\{ \hat{F_i}\}_i,{\cal P})$ to within a factor of $(1-{\hat{\epsilon} \over 16})$, where the distributions $\{\hat{F_i}\}_i$ are supported on a discrete set ${\cal S}=\{s^{(1)},\ldots,s^{(k_1)}\}$ of cardinality $k_1=O(\log r/ \hat{\epsilon}^{16})$ and the prices are also restricted to a discrete set ${\cal P}=\{p^{(1)},\ldots,p^{(k_2)}\}$ of cardinality $k_2=O(\log r/ \hat{\epsilon}^{2})$. It is important to note that ${\cal S}\subset [(1+({\hat{\epsilon} / 8})^8)u_{min},(1+({\hat{\epsilon} / 8})^8)u_{max}]$ and $\min_i\{p^{(i)}\} \le \min_i \{{s^{(i)}}\}$ (this can be checked by a careful study of the proof of Theorem~\ref{thm:discretization}). Hence, if $\widehat{OPT}$ is the optimal revenue of the discrete instance resulting from the reduction, we have  $\widehat{OPT} \ge \min_i\{p^{(i)}\}$. It is our goal to  achieve revenue at least $(1-{\hat{\epsilon} \over 16}) \widehat{OPT}$.

    \notshow{\costasfootnote{There was a case analysis here that didn't make sense to me, at least after the new formulation of Theorem~\ref{lem:dynamic programming revenue}. It is commented out.}}
To do this, we invoke the algorithm of Theorem~\ref{lem:dynamic programming revenue} with a choice of { $m=\Theta({ n r k_1 k_2 \over \hat{\epsilon}})=O({n r \log^2 r \over \hat{\epsilon}^{19}})$}, obtaining a price vector with revenue at least:
\begin{align}
   \widehat{OPT} - O\left( {n k_{1}k_{2} \over m} \max_i \{p^{(i)}\} \right) =    \widehat{OPT} - O\left( \hat{\epsilon} \right) \cdot \min_i \{p^{(i)}\} \ge \widehat{OPT}  \left( 1 - O(\hat{\epsilon})\right),
   \end{align}
    as we wanted. The running time of the algorithm in this case is polynomial in the input and $m^{{ {\log^2 r}\over\hat{\epsilon}^{18}}}$, that is polynomial in the input and {$\max\left\{ n^{\log^6 r \cdot  \log \log r},n^{ {\log^3 r \cdot  \log {1\over\epsilon} \over {\epsilon}^{18}}} \right\}$}. \notshow{\footnote{\yangnote{It was \costasnote{$n^{ {\log^3 r \cdot  \log {1/\epsilon} \over {\epsilon}^{18}}}$}, but I think we should replace $\hat{\epsilon}$ with ${\epsilon \over \log r^{1/6}}$}}. }

Being a bit more careful in the application of our discretization lemmas we can obtain running time polynomial in the input and {$\max\left\{n^{\log^{11} r \cdot  \log \log r}, n^{\log^{3} r \cdot\log {1 \over \epsilon}\over \epsilon^{8}}\right\}$}. Recall that to establish our reduction in Theorem~\ref{thm:discretization} we employed Lemma~\ref{lem:horizontal-discretization}, which in turn made use of Lemma~\ref{lem:nisan-value}, setting $a={7 \over 8}$. Setting instead $a={2 \over 3}$ would result in a stronger Lemma~\ref{lem:horizontal-discretization} and Theorem~\ref{thm:discretization}, improving our running time here. 
    \end{prevproof}

\section{Details of Section~\ref{sec:truncate}: MHR to Bounded Distributions}\label{sec:MHR to BD}

\subsection{Basic Properties of MHR Distributions} \label{appendix:MHR}

\begin{definition} \label{def:alpha}
    For a random variable $X$, we define $\alpha_{1}=u_{min}$, and for every real number $p\in(1,+\infty)$, we define $\alpha_p = \inf\left\{x|F(x)\geq 1-\frac{1}{p}\right\}$.
\end{definition}

The following lemma establishes an interesting property of MHR distributions. Intuitively, the lemma provides a lower bound on the speed of the decay of the tail of a MHR distribution. We prove the lemma by showing that the function $\log_e \big(1-F(x)\big)$ is concave if $F$ is MHR, and exploiting this concavity (see Appendix~\ref{app:proofs for properties MHR}).

\begin{lemma}\label{lem:concentrate}
    If the distribution of a random variable $X$ satisfies MHR, $m\geq 1$ and $d\geq 1$, $d\cdot\alpha_m\geq \alpha_{m^d}$.
\end{lemma}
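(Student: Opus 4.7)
The plan is to follow the hint and show that $H(x) := -\log(1-F(x))$ is a convex, non-decreasing function on $[u_{min},u_{max})$ with $H(u_{min})=0$, then exploit this convexity. Convexity is immediate: since $F$ is differentiable, $H'(x) = f(x)/(1-F(x))$ is exactly the hazard rate, which is non-decreasing by the MHR assumption; monotonicity of $H$ follows because $F$ is a CDF, and $H(u_{min})=0$ follows from $F(u_{min})=0$ (which holds by the definition of $u_{min}$ and continuity of $F$).

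Next I would translate the goal $d\cdot\alpha_m \ge \alpha_{m^d}$ into an inequality about $H$. Since $F$ is continuous, the set $\{x : F(x)\ge 1-1/m\}$ is closed, its infimum is attained, and indeed $F(\alpha_m)=1-1/m$, giving $H(\alpha_m)=\log m$; likewise $H(\alpha_{m^d})=d\log m$. Because $H$ is non-decreasing, it suffices to show $H(d\alpha_m)\ge d\log m = d\cdot H(\alpha_m)$; once established, this forces $F(d\alpha_m)\ge 1-1/m^d$ and hence $d\alpha_m \ge \alpha_{m^d}$.

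The heart of the argument uses convexity of $H$ together with the anchor $H(u_{min})=0$. The cases $m=1$ and $d=1$ are trivial, so assume $m>1$ (so $\alpha_m>u_{min}$) and $d>1$; we may also assume $d\alpha_m < u_{max}$, since otherwise $d\alpha_m \ge u_{max} \ge \alpha_{m^d}$ already. Writing $\alpha_m$ as the convex combination $\alpha_m = \lambda\, u_{min} + (1-\lambda)(d\alpha_m)$ with $1-\lambda = (\alpha_m-u_{min})/(d\alpha_m-u_{min})$, convexity of $H$ and $H(u_{min})=0$ yield
$$H(\alpha_m) \;\le\; (1-\lambda)\,H(d\alpha_m) \;=\; \frac{\alpha_m-u_{min}}{d\alpha_m-u_{min}}\,H(d\alpha_m).$$
Finally, using $u_{min}\ge 0$ and $d\ge 1$, one checks that $\frac{d\alpha_m - u_{min}}{\alpha_m - u_{min}}\ge d$ (equivalent to $(d-1)u_{min}\ge 0$), and this delivers $H(d\alpha_m) \ge d\cdot H(\alpha_m)$, completing the proof.

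The main technicalities (not real obstacles) are the boundary-case bookkeeping: that $F(\alpha_m)=1-1/m$ exactly (handled by continuity of $F$), the degenerate cases $m=1$ or $d=1$, and the case $d\alpha_m\ge u_{max}$, each of which is immediate. The conceptual content lies entirely in the convexity of $-\log(1-F(\cdot))$ under MHR, together with the hypothesis $u_{min}\ge 0$ that lets us use the origin as the anchor for the convexity bound; it is exactly the inequality $(d-1)u_{min}\ge 0$ that converts the geometric (multiplicative) slack in $H$ into the linear (multiplicative) slack in $x$ required by the lemma.
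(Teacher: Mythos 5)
Your proof is correct and follows essentially the same approach as the paper: both hinge on the observation that MHR makes $\log(1-F)$ concave (equivalently, your $H=-\log(1-F)$ convex) and then exploit $u_{min}\ge 0$ so that $u_{min}$ can serve as the anchor point for the convexity bound. The paper phrases it by applying concavity at $\tfrac{d-1}{d}\alpha_1+\tfrac{1}{d}\alpha_{m^d}$ and comparing against $\alpha_m$, while you express $\alpha_m$ as a convex combination of $u_{min}$ and $d\alpha_m$ directly; these are the same inequality read in the two dual directions, and the step $(d-1)u_{min}\ge 0$ you isolate is exactly where the paper drops the $\frac{d-1}{d}\alpha_1$ term.
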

  
  Next we study the expectation of a random variable that satisfies MHR. We show that the contribution to the expectation from  values $\ge m$, is $O(m\cdot \Pr[X\geq m])$. We start with a definition.
  
\begin{definition}\label{def:contribution}
For a random variable $X$, let $Con[X\geq x]=\mathbb{E}[X|X\geq x]\cdot \Pr\{X\geq x\}$ be the contribution to expectation of $X$ from values which are no smaller than $x$, i.e. 
$$Con[X\geq x] = \int_{x}^{+\infty} t\cdot f(t) dt.$$
\end{definition}

It is an obvious fact that for any random variable X and any two points $x_{1}\leq x_{2}$, $Con[X\geq x_{1}]\geq Con[X\geq x_{2}]$. Using the bound on the tail of a MHR distribution obtained in Lemma~\ref{lem:concentrate}, we bound the contribution to the expectation of $X$ by the values at the tail of the distribution. The proof is given in Appendix~\ref{appendix:MHR}.

\begin{lemma}\label{lem:contribution}
    Let $X$ be a random variable whose distribution satisfies MHR. For all $m\geq 2$, $Con[X\geq \alpha_{m}]\leq 6\alpha_{m}/m$.
\end{lemma}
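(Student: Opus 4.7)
The plan is to bound $Con[X\geq \alpha_m]$ by partitioning the tail $[\alpha_m,\infty)$ into the slabs $[k\alpha_m,(k+1)\alpha_m]$ for $k=1,2,\ldots$, bounding the contribution of each slab via Lemma~\ref{lem:concentrate}, and then summing a geometric-type series. The main leverage comes from the fact that Lemma~\ref{lem:concentrate} gives exponentially decaying tail bounds $\Pr[X\geq k\alpha_m]\leq 1/m^{k}$, which should more than compensate for the linear growth factor $(k+1)\alpha_m$ that we lose when we upper bound $x$ on each slab by its right endpoint.

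Concretely, I would write
\[
Con[X\geq \alpha_m]=\sum_{k=1}^{\infty}\int_{k\alpha_m}^{(k+1)\alpha_m} x\,f(x)\,dx \leq \sum_{k=1}^{\infty}(k+1)\alpha_m\cdot\Pr[X\geq k\alpha_m].
\]
Then I would invoke Lemma~\ref{lem:concentrate} with $d=k$ (legal because $k\geq 1$ and $m\geq 2\geq 1$) to obtain $k\alpha_m\geq \alpha_{m^k}$, which by the definition of $\alpha_{m^k}$ implies $\Pr[X\geq k\alpha_m]\leq 1/m^{k}$. Plugging this in gives
\[
Con[X\geq \alpha_m]\leq \alpha_m\sum_{k=1}^{\infty}\frac{k+1}{m^{k}}=\alpha_m\cdot\frac{2m-1}{(m-1)^{2}},
\]
using the standard identities $\sum_{k\geq 1}k/m^{k}=m/(m-1)^{2}$ and $\sum_{k\geq 1}1/m^{k}=1/(m-1)$.

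The last step is the purely algebraic verification that $\dfrac{2m-1}{(m-1)^{2}}\leq \dfrac{6}{m}$ for all $m\geq 2$, which is equivalent to $(4m-3)(m-2)\geq 0$ and hence clearly holds in the required range (with equality at $m=2$, where both sides equal $3$). This yields $Con[X\geq \alpha_m]\leq 6\alpha_m/m$ as claimed.

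I do not anticipate a genuine obstacle: the only substantive ingredient is Lemma~\ref{lem:concentrate}, which converts MHR's log-concavity of the survival function into a multiplicative decay statement on the quantiles that meshes perfectly with the geometric slab decomposition. The constant $6$ is a small bit of slack chosen so that the bound is tight enough to match the worst case $m=2$ while being clean for all $m\geq 2$; any minor loosening in the slab bound (e.g., using $(k+1)\alpha_m$ instead of integrating against the density) is easily absorbed by the exponential decay in $k$.
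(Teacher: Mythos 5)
Your proof is correct, and it takes a genuinely different route from the paper's. You partition the tail into equally spaced slabs $[k\alpha_m,(k+1)\alpha_m]$, apply Lemma~\ref{lem:concentrate} once per slab with $d=k$ to get the clean tail bound $\Pr[X\geq k\alpha_m]\leq \Pr[X\geq \alpha_{m^k}]=1/m^k$, and then sum an arithmetico-geometric series in closed form to obtain $\alpha_m\cdot\frac{2m-1}{(m-1)^2}$, which you dominate by $6\alpha_m/m$ via an easy factorization. The paper instead uses a \emph{dyadic quantile} decomposition: it sets $\beta_i=\alpha_{m^{(2^i)}}$, slabs at $[\beta_i,\beta_{i+1}]$, bounds each slab by $\beta_{i+1}\cdot(1-F(\beta_i))=\beta_{i+1}/m^{2^i}$, upper-bounds $\beta_i\leq 2^i\alpha_m$ via Lemma~\ref{lem:concentrate}, peels off the $i=0$ term, and coarsens $m^{2^i}$ to $m^{2i}$ for $i\geq 1$ to again land on a geometric series. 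Both arguments hinge entirely on Lemma~\ref{lem:concentrate}, and both yield the constant $6$ with equality at $m=2$; yours is somewhat more elementary since it avoids the doubly exponential indexing and the $m^{2^i}\geq m^{2i}$ coarsening, and it incidentally gives the tighter intermediate bound $\frac{2m-1}{(m-1)^2}$ before relaxing to $6/m$. The paper's dyadic scheme is a bit heavier here, though the same quantile-doubling idea reappears in the proof of Lemma~\ref{lem:tinycontribution}, which may be why the authors phrased it that way.
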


\subsubsection{Proofs Omitted from Appendix~\ref{appendix:MHR}} \label{app:proofs for properties MHR}

\begin{prevproof}{Lemma}{lem:concentrate}
It is not hard to see that $f(x)>0$, for all $x \in (u_{min},u_{max})$. For a contradiction, assume this is not true, that is, for some $x'\in(u_{min},u_{max})$, $f(x')=0$. We know $1-F(x')>0$. Thus $\frac{f(x')}{1-F(x')}=0$. Since the distribution satisfies MHR and $1-F(x)$ is positive for all $x\in(u_{min},x')$, $f(x) = 0$ in this interval. Hence, it must also be that $F(x)=0$ in $[u_{min},x')$. Since $x'>u_{min}$, it follows that $u_{min}\neq\sup\{x|F(x)=0\}$, a contradiction.
    
    Since $f(x)>0$ in $(u_{min},u_{max})$, $F(x)$ is monotone in $(u_{min},u_{max})$. So we can define the inverse $F^{-1}(x)$ in $(u_{min},u_{max})$. It is not hard to see that for any $p\in[1,+\infty)$, $F(\alpha_{p})=1-1/p$ and $\alpha_p = F^{-1}(1-1/p)$. 
    
    Now let $G(x) = \log_e (1-F(x))$. We will show that $G(x)$ is a concave function.
    
    Let us consider the derivative of $G(x)$. By the definition of MHR, $G'(x) = \frac{-f(x)}{1-F(x)}$ is monotonically non-increasing. Therefore, $G(x)$ is concave. It follows that, for every $m$, by the concavity of $G(x)$, the following inequality holds:
    $$G\left(\frac{d-1}{d}\cdot\alpha_1+\frac{1}{d}\cdot\alpha_{m^d}\right)\geq \frac{d-1}{d} G(\alpha_1)+\frac{1}{d}G(\alpha_{m^d}).$$
    
    Let us rewrite the RHS  as follows
    \begin{align*}
     \frac{d-1}{d} G(\alpha_1)+\frac{1}{d}G(\alpha_{m^d})
       =\frac{d-1}{d}\log_e 1+\frac{1}{d}\log_e (1-F(\alpha_{m^d}))
       =\frac{1}{d} \log_e \left(\frac{1}{m^d}\right)
       =\log_e \left(\frac{1}{m}\right)
    \end{align*}
    
  Hence, we have the following:
    \begin{align*}
        & G\left(\frac{d-1}{d}\cdot\alpha_1+\frac{1}{d}\cdot\alpha_{m^d}\right) \geq \log_e \left(\frac{1}{m}\right)\\
        \Longrightarrow & \log_e \left(1-F\left(\frac{d-1}{d}\cdot\alpha_1+\frac{1}{d}\cdot\alpha_{m^d}\right)\right) \geq \log_e \left(\frac{1}{m}\right)\\
        \Longrightarrow & 1-F\left(\frac{d-1}{d}\cdot\alpha_1+\frac{1}{d}\cdot\alpha_{m^d}\right) \geq \frac{1}{m}\\
        \Longrightarrow & 1-F\left(\frac{d-1}{d}\cdot\alpha_1+\frac{1}{d}\cdot\alpha_{m^d}\right) \geq 1-F(\alpha_m)\\
        \Longrightarrow & F(\alpha_m) \geq F\left(\frac{d-1}{d}\cdot\alpha_1+\frac{1}{d}\cdot\alpha_{m^d}\right)\\
        \Longrightarrow & \alpha_m \geq \frac{d-1}{d}\cdot\alpha_1+\frac{1}{d}\cdot\alpha_{m^d} \quad(F\ is\ monotone\ increasing)\\
        \Longrightarrow & \alpha_m \geq \frac{1}{d}\cdot\alpha_{m^d} \quad(u_{min}\geq 0)\\
        \Longrightarrow & d\cdot \alpha_m \geq \alpha_{m^d}.
    \end{align*}    
\end{prevproof}

\begin{prevproof}{Lemma}{lem:contribution}
Let $S = Con[X\geq \alpha_{m}]$, and consider the sequence $\{\beta_i := \alpha_{m^{(2^i)}}\}$, defined for all non-negative integers $i$. It can easily be seen that $\lim_{i\rightarrow+\infty}\alpha_{m^{(2^{i})}}=u_{max}$; hence, $\lim_{i\rightarrow+\infty} \beta_i = u_{max}$ and  by continuity $\lim_{i\rightarrow+\infty} F(\beta_i) =F(u_{max})=1$.
    
    Also, $$\int_{\beta_i}^{\beta_{i+1}} x\cdot f(x) dx\leq \beta_{i+1}(1-F(\beta_i)) = \beta_{i+1}/m^{(2^i)}.$$ Moreover, Lemma~\ref{lem:concentrate} implies that $\beta_i\leq 2\beta_{i-1}$; thus, $\beta_i\leq 2^{i}\beta_0\leq 2^{i}\alpha_m$. Hence, we have the following:
    \begin{align*}
        S =& \int_{\alpha_{m}}^{u_{max}} x\cdot f(x) dx\leq \sum_{i=0}^{+\infty}\frac{\beta_{i+1}}{m^{(2^i)}}\leq \sum_{i=0}^{+\infty}\frac{2^{i+1}\alpha_m}{m^{(2^i)}}\\
        \leq& \frac{2\alpha_m}{m}+\sum_{i=1}^{+\infty}\frac{2^{(i+1)}\alpha_m}{m^{(2i)}} = \frac{2\alpha_m}{m}+\frac{4\alpha_m}{m^2}\sum_{i=0}^{+\infty}\left(\frac{2}{m^2}\right)^i\\
         =& \frac{2\alpha_m}{m}+\frac{4\alpha_m}{m^2}\cdot \frac{1}{1-2/m^2}\\
         \leq & \frac{2\alpha_m}{m}+\frac{4\alpha_m}{m}\\
         \leq&\frac{6\alpha_m}{m}.
    \end{align*}
    
\end{prevproof}

\subsection{Proof of Theorem~\ref{thm:extreme MHR}: Extreme Value Theorem for MHR Distributions} \label{sec:proof of extreme MHR} \label{sec:beta}
We start with some useful notation. For all $i=1,\ldots,n$, we denote by $F_{i}$  the distribution of variable $X_{i}$.  We also let $\alpha_{m}^{(i)}:=\inf\left\{x|F_i(x)\geq 1-\frac{1}{m}\right\}$, for all $m \ge 1$. Moreover, we assume that $n$ is a power of $2$. If not, we can always include at most $n$ additional random variables that are detreministically $0$, making the total number of variables a power of $2$.

We proceed with the proof of Theorem~\ref{thm:extreme MHR}. The threshold $\beta$ is computed by an algorithm. At a high level, the algorithm proceeds in $O(\log n)$ rounds, indexed by $t \in \{0,\ldots,\log_2 n\}$,   eliminating half of the variables at each round. The way the elimination works is as follows. In round $t$, we compute for each of the variables that have survived so far the threshold $\alpha_{n/2^t}$ beyond which the size of the tail of their distribution becomes smaller than $2^t\over n$. We then sort these thresholds and eliminate the bottom half of the variables, recording the threshold of the last variable that survived this round. The maximum of these records among the $\log_2 n$ rounds of the algorithm is our $\beta$. The pseudocode of the algorithm is given below. Given that we may  only be given oracle access to the distributions $\{F_i\}_{i \in [n]}$, we allow some slack $\eta \le {1 \over 2}$ in the computation of our thresholds so that the computation is efficient. If we know the distributions explicitly, the description of the algorithm simplifies to the case $\eta=0$.
\begin{algorithm}[h!] 
        \caption{Algorithm for finding $\beta$}
    \begin{algorithmic}[1]
        \STATE  Define the permutation of the variables $\pi_{0}(i) = i$, $\forall$ $i\in[n]$, and the set of remaining variables $Q_{0}=[n]$.\\
        \FOR{$t:= 0$ to $\log_2 n-1$}
            \STATE For all $j\in[n/2^t]$, compute some $x_{n/2^{t}}^{(\pi_t(j))} \in[1-\eta,1+\eta]\cdot\alpha_{n/2^{t}}^{(\pi_t(j))}$, for a small constant $\eta \in [0,1/2)$\\
            \STATE Sort these $n/2^t$ numbers in decreasing order $\pi_{t+1}$ such that\\ $x_{n/2^{t}}^{(\pi_{t+1}(1))}\geq x_{n/2^{t}}^{(\pi_{t+1}(2))}\geq\ldots\geq x_{n/2^{t}}^{(\pi_{t+1}(n/2^t))}$\\
            \STATE $Q_{t+1}:=\{\ \pi_{t+1}(i)\ |\ i\leq n/2^{t+1}\ \}$\\
            \STATE $\beta_t:= x_{n/2^{t}}^{(\pi_{t+1}(n/2^{t+1}))}$\\
        \ENDFOR
        \STATE Compute  $x_{2}^{(\pi_{\log_2 n}(1))}\in[1-\eta,1+\eta]\cdot\alpha_{2}^{(\pi_{\log_2 n}(1))}$
        \STATE Set $\beta_{\log_2 n}:=x_{2}^{(\pi_{\log_2 n}(1))}$
        \STATE Output $\beta:=\max_{t} \beta_t$
    \end{algorithmic}
\end{algorithm}

Crucial in the proof of the theorem is the following lemma.
\begin{lemma}\label{lem:tinycontribution}
    For all $i \in [n]$ and $\epsilon \in (0,{1/4})$, let $S_i = Con[X_{i}\geq 2\log_2(\frac{1}{\epsilon})\cdot\beta]$, where $Con[\cdot]$ is defined as in Definition~\ref{def:contribution}. Then $$\sum_{i=1}^n S_i \leq 36\log_2 ({1}/{\epsilon})\epsilon\cdot \beta,~~\text{for all {$\epsilon \in (0,1/4)$}.}$$
\end{lemma}

\begin{proof}
    Let $d = \log_2 (\frac{1}{\epsilon})$ and {notice that $d\geq 2$}. It is not hard to see that we can divide $[n]$ into $(\log_2 n)+1$ different groups $\{G_t\}_{t\in\{0,\ldots,\log_2 n\}}$ based on the sets $Q_t$ maintained by the algorithm, as follows. For $t \in \{0,\ldots,\log_2 n\}$, set $$G_t = \begin{cases} Q_{t}\setminus Q_{t+1} & t<\log_2 n\\ Q_{\log_2 n} & t=\log_2 n\end{cases}$$

    Now, it is not hard to see that, for all $t<\log_2 n$ and all $i\in G_t$, $S_i\leq Con[X_{i}\geq 2d\cdot \beta_{t}]$, since $\beta_{t}\leq \beta$. Also for any $i\in G_{t}$, there must exist some $k\in(n/2^{t+1},n/2^{t}]$, such that $i=\pi_{t+1}(k)$. Then by the definition of the algorithm, we know that 
    $$(1-\eta)\alpha_{n/2^{t}}^{(i)}\leq x_{n/2^{t}}^{(i)}\leq x_{n/2^{t}}^{(\pi_{t+1}(n/2^{t+1}))} = \beta_t.$$
    Recall that $\eta$ is chosen to satisfy $2\geq 1/(1-\eta)$.  Then
    $d\cdot\alpha_{n/2^{t}}^{(i)}\leq 2d\cdot \beta_{t}$. But Lemma~\ref{lem:concentrate} gives $d\cdot\alpha_{n/2^{t}}^{(i)}\geq \alpha_{(n/2^{t})^d}^{(i)}$. Hence, 
    $$2d\cdot \beta_{t}\geq d\cdot\alpha_{n/2^{t}}^{(i)}\geq \alpha_{(n/2^{t})^d}^{(i)},$$
    which implies that 
 $$Con[v_{i}\geq 2d\cdot \beta_{t}]\leq Con[v_{i}\geq\alpha_{(n/2^{t})^{d}}^{(i)}].$$
    
    Using Lemma~\ref{lem:contribution}, we know that $$Con[v_{i}\geq\alpha_{(n/2^{t})^{d}}^{(i)}]\leq 6\alpha_{(n/2^{t})^d}^{(i)}(2^{t}/n)^d\leq 12d\beta_t (2^{t}/n)^d.$$
    
    Now, since $|G_t| = n/2^{t+1}$, 
    $$\sum_{i\in G_t} S_i \leq  12d\beta_{t}(2^{t}/n)^d\times n/2^{t+1} = 6d\cdot\beta_{t} (2^{t}/n)^{d-1} = \frac{6d\cdot\beta_{t}}{n^{d-1}}(2^{d-1})^{t}.$$
    
    Thus, 
    \begin{align*}
    \sum_{i\in [n] \setminus G_{\log_2 n}} S_i \leq& \sum_{t= 0}^{(\log_2 n) -1} \frac{6d\cdot\beta_{t}}{n^{d-1}}(2^{d-1})^{t}\\
     \leq &\frac{6d\cdot \beta}{n^{d-1}}\cdot\frac{(2^{d-1})^{\log_2 n}-1}{2^{d-1}-1}\\
     = &\frac{6d\cdot \beta}{n^{d-1}}\cdot\frac{n^{d-1}-1}{2^{d-1}-1}\\
     \leq & \frac{12d\cdot \beta}{2^{d}-2}\\
      \leq &  \frac{24d\cdot \beta}{2^{d}}\\
      = &24\log_2 ({1}/{\epsilon})\epsilon\cdot\beta
  \end{align*}
  
  Let $i$ be the unique element in $G_{\log_2 n}$. Then $\beta_{\log_2 n} = x_{2}^{(i)}$. Using Lemma~\ref{lem:concentrate} and the definition of $x_{2}^{(i)}$, we obtain 
  $$2d\cdot \beta\geq 2d\cdot \beta_{\log_2 n}\geq 2d\cdot x_{2}^{(i)}\geq 2(1-\eta)d\cdot\alpha_{2}^{(i)}\geq d\cdot\alpha_{2}^{(i)}\geq \alpha_{2^{d}}^{(i)}= \alpha_{1/\epsilon}^{(i)}.$$
   Using the above and Lemma~\ref{lem:contribution} we get
  $$S_i\leq Con[v_{i}\geq\alpha_{1/\epsilon}^{(i)}]\leq 6\epsilon\cdot\alpha_{1/\epsilon}^{(i)}\leq 12\epsilon d\cdot\beta.$$
  
  Putting everything together, $$\sum_{i=1}^n S_i \leq 36\log_2 ({1}/{\epsilon})\epsilon\cdot \beta.$$
\end{proof}
 
Using Lemma~\ref{lem:tinycontribution}, we obtain
$$\int_{2\beta \log_2{1/\epsilon}}^{+\infty} t \cdot f_{\max_i\{X_i\}}(t) dt \le \sum_{i=1}^n S_i \le 36\log_2 ({1}/{\epsilon})\epsilon\cdot \beta. $$

{It remains to show that 
\begin{align}
\Pr[ \max_i\{X_i\} \ge \beta/2] \ge {1-\frac{1}{e^{1/2}}}. \label{eq:probabbiliiiti is large}
\end{align}
We show that, for all $t$, $\Pr\left[ \max_i\{X_i\} \ge {\beta_t \over 1+\eta}\right] \ge {1-\frac{1}{e^{1/2}}}$, where $\eta$ is the parameter used in Algorithm~1. This is sufficient to imply~\eqref{eq:probabbiliiiti is large}, as $\eta \le 1/2$. Observe that for all $i\in[n/2^{t+1}]$, 
    $$(1+\eta)\cdot\alpha_{n/2^{t}}^{(\pi_{t+1}(i))}\geq x_{n/2^{t}}^{(\pi_{t+1}(i))}\geq \beta_{t},$$
    where $\pi_{t+1}$ is the permutation constructed in the $t$-th round of the algorithm. This implies
    $$\alpha_{n/2^{t}}^{(\pi_{t+1}(i))}\geq \frac{\beta_t}{1+\eta}.$$
    Hence, for all $i\in[n/2^{t+1}]$, $\Pr[X_{\pi_{t+1}(i)} \le \frac{\beta_t}{1+\eta}]\leq 1-2^{t}/n$. Thus,
    \begin{align*}
    \Pr\left[ \max_i\{X_i\} \ge {\beta_t \over 1+\eta}\right] &\ge \Pr \left[\exists i\in[n/2^{t+1}],\ X_{\pi_{t+1}(i)}\geq \frac{\beta_t}{1+\eta} \right]\\ 
    &\geq 1-(1-2^{t}/n)^{n/2^{t+1}}\\&\geq 1-\frac{1}{e^{1/2}}.
    \end{align*}
Eq. \eqref{eq:probabbiliiiti is large} now follows.}

\subsection{Proof of Theorem~\ref{thm:reduction MHR to balanced}: Reduction from MHR to Bounded Distributions} \label{app:reduction from MHR to balanced}

Recall that we represent by $\{v_{i}\}_{i\in[n]}$ the values of the buyer for the items. We will denote their distributions by $\{F_i\}_{i\in [n]}$ throughout this appendix. 

 \subsubsection{Relating $OPT$ to $\beta$} \label{sec:OPT vs Beta}

We demonstrate that the anchoring point $\beta$ of Theorem~\ref{thm:extreme MHR} provides a lower bound to the optimal revenue. In particular, we show that the optimal revenue satisfies $OPT=\Omega(\beta)$.  This lemma justifies the relevance of $\beta$.
\begin{lemma}\label{lem:constantopt}
If $\beta$ is the anchoring point of Theorem~\ref{thm:extreme MHR}, then $OPT\geq \left({1-\frac{1}{\sqrt{e}}}\right){\beta \over 2}$.
\end{lemma}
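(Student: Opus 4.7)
The plan is to exhibit a single concrete price vector achieving expected revenue at least $(1-1/\sqrt{e})\beta/2$, which immediately lower-bounds $OPT$ by the same quantity. The natural candidate, suggested by the ``anchoring'' property in Theorem~\ref{thm:extreme MHR}, is the uniform price vector $P^\ast = (\beta/2,\beta/2,\ldots,\beta/2)$.

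Under $P^\ast$, the buyer's behavior is particularly simple to analyze: since every item carries the same price, the item $i^\ast$ that maximizes $v_i - p_i$ is precisely the one maximizing $v_i$, i.e.\ the item realizing $\max_i\{v_i\}$. Moreover, this item is actually purchased (and the seller collects $\beta/2$) exactly when the participation constraint $v_{i^\ast} - \beta/2 \geq 0$ holds, which is equivalent to $\max_i\{v_i\} \geq \beta/2$.

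Therefore, from the definition of ${\cal R}_{P^\ast}$ in Eq.~\eqref{eq:objective},
\[
{\cal R}_{P^\ast} \;=\; \frac{\beta}{2}\cdot \Pr\!\left[\max_i\{v_i\}\geq \frac{\beta}{2}\right].
\]
Invoking the first conclusion of Theorem~\ref{thm:extreme MHR}, namely $\Pr[\max_i\{v_i\}\geq \beta/2] \geq 1 - 1/\sqrt{e}$, we obtain ${\cal R}_{P^\ast} \geq (1-1/\sqrt{e})\,\beta/2$. Since $OPT \geq {\cal R}_{P^\ast}$ by definition of optimality, the lemma follows.

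No real obstacle is anticipated here: the proof is a one-line consequence of the anchoring half of Theorem~\ref{thm:extreme MHR}, once one observes that under a constant price vector the ``arg max'' and the participation events reduce cleanly to a single tail event on $\max_i v_i$. The only subtlety worth double-checking is the tie-breaking convention in~\eqref{eq:objective}, but since all prices are equal, ties are irrelevant for the revenue collected---whichever item is chosen, the seller is paid $\beta/2$.
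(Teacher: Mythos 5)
Your proof is correct and matches the paper's own argument exactly: both price every item at $\beta/2$, reduce the revenue to $\frac{\beta}{2}\Pr[\max_i v_i \ge \beta/2]$, and invoke the anchoring property of Theorem~\ref{thm:extreme MHR}. Your write-up is slightly more explicit about why the uniform price collapses the winner event to a tail event on $\max_i v_i$, but the underlying idea is identical.
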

 
\begin{prevproof}{Lemma}{lem:constantopt}
Suppose  we priced all items at $\beta \over 2$. The revenue we would get from such price vector would be at least
$${\beta \over 2} \Pr \left[\max\{v_i\} \ge {\beta \over 2}\right] \ge {\beta \over 2}\left({1-\frac{1}{\sqrt{e}}}\right),$$
where we used Theorem~\ref{thm:extreme MHR}. Hence,
$OPT\geq \left({1-\frac{1}{\sqrt{e}}}\right){\beta \over 2}$.
%
%
%
\end{prevproof}
   \noindent For simplicity, we set $c_{1}:={1 \over 2}\left(1-\frac{1}{\sqrt{e}}\right)$ for the next appendices,  keeping in mind that $c_1$ is an absolute constant.


    \subsubsection{Restricting the Prices} \label{sec:bounding the prices}
 
This appendix culminates in Lemma~\ref{lem:boundingprice} (given below), which states that we can constrain our prices to the set $[\epsilon\cdot\beta, 2\log_2(\frac{1}{\epsilon})\cdot\beta]$ without hurting the revenue by more than a fraction of $\frac{\epsilon+c_{2}(\epsilon)}{c_{1}},$ where $c_{2}(\epsilon):=36\log_2 (\frac{1}{\epsilon})\epsilon$ and $c_1$ is the constant defined in Appendix~\ref{sec:OPT vs Beta}.  We prove this in two steps. First, exploiting our extreme value theorem for MHR distributions (Theorem~\ref{thm:extreme MHR}), we show  that for a given price vector, if we lower the prices that are above $2\log_2(\frac{1}{\epsilon})\cdot\beta$ to $2\log_2(\frac{1}{\epsilon})\cdot\beta$, the loss in revenue is bounded by $c_{2}(\epsilon)\cdot\beta$, namely
  \begin{lemma}\label{lem:trunchighprice}
    Fix an arbitrary $\epsilon \in (0,1/4)$. Given a price vector $P$, define $P'$ as follows: set $p'_i = p_i$, if $p_i\leq 2\log_2(\frac{1}{\epsilon})\cdot\beta$, and  $p'_i = 2\log_2(\frac{1}{\epsilon})\cdot\beta$ otherwise. Then the expected revenues $\mathcal{R}_{P}$ and $\mathcal{R}_{P'}$ achieved by price vectors $P$ and $P'$ respectively satisfy: $ \mathcal{R}_{P'} \geq \mathcal{R}_{P}-c_{2}(\epsilon)\cdot\beta$.
\end{lemma}

Using Lemma~\ref{lem:trunchighprice}, we obtain our main result for this appendix. Observe that we can make the loss in revenue arbitrarily small be taking $\epsilon$ sufficiently small.
  

\begin{lemma}\label{lem:boundingprice}
    For all $\epsilon \in (0,1/4)$, there exists a price vector $P^{*}\in [\epsilon\cdot\beta, 2\log_2(\frac{1}{\epsilon})\cdot\beta]^{n}$, such that the revenue from this price vector satisfies $\mathcal{R}_{P^{*}}\geq\left(1-\frac{\epsilon+c_{2}(\epsilon)}{c_{1}} \right)OPT,$ where $OPT$ is the optimal revenue under any price vector.
\end{lemma}

\noindent All proofs of this appendix can be found in Appendix~\ref{sec:boundingprice}. 
\subsubsection{Truncating the Value Distributions} \label{sec:truncating}

Exploiting Lemma~\ref{lem:boundingprice}, i.e. that we can constrain the prices to $[\epsilon\cdot\beta, 2\log_2(\frac{1}{\epsilon})\cdot\beta]$ without hurting the revenue, we show Theorem~\ref{thm:reduction MHR to balanced}, i.e. that we can also constrain the support of the value distributions into a bounded range. In particular, we show that we can ``truncate'' the value distributions to the range $[{\epsilon \over 2}\cdot\beta, 2\log_2(\frac{1}{\epsilon})\cdot\beta]$, where for our purposes ``truncating'' means this: for every distribution $F_{i}$, we shift all probability mass from $(2\log_2 (\frac{1}{\epsilon})\cdot\beta,+\infty)$ to the point $2\log_2 (\frac{1}{\epsilon})\cdot\beta$, and all probability mass from $(-\infty, \epsilon\cdot\beta)$ to  $\frac{\epsilon}{2}\cdot\beta$.\notshow{\yangnote{Remove: Clearly, this modification can be computed in polynomial time if $F_i$ is known explicitly; if we have oracle access to $F_i$, we can produce in polynomial time another oracle whose output behaves according to the modified distribution.}} We show that our modification does not hurt the revenue. 
That is, we establish a polynomial-time reduction from the problem of computing a near-optimal price vector when the buyer's value distributions are arbitrary MHR distributions to the case where the buyer's value distributions are supported on a bounded interval $[u_{min}, c \cdot u_{min}]$, where $c=c(\epsilon)=4 {1 \over \epsilon} \log_2({1 \over \epsilon})$ is a constant that only depends on the desired approximation $\epsilon$. The proof of Theorem~\ref{lem:reduction} is given in Appendix~\ref{sec:boundingdistribution}.


\begin{theorem}[Reduction from MHR to Bounded Distributions]\label{lem:reduction}
Given $\epsilon \in (0,1/4)$ and a collection of mutually independent random variables $\{v_i\}_i$ that are MHR, let us define a new collection of random variables $\{\tilde{v}_i\}_i$ via the following coupling: for all $i\in [n]$, set  $\tilde{v}_i={\epsilon \over 2}\cdot \beta$ if $v_i < \epsilon \cdot \beta$, set $\tilde{v}_i=2\log_2 (\frac{1}{\epsilon})\cdot\beta$ if $v_i\ge2\log_2 (\frac{1}{\epsilon})\cdot\beta$, and set $\tilde{v}_i=v_i$ otherwise, where $\beta = \beta(\{v_i\}_i)$ is the anchoring point of Theorem~\ref{thm:extreme MHR}
computed from the distributions of the variables $\{v_i\}_i$. Let also $\widetilde{OPT}$ be the optimal revenue of the seller when the buyer's values are distributed as $\{\tilde{v}_i\}_{i\in[n]}$ and $OPT$  the optimal revenue when the buyer's values are distributed as $\{v_i\}_{i\in[n]}$. Then given a price vector that achieves revenue $(1-\delta)\cdot \widetilde{OPT}$ when the buyer's values are distributed as $\{\tilde{v}_i\}_{i\in[n]}$, we can efficientlly compute a price vector with revenue 
$$\left(1-\delta-\frac{2\epsilon+3c_2(\epsilon)}{c_1}\right)OPT$$
when the buyer's values are distributed as $\{{v}_i\}_{i\in[n]}$.
\end{theorem}
Theorem~\ref{thm:reduction MHR to balanced} follows from Theorem~\ref{lem:reduction}.



%

\subsection{Proofs Omitted from Appendix~\ref{app:reduction from MHR to balanced}}\label{appendix:bounding}
\subsubsection{Restricting the Price Range for MHR Distributions: the Proofs}\label{sec:boundingprice}

\begin{prevproof}{Lemma}{lem:trunchighprice}
We will show a slightly more general result. Given a price vector, if we make all prices that are above $\alpha$ equal to $\alpha$, then the loss in revenue can bounded by the sum, over all items whose price was above $\alpha$, of the contribution to this item's expected value by points above $\alpha$.  Formally,


\begin{lemma}\label{lem:trunchigh}
Let $\alpha>0$ and $S(\alpha) = Con[\max_i{v_{i}}\geq \alpha]$, where $Con[\cdot]$ is defined as in Definition~\ref{def:contribution}.  Moreover, for a given price vector $P$, define $P'$ as follows: set $p'_i = p_{i}$, if $p_i<\alpha$, and $p'_i = \alpha$, otherwise. Then the expected revenues $\mathcal{R}_{P}$ and $\mathcal{R}_{P'}$ from $P$ and $P'$ respectively satisfy $$\mathcal{R}_{P'} \ge \mathcal{R}_{P}-S(\alpha).$$

\end{lemma}

\begin{proof}
Let $S_{exp} = \{\ i\ |\ p_i > \alpha\ \}$ be the set of expensive items under $P$, and let $P''$ be a new price vector obtained from $P$ by changing the price of all items $i \in S_{exp}$ from $p_{i}$ to $+\infty$. If we switch from $P$ to $P''$, the only case where the buyer makes a different decision is when she used to buy some item from $S_{exp}$ under $P$. So the decrease in revenue can be bounded by the contribution to $\mathcal{R}_{P}$ from the items in $S_{exp}$. Clearly, this contribution is at most $ Con\left[\max_{i\in S_{exp}} v_{i}\geq \alpha \right] \le S(\alpha)$.

We proceed to argue that $\mathcal{R}_{P'} \ge \mathcal{R}_{P''}$. If we switch from $P''$ to $P'$, the buyer will pointwise either make the same decision or switch to buy some item in $S_{exp}$ paying $\alpha$. Since $\alpha$ is larger than any finite price in $P''$, the revenue does not decrease. 

Combining the two inequalities, we have $\mathcal{R}_{P'}\geq \mathcal{R}_{P''}\geq \mathcal{R}_{P}-S(\alpha).$
\end{proof}

\notshow{We showed that setting the price of expensive items to $+\infty$ is not detrimental to the revenue. We show that the same is true of a less aggressive strategy.

     
\begin{lemma}\label{lem:decreaseprice}
     Let $P$ be a price vector, define $S_{\infty}= \{i: p_{i}=+\infty \}$, and let $p_{max} \geq \max_{i\in[n] \setminus S_{\infty}}p_i$. Let $P'$ be the following price vector: set $p'_i = p_{max}$, for all $i\in S_{\infty}$, and $p'_i=p_i$ otherwise. Then the expected revenues $\mathcal{R}_{P}$ and $\mathcal{R}_{P'}$ from $P$ and $P'$ respectively satisfy $$\mathcal{R}_{P'}\geq\mathcal{R}_{P}.$$
 \end{lemma}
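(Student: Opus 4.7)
The plan is to establish the inequality \emph{pointwise}: for every realization $\vec{v}=(v_1,\ldots,v_n)$ of the values, I will show that the revenue $R(\vec{v};P')$ collected under $P'$ is at least the revenue $R(\vec{v};P)$ collected under $P$. Taking expectation then yields the claim. The starting observation is that items $i\in S_{\infty}$ are never sold under $P$, because their surplus $v_i-p_i=-\infty$ is strictly dominated by, e.g., doing nothing; hence any sale under $P$ is to some $i^{*}\in[n]\setminus S_{\infty}$ with revenue $p_{i^{*}}\le p_{max}$. Under $P'$, the surpluses of items in $[n]\setminus S_{\infty}$ are unchanged ($p'_i=p_i$), while items in $S_{\infty}$ now have finite price $p_{max}$ and thus actually compete.

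I would then run a three-way case analysis on $\vec{v}$. (i) If under $P$ no item is sold, then trivially $R(\vec{v};P)=0\le R(\vec{v};P')$. (ii) If under $P$ item $i^{*}$ is sold and under $P'$ the winner $j^{*}$ lies in $S_{\infty}$, then $R(\vec{v};P')=p_{max}\ge p_{i^{*}}=R(\vec{v};P)$, so we are done. (iii) If under $P'$ the winner $j^{*}$ lies in $[n]\setminus S_{\infty}$, then because the surpluses of such items agree under $P$ and $P'$, we get $v_{j^{*}}-p_{j^{*}}\ge v_{i}-p_{i}$ for all $i\in[n]\setminus S_{\infty}$ and in particular $v_{j^{*}}-p_{j^{*}}=v_{i^{*}}-p_{i^{*}}$. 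In this maximal-surplus tied set (restricted to $[n]\setminus S_{\infty}$) the tie-breaking rule already selected $i^{*}$ as winner under $P$, so by consistency it selects $i^{*}=j^{*}$ under $P'$ too, giving $R(\vec{v};P')=p_{i^{*}}$. The only remaining possibility in case (iii) is that no item is sold under $P'$; but this cannot happen when $i^{*}$ is sold under $P$, since then $v_{i^{*}}\ge p_{i^{*}}=p'_{i^{*}}$, so item $i^{*}$ remains a non-negative-surplus option under $P'$ and \emph{some} sale must occur.

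The main subtlety I anticipate is case (iii) and the treatment of the tie-breaking rule. Specifically, under $P'$ the maximal-surplus tied set may include items from $S_{\infty}$ that were absent from the tie under $P$. I would formalize ``consistent tie-breaking'' as a fixed global priority order on items: then the $P'$-winner among the (possibly enlarged) tied set is either the highest-priority item from $S_{\infty}$ in the tie (case (ii)) or the $P$-winner $i^{*}$ itself; it can never be some other item of $[n]\setminus S_{\infty}$, because such an item would have beaten $i^{*}$ in the $P$-tie as well. Once this is spelled out, the case analysis is complete, and integrating $R(\vec{v};P')\ge R(\vec{v};P)$ over the joint distribution of $\vec{v}$ delivers $\mathcal{R}_{P'}\ge \mathcal{R}_{P}$.
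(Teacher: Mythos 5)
Your proof is correct and follows essentially the same route as the paper's: a pointwise case analysis showing that under $P'$ either the winner is unchanged (same revenue) or the winner moves to some $j\in S_{\infty}$ with price $p_{max}\ge p_{i^*}$, with the case of no sale under $P$ handled trivially. The only difference is cosmetic: you spell out the tie-breaking consistency argument (via a fixed priority order) more explicitly than the paper, which simply asserts that if the winner changes it must be to an item whose price was lowered.
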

 
 \begin{proof}
     Take any valuation vector $(v_1,v_2,\ldots,v_n)$ and suppose that $i$ is the winner under price vector $P$, i.e. $v_i-p_i\geq v_j-p_j$, for all $j$, and $v_i-p_i\geq 0$. Under $P'$, there are two possibilities: (1) $i$ is still the winner; then the contribution to the revenue is the same under $P$ and $P'$. (2) $i$ is not the winner; in this case, the winner should be among those items whose price was lowered from $P$ to $P'$. So the new winner must some $j \in S_{\infty}$. But notice that $p'_j = p_{max}\geq p_i$. 
     Hence, the contribution of item $j$ to revenue under price vector $P'$ is not smaller than the contribution of item $i$ to the revenue under price vector $P$. Hence, $\mathcal{R}_{P'}\geq\mathcal{R}_{P}$. 
\end{proof}}

Combining Theorem~\ref{thm:extreme MHR} with Lemma~\ref{lem:trunchigh}, it is easy to argue that if we truncate a price vector $P$ at value $2\log_2(\frac{1}{\epsilon})\cdot\beta$ to obtain a new price vector $P'$ the change in revenue can be bounded as follows: $ \mathcal{R}_{P'} \geq \mathcal{R}_{P}-c_{2}(\epsilon)\cdot\beta.$
\end{prevproof}


\begin{prevproof}{Lemma}{lem:boundingprice}
	Lemma~\ref{lem:trunchighprice} implies that, if we start from any price vector $P$, we can modify it into another price vector $P'$ that does not use any price above $2\log_2(\frac{1}{\epsilon})\cdot\beta$, and satisfies $\mathcal{R}_{P'}\geq \mathcal{R}_{P}-c_{2}(\epsilon)\cdot \beta$. Then Lemma~\ref{lem:trunclow} implies that we can change $P'$ into another vector $P''\in[\epsilon\cdot\beta, 2\log_2(\frac{1}{\epsilon})\cdot\beta]^{n}$, such that $\mathcal{R}_{P''}\geq \mathcal{R}_{P'}-\epsilon\cdot\beta$.
	
	By Lemma~\ref{lem:constantopt}, we know that $OPT\geq c_{1}\cdot\beta$. Hence, if we start with the optimal price vector $P$ and apply the above transformations, we will obtain a price vector $P^{*} \in[\epsilon\cdot\beta, 2\log_2(\frac{1}{\epsilon})\cdot\beta]^{n}$ such that
	$$\mathcal{R}_{P^{*}}\geq OPT-\big(\epsilon+c_{2}(\epsilon)\big)\cdot\beta\geq \left(1-\frac{\epsilon+c_{2}(\epsilon)}{c_{1}} \right)OPT.$$
\end{prevproof}

\subsubsection{Bounding the Support of the Distributions: the Proofs}\label{sec:boundingdistribution}

To establish Theorem~\ref{lem:reduction} we show that we can transform $\{v_i\}_{i\in[n]}$ into $\{\tilde{v}_i\}_{i\in[n]}$ such that, for all $i$, $\tilde{v}_i$ only takes values in $[\frac{\epsilon}{2}\cdot\beta, 2\log_2 (\frac{1}{\epsilon})\cdot\beta]$, and for any price vector $P\in[\epsilon\cdot\beta, 2\log_2(\frac{1}{\epsilon})\cdot\beta]^{n}$, $|\tilde{\mathcal{R}}_{P}-\mathcal{R}_{P}|\leq c_{2}(\epsilon)\cdot\beta$, where  $\mathcal{R}_{P}$ and $\tilde{\mathcal{R}}_{P}$ are respectively the revenues of the seller when the buyer's values are distributed as  $\{v_i\}_{i\in[n]}$ and $\{\tilde{v}_i\}_{i\in[n]}$. We first  show that one side of our truncation works.

\begin{lemma}\label{lem:boundinghigh}
   Given $\epsilon\in (0,1/4)$ and a collection of random variables $\{v_i\}_i$ that are MHR, let us define a new collection of random variables $\{\hat{v}_i\}_i$ via the following coupling: for all $i\in [n]$, if $v_i\leq 2\log_2 (\frac{1}{\epsilon})\cdot\beta$, set $\hat{v}_i=v_i$, otherwise set $\hat{v}_i = 2\log_2 (\frac{1}{\epsilon})\cdot\beta$, where $\beta = \beta(\{v_i\}_i)$ is the anchoring point of Theorem~\ref{thm:extreme MHR}
computed from the distributions of the variables $\{v_i\}_i$. Then, for any price vector $P\in [\epsilon\cdot\beta, 2\log_2(\frac{1}{\epsilon})\cdot\beta]^{n}$, $|\mathcal{R}_{P}-\hat{\mathcal{R}}_{P}|\leq c_{2}(\epsilon)\cdot\beta$, where $\mathcal{R}_P$ and $\hat{\mathcal{R}}_P$ are respectively the revenues of the seller when the buyer's values are distributed as $\{v_i\}_{i\in[n]}$ and as $\{\hat{v}_i\}_{i\in[n]}$.
\end{lemma}

\begin{proof}  For convenience let $d=\log_2 (\frac{1}{\epsilon})$, and let  $R_P$ and $\hat{R}_P$ be random variables representing the revenue when the buyer's values are $\{v_i\}_{i\in[n]}$ and $\{\hat{v}_i\}_{i\in[n]}$ respectively. Recall that $\{{v}_i\}_i$ and $\{\hat{v}_i\}_i$ are defined via a coupling, so $R_P-\hat{R}_P \neq 0$ only in the event $v_i\neq \hat{v}_i$, for some $i$. Notice that the probability of this event is $\Pr[\exists i, v_i> 2d\cdot\beta]$, and that $0\le R_P, \hat{R}_P \le 2d\cdot\beta$ since the maximum price of any item is $2d\cdot\beta$. Hence, we can bound $|\mathbb{E}[R_P]-\mathbb{E}[\hat{R}_P]|$ using Theorem~\ref{thm:extreme MHR} as follows:
%
\begin{align*}
|\mathbb{E}[R_P]-\mathbb{E}[\hat{R}_P]| &\le 2d\cdot\beta\cdot\Pr[\exists i, v_i> 2d\cdot\beta]\\
&=2d\cdot\beta\cdot\Pr[\max_{i}{v_{i}}\geq 2d\cdot\beta]\leq Con[\max_{i}{v_{i}}\geq 2d\cdot\beta]\leq c_{2}(\epsilon)\cdot\beta .
\end{align*}
    \notshow{For convenience let $d=\log_2 (\frac{1}{\epsilon})$. Recall that $\{{v}_i\}_i$ and $\{\hat{v}_i\}_i$ are defined via a coupling. We distinguish two events: (1) $v:=(v_1,v_2,\ldots,v_n) \equiv (\hat{v}_1,\hat{v}_2,\ldots,\hat{v}_n)=:\hat{v}$; (2) $v_i\neq \hat{v}_i$, for some $i$.
    
    Under Event (1), the item sold is the same under $v$ and $\hat{v}$. Hence, the revenue is the same under Event (1). So we only need to worry about Event (2). We show that the total contribution to the revenue from this event is very small. Indeed, the probability of this event is $\Pr[\exists i, v_i> 2d\cdot\beta]$, and the maximum price is $2d\cdot\beta$. So using our extreme value theorem (Theorem~\ref{thm:extreme MHR}), we can bound the contribution to the revenue from this event as follows.
    $$2d\cdot\beta\cdot\Pr[\exists i, v_i> 2d\cdot\beta]=2d\cdot\beta\cdot\Pr[\max_{i}{v_{i}}\geq 2d\cdot\beta]\leq Con[\max_{i}{v_{i}}\geq 2d\cdot\beta]\leq c_{2}(\epsilon)\cdot\beta .$$
    We obtain $|\mathcal{R}_{P}-\hat{\mathcal{R}}_{P}|\leq c_{2}(\epsilon)\cdot\beta$.}
\end{proof}

Next we show that the other side of the truncation works.

\begin{lemma}\label{lem:boundinglow}
       Given $\epsilon, \beta >0$ and a collection of random variables $\{\hat{v}_i\}_i$, let us define a new collection of random variables $\{\tilde{v}_i\}_i$ via the following coupling: for all $i\in [n]$, if $\hat{v}_i\geq \epsilon\cdot\beta$, set $\tilde{v}_i=\hat{v}_i$, otherwise set $\tilde{v}_{i}=\frac{\epsilon}{2}\cdot\beta$. Then, for any price vector $P\in [\epsilon\cdot\beta,+\infty)^{n}$, $\tilde{\mathcal{R}}_{P}=\hat{\mathcal{R}}_{P},$  where  $\hat{\mathcal{R}}_P$ and $\tilde{\mathcal{R}}_P$ are respectively the revenues of the seller when the buyer's values are distributed as $\{\hat{v}_i\}_{i\in[n]}$ and as $\{\tilde{v}_i\}_{i\in[n]}$.
\end{lemma}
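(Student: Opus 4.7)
The plan is to prove the equality of expected revenues by a pointwise (sample-by-sample) coupling argument: under the given coupling of $\{\hat v_i\}$ and $\{\tilde v_i\}$, the identity of the winning item and the price charged coincide on every realization, so the revenues agree not just in expectation but pointwise.

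First, I would fix a realization of the random variables, i.e., a value vector $(\hat v_1,\dots,\hat v_n)$ together with the coupled vector $(\tilde v_1,\dots,\tilde v_n)$. Partition $[n]$ into the set $A=\{i:\hat v_i\ge \epsilon\beta\}$ on which $\tilde v_i=\hat v_i$, and its complement $B=\{i:\hat v_i<\epsilon\beta\}$ on which $\tilde v_i=(\epsilon/2)\beta$. On $A$, the gaps $\hat v_i-p_i$ and $\tilde v_i-p_i$ are identical by construction. On $B$, both $\hat v_i<\epsilon\beta\le p_i$ and $\tilde v_i=(\epsilon/2)\beta<\epsilon\beta\le p_i$, so $\hat v_i-p_i<0$ and $\tilde v_i-p_i<0$. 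Hence, in both scenarios, no item in $B$ is eligible to be purchased (since the buyer only buys an item with non-negative surplus).

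Next I would argue about the buyer's choice. In each scenario the buyer picks $\arg\max_i\{v_i-p_i\}$ among items with non-negative surplus, with a consistent tie-breaking rule. From the previous paragraph, the pool of eligible items (those with non-negative surplus) is exactly $A$ in both cases, and the gaps $v_i-p_i$ for $i\in A$ are identical across the two scenarios. Therefore the (possibly empty) winning item is the same index $i^\star\in A$, and the price charged, $p_{i^\star}$, is the same. Hence the realized revenue under $(\hat v_1,\dots,\hat v_n)$ equals the realized revenue under $(\tilde v_1,\dots,\tilde v_n)$.

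Taking expectations with respect to the joint distribution of the coupled vectors then yields $\tilde{\mathcal R}_P=\hat{\mathcal R}_P$, as required. There is no real obstacle here beyond bookkeeping: the only subtlety is making sure that the cut-off $(\epsilon/2)\beta<\epsilon\beta$ is strict, so that truncated items remain ineligible under every price $p_i\ge\epsilon\beta$, and that the tie-breaking rule is invoked on the same multiset of gaps (which it is, since the gaps on $A$ are preserved exactly and items in $B$ are excluded in both worlds).
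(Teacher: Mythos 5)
Your argument is correct and is essentially the paper's own pointwise-coupling proof, just reorganized around the partition $A=\{i:\hat v_i\ge\epsilon\beta\}$ versus $B=[n]\setminus A$ rather than the paper's case split on whether a winner exists under $\hat v$. One small wording slip: you write that the pool of eligible items ``is exactly $A$,'' but items in $A$ need not have non-negative surplus; the intended (and sufficient) claim is that the set of eligible items is the \emph{same} subset of $A$ in both scenarios, since items in $B$ are ineligible in both and the gaps $v_i-p_i$ on $A$ coincide exactly.
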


\begin{proof}

Recall that the variables $\{\tilde{v}_i\}_i$ are defined via a coupling with the $\hat{v}_i$'s. Under the same coupling, sample values from the $\hat{v}_i$'s and the $\tilde{v}_i$'s. For all items $i$ such that $\hat{v}_{i}\neq \tilde{v}_{i}$, the price of item $i$ is higher than the value of item $i$ in both cases, so item $i$ will not be purchased in both cases. That means the buyer will make the same decision in both cases, as she will only consider items whose values are the same. So the revenues are pointwise equal.
\end{proof}

\noindent Putting these lemmas together we obtain our reduction.\\

\begin{prevproof}{Theorem}{lem:reduction}
Let $P$ be a near-optimal price vector when the values of the buyer are distributed as $\{\tilde{v}_i\}_{i\in[n]}$, i.e. one that satisfies
$$ \tilde{\mathcal{R}}_{P} \ge (1-\delta)\cdot \widetilde{OPT},$$
where $\tilde{\mathcal{R}}_{P}$ denotes the expected revenue of the seller under price vector $P$ when the buyer's values are $\{\tilde{v}_i\}_{i\in[n]}$.
Given that each $\tilde{v}_i$ lies in $[\frac{\epsilon}{2}\cdot\beta, 2\log_2(\frac{1}{\epsilon})\cdot\beta]$, it follows from Lemma~\ref{lem:betweenminmax} that we can (efficiently) transform $P$ into another vector $P'\in[\frac{\epsilon}{2}\cdot\beta, 2\log_2(\frac{1}{\epsilon})\cdot\beta]^n$, such that $\tilde{\mathcal{R}}_P\leq \tilde{\mathcal{R}}_{P'}$.

We can then apply the following efficient transformation from $P'$ to $P''$: For any $i$, if $p'_i<\epsilon\cdot\beta$, set $p''_i=\epsilon\cdot\beta$, and set $p''_i=p'_i$ otherwise. By Lemma~\ref{lem:trunclow}, we know that, $\tilde{\mathcal{R}}_{P''}\geq \tilde{\mathcal{R}}_{P'}-
\epsilon\cdot\beta.$

Now, since $P''$ is a price vector in $[\epsilon\cdot\beta, 2\log_2(\frac{1}{\epsilon})\cdot\beta]^n$, by Lemmas~\ref{lem:boundinghigh} and \ref{lem:boundinglow}, we get $\mathcal{R}_{P''}\geq \tilde{\mathcal{R}}_{P''} - c_2(\epsilon)\cdot\beta,$
where $\mathcal{R}_{P''}$ is the expected revenue of the seller under price vector $P''$ when the values of the buyers are $\{v_i\}_i$.

On the other hand, suppose that $P^*$ is the optimal price vector in $[\epsilon\cdot\beta, 2\log_2(\frac{1}{\epsilon})\cdot\beta]^n$ for values $\{v_i\}_{i\in[n]}$. By Lemma~\ref{lem:boundingprice}, we know that $\mathcal{R}_{P^*}\geq \left(1-\frac{\epsilon+c_2(\epsilon)}{c_1}\right)OPT$. Now Lemmas~\ref{lem:boundinghigh} and \ref{lem:boundinglow} give $$\tilde{\mathcal{R}}_{P^*}\geq \mathcal{R}_{P^*}-c_2(\epsilon)\cdot\beta \geq \left(1-\frac{\epsilon+c_2(\epsilon)}{c_1}\right)OPT-c_2(\epsilon)\cdot\beta \geq \left(1-\frac{\epsilon+2c_2(\epsilon)}{c_1}\right)OPT,$$
where we used that $OPT \ge c_1 \cdot \beta $, by Lemma~\ref{lem:constantopt}.

Since $\widetilde{OPT}\geq \tilde{\mathcal{R}}_{P^*}$, $$\tilde{\mathcal{R}}_{P'}\geq \tilde{\mathcal{R}}_P \ge (1-\delta) \widetilde{OPT} \ge(1-\delta)\left(1-\frac{\epsilon+2c_2(\epsilon)}{c_1}\right)OPT\geq \left(1-\delta-\frac{\epsilon+2c_2(\epsilon)}{c_1}\right)OPT.$$

Recall that $\mathcal{R}_{P''}\geq \tilde{\mathcal{R}}_{P''} - c_2(\epsilon)\cdot\beta\geq \tilde{\mathcal{R}}_{P'}-\epsilon\cdot\beta-c_2(\epsilon)\cdot\beta$. Therefore, $$\mathcal{R}_{P''}\geq \left(1-\delta-\frac{\epsilon+2c_2(\epsilon)}{c_1}\right)OPT-\epsilon\cdot\beta-c_2(\epsilon)\cdot\beta\geq \left(1-\delta-\frac{2\epsilon+3c_2(\epsilon)}{c_1}\right)OPT.$$

So given a near-optimal price vector $P$ for $\{\tilde{v}_i\}_{i\in[n]}$, we can construct a near-optimal price vector $P''$ for $\{v_i\}_{i\in[n]}$ in polynomial time. 
\end{prevproof}

\section{Details of Section~\ref{sec:regular}}\label{appendix:regularbounding}

We start by establishing some useful properties of regular distributions, and proceed to show our extreme value theorem (Theorem~\ref{thm:regextremevalue}), and our reduction from item pricing problems with regular distributions to item pricing problems with bounded distributions (Theorem~\ref{thm:regreduction}).

\subsection{Basic Properties of Regular Distributions}\label{sec:regconcavity}
If $F$ is a differentiable continuous regular distribution, it is not hard to see the following: if $f(x)=0$ for some {$x\in(u^{X}_{min},u_{max}^{X})$, then $f(x')=0$ for all $x'< x$} (as otherwise the definition of regularity would be violated.) Hence, if $X$ is a random variable distributed according to $F$, it must be that $f(x)>0$ for $x\in[u_{min}^{X},u_{max}^{X}]$. So we can define $F^{-1}$ on $[u_{min}^{X},u_{max}^{X}]$, and it will be differentiable, since $F$ is differentiable and $f$ is non-zero. Now we can make the following definition, capturing the revenue of a seller who prices an item with value distribution $F$, so that the item is bought with probability exactly $q$.

\begin{definition}[Revenue Curve]
    For a differentiable continuous regular distribution $F$, define $R_F: [0,1] \rightarrow \mathbb{R}$ as follows
    $$R_F(q) =q\cdot F^{-1}(1-q).$$
\end{definition}
The following is well-known. We include its short proof for completeness.
\begin{lemma}\label{lem:regconcavity}
    If $F$ is regular, $R_F(q)$ is a concave function on $(0,1]$.
    \end{lemma}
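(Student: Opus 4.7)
The plan is to compute $R_F'(q)$ explicitly via a change of variables and show that it is non-increasing in $q$, which gives concavity on $(0,1]$.

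First I would substitute $x = F^{-1}(1-q)$, so that $q = 1 - F(x)$ and $R_F(q) = x(1-F(x))$. The preamble to this section already justifies that, for a differentiable regular $F$, we have $f(x) > 0$ on $[u_{min},u_{max}]$, so $F^{-1}$ is well-defined and differentiable there, with derivative $(F^{-1})'(y) = 1/f(F^{-1}(y))$. Applying the chain rule,
\begin{align*}
R_F'(q) = F^{-1}(1-q) + q \cdot \frac{d}{dq}F^{-1}(1-q) = x - \frac{q}{f(x)} = x - \frac{1-F(x)}{f(x)}.
\end{align*}

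By the definition of regularity, the right-hand side is non-decreasing in $x$. Since $x = F^{-1}(1-q)$ is strictly decreasing in $q$ (as $F^{-1}$ is increasing and we are composing with $1-q$), the composition $R_F'(q)$ is non-increasing in $q$. Hence $R_F$ is concave on $(0,1]$.

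The argument is essentially one line once the derivative is computed, so there is no real obstacle; the only minor care point is justifying the existence and differentiability of $F^{-1}$, which follows from the remark at the top of this section that $f > 0$ on $[u_{min},u_{max}]$ for a regular $F$ (otherwise $f$ would vanish identically beyond some point, contradicting differentiability together with regularity in the required region, or trivially giving concavity on the relevant sub-interval).
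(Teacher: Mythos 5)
Your proof is correct and follows essentially the same approach as the paper: compute $R_F'(q) = F^{-1}(1-q) - q/f(F^{-1}(1-q))$, perform the change of variables $x = F^{-1}(1-q)$, and use regularity together with the monotonicity of $F^{-1}(1-q)$ in $q$ to conclude that $R_F'$ is non-increasing. You simply carry out explicitly the substitution that the paper leaves as a parenthetical hint.
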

\begin{proof}
    The derivative of $R_F(q)$ is
    $$R_F'(q) = F^{-1}(1-q)-{q\over f\big{(}F^{-1}(1-q)\big{)}}.$$
    Notice that $F^{-1}(1-q)$ is monotonically non-increasing in $q$. This observation and the regularity of $F$ imply that  $R_F'(q)$ is monotonically non-increasing in $q$. (To see this try the change of variable $x(q)= F^{-1}(1-q)$.) This implies that $R_F(q)$ is concave.
\end{proof}

\begin{lemma}\label{lem:reglittlegain}
    For any regular distribution $F$, if $0<\tilde{q}\le q\leq {p}<1$, then 
    $$R_F(\tilde{q})\leq {1\over 1-p}R_{F}(q).$$ 
\end{lemma}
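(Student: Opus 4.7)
The plan is to leverage the concavity of $R_F$ established in Lemma~\ref{lem:regconcavity} together with the fact that $R_F(1) \ge 0$. First I would observe that $R_F(1) = 1 \cdot F^{-1}(0) = u_{min} \ge 0$ (since the paper assumes value distributions are supported on the non-negative reals). So we have a concave function on $(0,1]$ whose value at the right endpoint is nonnegative.

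The key step is then to apply the concavity inequality on the interval $[\tilde q, 1]$ evaluated at the intermediate point $q$. Writing $q$ as the convex combination $q = \frac{1-q}{1-\tilde q}\,\tilde q + \frac{q - \tilde q}{1-\tilde q}\cdot 1$, concavity of $R_F$ gives
$$R_F(q) \;\ge\; \frac{1-q}{1-\tilde q}\, R_F(\tilde q) + \frac{q - \tilde q}{1 - \tilde q}\, R_F(1) \;\ge\; \frac{1-q}{1-\tilde q}\, R_F(\tilde q),$$
where the second inequality drops the nonnegative $R_F(1)$ term. Rearranging yields $R_F(\tilde q) \le \frac{1-\tilde q}{1-q}\, R_F(q)$.

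To finish, I would use the two trivial monotonicities: $1 - \tilde q \le 1$ (because $\tilde q \ge 0$) and $1 - q \ge 1 - p$ (because $q \le p$). Combining,
$$R_F(\tilde q) \;\le\; \frac{1-\tilde q}{1-q}\, R_F(q) \;\le\; \frac{1}{1-q}\, R_F(q) \;\le\; \frac{1}{1-p}\, R_F(q),$$
which is exactly the claimed bound.

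I do not expect any serious obstacle here: the proof is a one-line application of concavity plus the boundary condition $R_F(1) \ge 0$, and the only mild subtlety is checking that $R_F(1)$ is well-defined and nonnegative, which follows directly from the definition and our standing assumption $u_{min} \ge 0$. In particular, nothing in the argument requires $R_F$ to be monotone, which is fortunate because in general it is not.
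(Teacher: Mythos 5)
Your proof is essentially identical to the paper's: both use the concavity of $R_F$ from Lemma~\ref{lem:regconcavity}, write $q$ as a convex combination of $\tilde q$ and $1$, drop the nonnegative $R_F(1)$ term, and bound the resulting coefficient $\frac{1-q}{1-\tilde q}$ from below by $1-p$. The only cosmetic difference is that you rearrange before bounding the coefficient while the paper bounds $\lambda$ first; the algebra is the same.
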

\begin{proof}
    Since $q\in[\tilde{q},1)$, there exists a $\lambda\in(0,1]$, such that 
    $$\lambda\cdot\tilde{q}+(1-\lambda)\cdot 1=q.$$
    Hence: $\lambda = {1-q\over 1-\tilde{q}}\geq {1-{p}\over 1} = 1-{p}.$
    Now, from Lemma~\ref{lem:regconcavity}, we have that $R_F(x)$ is concave. Thus
    $$R_F(q)=R_F\big{(}\lambda\cdot\tilde{q}+(1-\lambda)\cdot1\big{)}\geq\lambda\cdot R_F(\tilde{q})+(1-\lambda)\cdot R_F(1).$$ 
    Since $R_F(1)\geq 0$, $R_F(q)\geq \lambda\cdot R_F(\tilde{q})\geq (1-p)R_F(\tilde{q})$. Thus, $R_F(\tilde{q})\leq {1\over 1-p}R_{F}(q)$.
\end{proof}    

\begin{corollary}\label{cor:reglittlegain}
        For any regular distribution $F$, if $\tilde{q}\leq q\leq {1\over n^{3}}$,  then 
$$R_F(\tilde{q})\leq {n^{3}\over n^{3}-1}R_{F}(q).$$ 
\end{corollary}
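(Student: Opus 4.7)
\textbf{Proof proposal for Corollary~\ref{cor:reglittlegain}.} The plan is to derive the corollary as an immediate instantiation of Lemma~\ref{lem:reglittlegain}, with the parameter $p$ set to the specific value $1/n^{3}$. Since the hypothesis $\tilde{q} \le q \le 1/n^{3}$ exactly matches the hypothesis $\tilde{q} \le q \le p < 1$ of the lemma under this choice (provided $n \ge 2$ so that $1/n^3 < 1$, which is the natural regime; for $n=1$ the statement is vacuous as the RHS is undefined), we can apply the lemma verbatim.

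Concretely, I would first observe that with $p := 1/n^{3}$ we have $1-p = (n^{3}-1)/n^{3}$, so that $\tfrac{1}{1-p} = \tfrac{n^{3}}{n^{3}-1}$. Then Lemma~\ref{lem:reglittlegain} gives
\[
R_{F}(\tilde{q}) \;\le\; \frac{1}{1-p}\, R_{F}(q) \;=\; \frac{n^{3}}{n^{3}-1}\, R_{F}(q),
\]
which is exactly the claim. No further work is needed: the concavity of $R_{F}$ on $(0,1]$ (Lemma~\ref{lem:regconcavity}) and the nonnegativity $R_{F}(1) \ge 0$ have already been absorbed into the proof of Lemma~\ref{lem:reglittlegain}, and the corollary is just the specialization of that bound to the pricing regime where the sale probability is at most $1/n^{3}$.

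There is no real obstacle here; the only thing worth double-checking is the range condition on $p$. Lemma~\ref{lem:reglittlegain} requires $p < 1$, which holds as soon as $n \ge 2$; since the corollary will be invoked in our algorithmic arguments precisely in the regime $n \ge 2$ (Theorem~\ref{thm:regextremevalue} already assumes $n \ge 2$), this is harmless. Thus the corollary follows as a one-line consequence of the preceding lemma.
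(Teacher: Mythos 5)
Your proof is correct and matches the paper's intent exactly: the corollary is stated immediately after Lemma~\ref{lem:reglittlegain} with no separate proof, precisely because it is the instantiation $p := 1/n^{3}$ of that lemma, which is what you do. Your remark about requiring $n \ge 2$ is a fair point of care and consistent with the regime in which the corollary is used.
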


\subsection{Proof of Theorem~\ref{thm:regextremevalue}: Extreme Value Theorem for Regular Distributions} \label{sec:proof of extreme Regular} \label{sec:regextremevalue}

We define $\alpha$ explicitly from the distributions $\{F_i\}_i$ of the  variables $\{X_i\}_i$. We first need a definition.
\begin{definition}\label{def:regalpha}
A point $x$ is a \textsl{$(c_1,c_2)$-anchoring point} of a  distribution $F$, if  $F(x) \in [c_1,c_2]$.
\end{definition}

Now fix two arbitrary constants $0<c_1<c_2\le {7 \over 8}$, and let, for all $i$,  $\alpha_{i}$ be a $(c_1,c_2)$-anchoring point of the distribution $F_{i}$. Then define 
$$\alpha= {n^{3}\over c_{1}}\cdot\max_{i} \Big{[}\alpha_{i}\cdot \big{(}1-F_{i}(\alpha_{i})\big{)}\Big{]}.$$
Clearly, a collection $\alpha_1,\ldots,\alpha_n$ of $(c_1,c_2)$-anchoring points can be computed efficiently from the $F_i$'s.  \notshow{\yangnote{Remove: (whether we have these distributions explicitly or have oracle access to them.)}} Hence, an $\alpha$ as above can be computed efficiently. We proceed to establish anchoring properties satisfied by $\alpha$.
\begin{proposition}\label{prop:regalpha}
$\alpha\geq \max_i \alpha_{n^3}^{(i)},$ where $\alpha^{(i)}_p = \inf\left\{x|F_i(x)\geq 1-\frac{1}{p}\right\}$as in Definition~\ref{def:alpha}.
\end{proposition}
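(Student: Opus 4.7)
The plan is to reduce the proposition to a comparison of the revenue curve $R_{F_i}$ evaluated at two points: the tail quantile $\tilde q = 1/n^{3}$, which gives exactly $R_{F_i}(1/n^{3}) = \alpha_{n^{3}}^{(i)}/n^{3}$, and the anchoring quantile $q = 1-F_i(\alpha_i)$, which gives exactly $R_{F_i}(q) = (1-F_i(\alpha_i))\cdot \alpha_i$. The latter is precisely the quantity that appears (after scaling by $n^{3}/c_1$ and maximizing over $i$) in the definition of $\alpha$. So showing $\alpha \ge \alpha_{n^{3}}^{(i)}$ for each $i$ is equivalent to showing that $R_{F_i}(1/n^{3})$ is bounded by $(1/c_1)\,R_{F_i}(1-F_i(\alpha_i))$, which is a statement about the shape of the revenue curve alone. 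This is the content that concavity (hence regularity) gives us.

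To carry it out, I would fix $i \in [n]$ and note that $F_i(\alpha_i)\in[c_1,c_2]$ together with $c_2\le 7/8$ forces $q = 1-F_i(\alpha_i) \in [1/8,\,1-c_1]$, in particular $q<1$. Since the hypothesis $n\ge 2$ gives $\tilde q = 1/n^{3}\le 1/8 \le q$, we can invoke Lemma~\ref{lem:reglittlegain} with $\tilde q$, $q$, and $p:=q$ to obtain
\[
R_{F_i}(\tilde q)\;\le\;\frac{1}{1-q}\,R_{F_i}(q)\;=\;\frac{1}{F_i(\alpha_i)}\,R_{F_i}(q)\;\le\;\frac{1}{c_1}\,R_{F_i}(q).
\]
Substituting the two closed-form expressions for the revenue curve mentioned above yields
\[
\frac{\alpha_{n^{3}}^{(i)}}{n^{3}}\;\le\;\frac{1}{c_1}\,\alpha_i\bigl(1-F_i(\alpha_i)\bigr),
\]
so $\alpha_{n^{3}}^{(i)}\le (n^{3}/c_1)\,\alpha_i(1-F_i(\alpha_i))\le \alpha$ by the definition of $\alpha$. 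Taking the maximum over $i$ finishes the proof.

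There is no real obstacle here; the argument is essentially a direct unwinding of definitions once Lemma~\ref{lem:reglittlegain} is applied. The only fine print worth mentioning is that the use of $F_i^{-1}$ inside $R_{F_i}$ requires $F_i$ to be invertible at the relevant points, which is guaranteed by the standing discussion in Appendix~\ref{sec:regconcavity} (for a differentiable regular distribution, $f_i>0$ on $[u_{\min}^{(i)},u_{\max}^{(i)}]$, so $F_i^{-1}$ is well-defined there and $F_i^{-1}(1-1/n^{3}) = \alpha_{n^{3}}^{(i)}$). The condition $p<1$ in Lemma~\ref{lem:reglittlegain} is satisfied by the choice $p = q \le 1-c_1<1$.
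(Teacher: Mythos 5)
Your proof is correct and follows essentially the same route as the paper's: identify $\alpha_{n^3}^{(i)}/n^3$ and $\alpha_i(1-F_i(\alpha_i))$ as values of the revenue curve $R_{F_i}$ at quantiles $1/n^3$ and $1-F_i(\alpha_i)$, and apply Lemma~\ref{lem:reglittlegain} (relying on $1/n^3 \le 1-c_2$ for $n\ge 2$, $c_2\le 7/8$). The only cosmetic difference is that you set $p:=q=1-F_i(\alpha_i)$ and then bound $1/(1-q)=1/F_i(\alpha_i)\le 1/c_1$, while the paper takes $p:=1-c_1$ directly; both yield the identical inequality $R_{F_i}(1/n^3)\le\frac{1}{c_1}R_{F_i}(1-F_i(\alpha_i))$.
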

\begin{proof}
Because $1/n^{3}\leq 1-c_{2}\leq 1-F(\alpha_{i})\leq 1-c_{1}$, it follows from Lemma~\ref{lem:reglittlegain} that $${1\over c_{1}}\cdot \alpha_{i}\cdot\big{(}1-F_{i}(\alpha_{i})\big{)}\geq \alpha_{n^{3}}^{(i)}/n^{3}.$$ Hence: $\alpha\geq {n^{3}\over c_{1}}\cdot \Big{[}\alpha_{i}\cdot \big{(}1-F_{i}(\alpha_{i})\big{)}\Big{]}\geq \alpha_{n^{3}}^{(i)}$. This is true for all $i$, hence the theorem.
\end{proof}

\begin{prevproof}{Theorem}{thm:regextremevalue}
We first show that $\Pr[X_{i}\geq \ell\alpha]\leq 2/(\ell n^{3})$, for any $\ell \ge 1$. By Proposition~\ref{prop:regalpha} and Corollary~\ref{cor:reglittlegain}, we have that $(\ell \alpha)\Pr[X_{i}\geq \ell \alpha]\leq {n^{3}\over n^{3}-1}\alpha\Pr[X_{i}\geq \alpha]$. Thus 
\begin{align}
\Pr[X_{i}\geq \ell \alpha]\leq {n^{3}\over n^{3}-1} \cdot {1 \over \ell}\cdot\Pr[X_{i}\geq \alpha]\leq 2/(\ell n^{3}), \label{eq:usefffuelel}
\end{align}
which establishes the first anchoring property satisfied by $\alpha$.

Moreover, we have that
$$\alpha/n^{3}={1\over c_{1}}\cdot\max_{i} \Big{[}a_{i}\cdot \big{(}1-F_{i}(a_{i})\big{)}\Big{]}\leq{1\over c_{1}} \max_{z}(z\cdot \Pr[\max_{i}\{X_{i}\}\geq z]),$$
which establishes the second anchoring property of $\alpha$.

Finally, we demonstrate the homogenization property of $\alpha$. We want to show that, for any integer $m\leq n$, thresholds $t_{1},\ldots, t_{m}\geq t \geq {2n^{2}\alpha\over \epsilon^{2}}$,  index set  $S=\{a_1,\ldots,a_m\}\subseteq [n]$, and $\epsilon \in (0,1)$:
\begin{align}\label{eq:regular extreme}
\sum_{i=1}^{m} t_{i} \Pr[X_{a_{i}}\geq t_{i}]\leq \left(t-{2\alpha\over \epsilon}\right)\cdot \Pr\left[\max_{i}\{X_{a_{i}}\} \geq t\right]+{7\epsilon\cdot(2\alpha/\epsilon\cdot\Pr[\max_{i}\{X_{a_{i}}\}\geq 2\alpha/\epsilon])\over n}.\end{align}

\noindent For notational simplicity, we define $f_{i}(z_{i})=z_{i}\cdot \Pr[X_{a_{i}}\geq z_{i}]$ and $f_{max}^{(S)}(z)=z\cdot\Pr[\max_{i}\{X_{a_{i}}\}\geq z]$. Notice that for any $t_{i}\geq t\geq 2\alpha/\epsilon$, a double application of Proposition~\ref{prop:regalpha}, Lemma~\ref{lem:reglittlegain} and Equation~\eqref{eq:usefffuelel} gives \begin{align}f_{i}(t_{i})\leq {(n^3/\epsilon)\over (n^3/\epsilon)-1}f_{i}(t) \leq {2(n^3/\epsilon)\over (n^3/\epsilon)-1}f_{i}\left({2\alpha\over\epsilon}\right).\label{eq:ususuuseful}\end{align} Thus,
\begin{align*}
LHS\ of\ (\ref{eq:regular extreme}) &\leq \sum_{i=1}^{m}f_{i}(t)+{1\over (n^3/\epsilon)-1}\sum_{i=1}^{m}f_{i}(t)\\
      &\leq  \sum_{i=1}^{m}f_{i}(t) +{2\over (n^3/\epsilon)-1}\sum_{i=1}^{m}f_{i}\left({2\alpha\over \epsilon}\right)\\
	&\leq \sum_{i=1}^{m}f_{i}(t)+{2n\over (n^3/\epsilon)-1} f_{max}^{(S)}\left({2\alpha\over \epsilon}\right)\\
         &\leq \sum_{i=1}^{m}f_{i}(t)+{2\epsilon\over n} f_{max}^{(S)}\left({2\alpha\over \epsilon}\right).
\end{align*}
On the other hand, for any $t\ge 2\alpha/\epsilon$: $\Pr[X_{a_i}\geq t]\leq \Pr[X_{a_i}\geq2\alpha/\epsilon]\leq \epsilon/n^{3}$ (using~\eqref{eq:usefffuelel}). Thus: 
\begin{align}
\sum_{i} \Pr[X_{a_i}\geq t]\geq\Pr[\max_{i}\{X_{a_{i}}\}\geq t]\geq (1-\epsilon/n^{2})\sum_{i} \Pr[X_{a_i}\geq t],\label{eq:ususususususul}
\end{align}
where the last inequality follows from the fact that, for all $i$, the probability that $X_{a_{i}}\geq t$, while $X_{a_j}<t$ for all $j \in S \setminus\{i\}$ is at least $\Pr[X_{a_i}\geq t](1-\epsilon/n^{3})^{m-1}\ge \Pr[X_{a_i}\geq t](1-\epsilon/n^{2})$. Therefore, continuing our upper-bounding from above:
\begin{align*}
LHS\ of\ (\ref{eq:regular extreme})\le &\quad\sum_{i=1}^{m}f_{i}(t)+{2\epsilon\over n} f_{max}^{(S)}\left({2\alpha\over \epsilon}\right)\\
&\leq (t-2\alpha/\epsilon)\Pr[\max_{i}\{X_{a_{i}}\}\geq t] +(2\alpha/\epsilon)\Pr[\max_{i}\{X_{a_{i}}\}\geq t]+(\epsilon/n^{2}) \sum_{i=1}^{m} f_{i}(t) +{2\epsilon\over n} f_{max}^{(S)}\left({2\alpha\over \epsilon}\right)\\
&\leq (t-2\alpha/\epsilon)\Pr[\max_{i}\{X_{a_{i}}\}\geq t]+(2\alpha/\epsilon t)\sum_{i=1}^{m} f_{i}(t)+(2\epsilon/n^{2}) \sum_{i=1}^{m} f_{i}\left({2\alpha\over\epsilon}\right)+{2\epsilon\over n} f_{max}^{(S)}\left({2\alpha\over \epsilon}\right)\\
&\leq  (t-2\alpha/\epsilon)\Pr[\max_{i}\{X_{a_{i}}\}\geq t]+(\epsilon/n^{2})\sum_{i=1}^{m} f_{i}(t)+{4\epsilon\over n} f_{max}^{(S)}\left({2\alpha\over \epsilon}\right)\\
&\leq (t-2\alpha/\epsilon)\Pr[\max_{i}\{X_{a_{i}}\}\geq t]+{6\epsilon\over n} f_{max}^{(S)}\left({2\alpha\over \epsilon}\right),
\end{align*}
where we got the third inequality by invoking~\eqref{eq:ususuuseful} and~\eqref{eq:ususususususul}, the fourth inequality by invoking~\eqref{eq:ususususususul} with $t=2\alpha/\epsilon$, and the fifth inequality by invoking~\eqref{eq:ususuuseful} and then~\eqref{eq:ususususususul} with $t=2\alpha/\epsilon$. This concludes the proof of Theorem~\ref{thm:regextremevalue}.
\end{prevproof}

\subsection{Proof of Theorem~\ref{thm:regreduction}: Reduction from Regular to Bounded Distributions} \label{app: reduction for regular}
\subsubsection{Restricting the Prices for the Input Regular Distributions}\label{sec:regrestrictprice}
\begin{lemma}\label{lem:regrestrictprice}
    Let $\mathcal{V}=\{v_{i}\}_{i\in[n]}$ be a collection of independent regular value distributions, $\epsilon \in (0,1)$, and $c$  the absolute constant  in the statement of Theorem~\ref{thm:regextremevalue}. For any price vector $P$, we can construct a new price vector $\hat{P}\in [\epsilon\alpha/n^{4},2n^{2}\alpha/\epsilon^{2}]^{n}$, such that $\mathcal{R}_{\hat{P}}\geq \mathcal{R}_{P}-{(c+{9})\epsilon\mathcal{R}_{OPT}\over n}$, where $\mathcal{R}_{P}$ and $\mathcal{R}_{\hat{P}}$ are respectively the expected revenues under price vectors $P$ and $\hat{P}$, and ${\cal R}_{OPT}$ is the optimal expected revenue for $\cal V$.
\end{lemma}
\begin{proof}
	{\bf First step:} We first construct a price vector $P'\in[0,2n^{2}\alpha/\epsilon^{2}]^{n}$ based on $P$, such that the revenue under ${P'}$ is at most an additive $O({\epsilon\cdot\mathcal{R}_{OPT}\over n})$ smaller than the revenue under ${P}$.
	
	We define ${ P}'$ as follows. Let $S=\{i\ \vline\ p_{i}>2n^{2}\alpha/\epsilon^{2}\}$. For any $i\in S$ set $p'_{i}={2(n^{2}/\epsilon-1)\alpha\over \epsilon}$, while if $i\notin S$ set $p'_{i}=p_{i}$. Now assume $|S|=m$. For notational convenience we assume that $S=\{a_{i}\ \vline\ i\in[m]\}$, and set $X_{a_{i}}=v_{a_{i}}$. 
	Moreover, let $t={2n^{2}\alpha\over \epsilon^{2}}$ and $t_{i} = p_{a_{i}}$.  

Clearly, the contribution to ${\cal R}_P$ from items in $S$ is upper bounded by $\sum_{i=1}^{m} t_{i} \Pr[X_{a_i}\geq t_{i}]$. We proceed to analyze the contribution to revenue ${\cal R}_P'$ from items in $S$. Notice that, when $\max_{i\in S}\{v_{i}\}=\max_{i} \{X_{a_{i}}\}\geq t$, the largest value-minus-price gap for items in $S$ is at least $2\alpha/\epsilon$ (given our subtle choice of prices for items in $S$ above). Hence, for the item of $S$ achieving this gap not to be the winner, it must be that some item in $[n]\setminus S$ has a larger value-minus-price gap. For this to happen, the value for this item has to be higher than $2\alpha/\epsilon$. However, the probability that there exists an item in $[n]\setminus S$ with value greater than $2\alpha/\epsilon$ is smaller than $n\cdot \epsilon/n^{3}=\epsilon/n^{2}$ (by Theorem~\ref{thm:regextremevalue}). Thus, when $\max_{i}\{X_{a_{i}}\}\geq t$, then with probability at least $1-\epsilon/n^{2}$, the item in $S$ achieving the largest value-minus-price gap is the item bought by the buyer. So when the price vector is $P'$, the revenue from the items in $S$ is lower bounded by $(t-2\alpha/\epsilon)\Pr[\max_{i}\{X_{a_i}\} \geq t](1-\epsilon/n^{2})$ (where we used independence and the fact that $p_i'=t-2\alpha/\epsilon$ for all $i\in S$.) 

Clearly, $(t-2\alpha/\epsilon)\Pr[\max\{X_{a_{i}}\}\geq t]\leq t\Pr[\max_{i}\{X_{a_{i}}\}\geq t] \leq \mathcal{R}_{OPT}$. To see this, notice that the first inequality is obvious and the second  follows from the observation that we could set the prices of all items in $S$ to $t$ and of all other items to $+\infty$ to achieve revenue $t\Pr[\max_{i}\{X_{a_{i}}\}\geq t]$. So ${\cal R}_{OPT}$ should be larger than this revenue. Similarly, we  see that $2\alpha/\epsilon\cdot \Pr[\max_{i}X_{a_{i}}\geq 2\alpha/\epsilon]\leq \mathcal{R}_{OPT}$. Using these observations and Theorem~\ref{thm:regextremevalue} we get
\begin{align*}
	&(t-2\alpha/\epsilon)\Pr[\max_{i}\{X_{a_{i}}\} \geq t](1-\epsilon/n^{2})+{{8}\epsilon\cdot\mathcal{R}_{OPT}\over n}\\\geq &(t-2\alpha/\epsilon)\Pr[\max_{i}\{X_{a_{i}}\} \geq t]+{7\epsilon\cdot(2\alpha/\epsilon\cdot\Pr[\max_{i}\{X_{a_{i}}\}\geq 2\alpha/\epsilon])\over n}\\
	\geq & \sum_{i=1}^{m} t_{i} \Pr[X_{a_i}\geq t_{i}].
\end{align*} 
	
	The above imply that the contribution to ${\cal R}_{P'}$ from the items in $S$ is at most an additive ${8}\epsilon\cdot\mathcal{R}_{OPT}\over n$ smaller than the contribution to ${\cal R}_{P}$ from the items in $S$.
	
	We proceed to compare the contributions from the items in $[n]\setminus S$ to ${\cal R}_{P}$ and ${\cal R}_{P'}$. We start with ${\cal R}_{P}$. The contribution from the items in $[n]\setminus S$ is no greater than the total revenue when we ignore the existence of the items in $S$ (e.g. by setting the prices of these items to $+\infty$), since this only boosts the winning probabilities of each item in $[n]\setminus S$.
	
      Under price vector $P'$, $\forall i\in S$, $\Pr[v_i\geq p'_{i}]\leq {\epsilon\over n^3}$ (Theorem~\ref{thm:regextremevalue}). So with probability at least $1-{\epsilon\over n^2}$, no item in $S$ has a positive value-minus-price gap and the item that has the largest positive gap among the items in $[n]-S$ is the item that is bought by the buyer. Hence, by independence the contribution to ${\cal R}_{P'}$ from the items in $[n]-S$ is at least a $1-{\epsilon\over n^{2}}$ fraction of the revenue when the items of $S$ are ignored.

By the above discussion, the contribution to ${\cal R}_{P'}$ from the items in $[n]-S$ is at most an additive $\epsilon \mathcal{R}_{OPT}\over n^2$ smaller than the contribution to ${\cal R}_P$ from the items in $[n]-S$.      

Putting everything together, we get that ${\cal R}_{P'} \ge {\cal R}_P - {{9}\epsilon\mathcal{R}_{OPT}\over n}$.
%
        
		{\bf Second step:} To truncate the lower prices, we invoke Lemma~\ref{lem:trunclow}. This implies that we can set all the prices below $\epsilon\alpha/n^4$ to $\epsilon\alpha/n^4$, only hurting our revenue by an additive $\epsilon\alpha/n^4\leq {c\epsilon\over n}\cdot \max_{z} (z\cdot \Pr[\max_{i}\{X_{{i}}\}\geq z])\leq c\epsilon\mathcal{R}_{OPT}/n$ (where we used Theorem~\ref{thm:regextremevalue} for the first inequality).
	 
	 Hence, we can define $\hat{P}$ as follows: if $p'_{i}\leq \epsilon\alpha/n^{4}$, set $\hat{p}_{i} = \epsilon\alpha/n^{4}$, otherwise set $\hat{p}_{i}=p'_{i}$. It follows from the above that ${\cal R}_{\hat{P}} \ge {\cal R}_{{P}} - {(c+{9})\epsilon\mathcal{R}_{OPT}\over n}$.
	 
\end{proof}

Thus, we have reduced the problem of finding a near-optimal price vector in $[0,+\infty]^n$ to the problem of finding a near-optimal price vector in the set $[\epsilon\alpha/n^4, 2n^2\alpha/\epsilon^{2}]^{n}$.

\subsubsection{Truncating the Support of the Input Regular Distributions}\label{sec:regboundingdistribution}

We show that we can truncate the support of the distributions if the price vectors are restricted. Namely

\begin{lemma}\label{lem:regbounding}
   Given a collection of independent regular random variables $\mathcal{V}=\{v_i\}_{i\in[n]}$ and any $\epsilon\in(0,1)$, let us define a new collection of random variables $\tilde{\mathcal{V}}=\{\tilde{v}_i\}_{i\in[n]}$ via the following coupling:  for all $i\in [n]$, set  $\tilde{v}_i={\epsilon\alpha\over 4n^4}$ if $v_i < {\epsilon\alpha\over 2n^4}$, set $\tilde{v}_i=4n^{4}\alpha/\epsilon^{3}$, if $v_i\ge 4n^{4}\alpha/\epsilon^{3}$, and $\tilde{v}_i=v_i$ otherwise. Also, let $c$ be the absolute constant defined in Theorem~\ref{thm:regextremevalue}. For any price vector $P\in[\epsilon\alpha/n^{4},2n^{2}\alpha/\epsilon^{2}]^{n}$, $|\mathcal{R}_{P}(\mathcal{V})-{\mathcal{R}}_{P}(\tilde{\mathcal{V}})|\leq {c\epsilon\mathcal{R}_{OPT}(\mathcal{V})\over n}$, where $\mathcal{R}_P({\cal V})$ and $\mathcal{R}_P(\tilde{\cal V})$ are respectively the revenues of the seller under price vector $P$ when the values of the buyer are  $\cal V$ and $\tilde{\cal V}$.
\end{lemma}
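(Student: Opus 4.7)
The plan is to analyze the coupling $v \mapsto \tilde v$ by splitting on whether any $v_i$ exceeds the upper cutoff $4n^{4}\alpha/\epsilon^{3}$. Let $E = \{\exists i: v_{i}\geq 4n^{4}\alpha/\epsilon^{3}\}$. The argument has three clean steps: (i) show that on $E^{c}$ the coupling preserves the winner and the price paid, so the entire revenue gap between the two scenarios is contributed by $E$; (ii) bound $\Pr[E]$ using the first anchoring property of Theorem~\ref{thm:regextremevalue}; and (iii) convert the resulting bound to a fraction of $\mathcal{R}_{OPT}(\mathcal{V})$ via the second anchoring property.

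First I would handle the two types of truncation separately under $E^{c}$. For an item with $v_{i}<\epsilon\alpha/(2n^{4})$, both $v_{i}$ and $\tilde v_{i}=\epsilon\alpha/(4n^{4})$ are strictly less than the hypothesized price $p_{i}\geq \epsilon\alpha/n^{4}$; hence such an item has a negative value-minus-price gap in both scenarios and can never be the winner. For every item with $v_{i}\geq \epsilon\alpha/(2n^{4})$ we automatically have, conditional on $E^{c}$, $v_{i}<4n^{4}\alpha/\epsilon^{3}$ and therefore $\tilde v_{i}=v_{i}$. So on $E^{c}$ the set of items that might win is identical under the two scenarios and the eligible items have matching values, which forces the winner and the paid price to coincide. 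Thus
\begin{equation*}
\mathbb{E}\bigl[R_{P}(\mathcal{V})\,\ind_{E^{c}}\bigr]=\mathbb{E}\bigl[R_{P}(\tilde{\mathcal{V}})\,\ind_{E^{c}}\bigr],
\end{equation*}
and the discrepancy is entirely attributable to $E$.

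Next I would bound both $\mathbb{E}[R_{P}(\mathcal{V})\ind_{E}]$ and $\mathbb{E}[R_{P}(\tilde{\mathcal{V}})\ind_{E}]$ by $(2n^{2}\alpha/\epsilon^{2})\cdot\Pr[E]$, since the paid price is always at most $\max_{i}p_{i}\leq 2n^{2}\alpha/\epsilon^{2}$ under either scenario. Applying the first anchoring property of Theorem~\ref{thm:regextremevalue} with $\ell=4n^{4}/\epsilon^{3}\geq 1$ gives $\Pr[v_{i}\geq 4n^{4}\alpha/\epsilon^{3}]\leq \epsilon^{3}/(2n^{7})$, and a union bound yields $\Pr[E]\leq \epsilon^{3}/(2n^{6})$. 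Therefore
\begin{equation*}
\bigl|\mathcal{R}_{P}(\mathcal{V})-\mathcal{R}_{P}(\tilde{\mathcal{V}})\bigr|\leq \frac{2n^{2}\alpha}{\epsilon^{2}}\cdot\frac{\epsilon^{3}}{2n^{6}}=\frac{\alpha\epsilon}{n^{4}}.
\end{equation*}

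Finally I would invoke the second anchoring property $\alpha/n^{3}\leq c\cdot\max_{z}(z\Pr[\max_{i}v_{i}\geq z])$ from Theorem~\ref{thm:regextremevalue}. Since $z\Pr[\max_{i}v_{i}\geq z]$ is exactly the expected revenue of the uniform price vector $(z,\ldots,z)$, it is a lower bound on $\mathcal{R}_{OPT}(\mathcal{V})$; hence $\alpha\leq c\,n^{3}\mathcal{R}_{OPT}(\mathcal{V})$, and substituting into the previous display yields the advertised bound $c\epsilon\mathcal{R}_{OPT}(\mathcal{V})/n$. The only step with any subtlety is (i)---namely verifying that the lower truncation never promotes an ``ineligible'' item to winner by shifting its value---and the careful choice of the two thresholds $\epsilon\alpha/(4n^{4})$ and $\epsilon\alpha/(2n^{4})$ relative to the restricted price range is precisely what makes this step trivial. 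Everything else is a routine concentration-plus-union-bound calculation.
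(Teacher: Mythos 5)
Your proof is correct and follows essentially the same route as the paper: a union bound over the tail event $\{\exists i: v_i\geq 4n^4\alpha/\epsilon^3\}$ using the first anchoring property of Theorem~\ref{thm:regextremevalue}, a pointwise coupling argument showing the lower truncation never alters the winner when prices lie in $[\epsilon\alpha/n^4,\,2n^2\alpha/\epsilon^2]^n$, and the conversion $\alpha/n^3\leq c\,\mathcal{R}_{OPT}(\mathcal{V})$ via the second anchoring property. The only cosmetic difference is that the paper performs the two truncations in two stages (introducing an intermediate $\hat{\mathcal{V}}$ and invoking its Lemma~\ref{lem:boundinglow} for the lower cutoff), whereas you handle both cutoffs in a single pass conditioned on $E^c$; the quantitative bounds are identical.
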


\begin{proof}
First, let us define another collection of mutually independent random variables $\hat{\mathcal{V}}=\{\hat{v}_{i}\}_{i\in[n]}$ via the following coupling: for all $i\in[n]$  set $\hat{v}_i=4n^{4}\alpha/\epsilon^{3}$ if $v_i\ge 4n^{4}\alpha/\epsilon^{3}$, and set $\hat{v}_i=v_i$ otherwise.

    By Theorem~\ref{thm:regextremevalue}, we know that for every $i$, $\Pr[v_{i}\geq 4n^4\alpha/\epsilon^{3}]\leq {\epsilon^{3}\over 2n^{7}}$.
    Hence, the probability of the event that there exists an $i$ such that $v_i\geq 4n^4\alpha/\epsilon^{3}$ is no greater than $n\times \epsilon^{3}/2n^7=\epsilon^{3}/2n^6$. Thus the difference between the contributions of this event to the revenues  $\mathcal{R}_P({\cal V})$ and $\mathcal{R}_P(\hat{\cal V})$ is no greater than $2n^{2}\alpha/\epsilon^{2}\cdot (\epsilon^{3}/2n^{6}) = {\epsilon \alpha\over n^4}\leq  {c\epsilon \over n}\cdot\max_{z}(z\cdot \Pr[\max_{i}\{v_{i}\}\geq z])\leq {c\epsilon\mathcal{R}_{OPT}\over n}$, given that the largest price is at most $2n^{2}\alpha/\epsilon^{2}$.

    Now let us consider the event: {$v_{i}\leq 4n^4\alpha/\epsilon^{3}$}, for all $i$. In this case $\hat{v}_{i}=v_{i}$ for all $i$. So the contribution of this event  to the revenues  $\mathcal{R}_P({\cal V})$ and $\mathcal{R}_P(\hat{\cal V})$ is the same.
    
    Thus, $|\mathcal{R}_P({\cal V})-\mathcal{R}_P(\hat{\cal V})|\leq {c\epsilon\mathcal{R}_{OPT}\over n}$.

Now it follows from Lemma~\ref{lem:boundinglow} that the seller's revenue under any price vector in $[\epsilon\alpha/n^4, 2n^2\alpha/\epsilon^2]^{n}$ is the same when the buyer's value distributions are $\hat{\mathcal{V}}$ and $\tilde{\mathcal{V}}$. 
\end{proof}
The above lemma shows that we can reduce the problem of finding a near-optimal price vector in  $[\epsilon\alpha/n^4, 2n^2\alpha/\epsilon^{2}]^{n}$ for the original  value distributions $\cal V$ to the problem of finding a near-optimal price vector in the set $[\epsilon\alpha/n^4, 2n^2\alpha/\epsilon^{2}]^{n}$ for a collection of value distributions $\tilde{\cal V}$ supported on the set $[{\epsilon\alpha\over 4n^4},4n^4\alpha/\epsilon^{3}]$. Next, we establish that the latter problem can be reduced to finding any (i.e. not necessarily restricted) near-optimal price vector for the distributions $\tilde{\cal V}$.


\begin{lemma}\label{lem:regboundedrestrictprice}
Given a collection of independent regular random variables $\mathcal{V}=\{v_i\}_{i\in[n]}$ and any $\epsilon\in(0,1)$, let us define a new collection of random variables $\tilde{\mathcal{V}}=\{\tilde{v}_i\}_{i\in[n]}$ via the following coupling:  for all $i\in [n]$, set  $\tilde{v}_i={\epsilon\alpha\over 4n^4}$ if $v_i < {\epsilon\alpha\over 2n^4}$, set $\tilde{v}_i=4n^{4}\alpha/\epsilon^{3}$ if $v_i\ge 4n^{4}\alpha/\epsilon^{3}$, and set $\tilde{v}_i=v_i$ otherwise. Let also $c$ be the absolute constant defined in Theorem~\ref{thm:regextremevalue}. For any price vector $P$, we can efficiently construct a new price vector $\hat{P}\in [\epsilon\alpha/n^{4},2n^{2}\alpha/\epsilon^{2}]^{n}$, such that $\mathcal{R}_{\hat{P}}(\tilde{\mathcal{V}})\geq \mathcal{R}_{P}(\tilde{\mathcal{V}})-{(c+{9})\epsilon\cdot\mathcal{R}_{OPT}(\tilde{\mathcal{V}})\over n}$.
\end{lemma}
The proof is essentially the same as the proof of Lemma~\ref{lem:regrestrictprice} and we skip it. Combining Lemmas~\ref{lem:regrestrictprice},~\ref{lem:regbounding} and~\ref{lem:regboundedrestrictprice} we obtain Theorem~\ref{thm:regreduction}. The proof is given in the next appendix.


\subsubsection{Finishing the Reduction} \label{app: finish regular reduction}
\begin{prevproof}{Theorem}{thm:regreduction}
We start with computing $\alpha$. This can be done efficiently as specified in the statement of Theorem~\ref{thm:regextremevalue}. Now let us define $\tilde{\mathcal{V}}$ via the following coupling: for all $i\in [n]$, set    $\tilde{v}_i={\epsilon\alpha\over 4n^4}$ if $v_i < {\epsilon\alpha\over 2n^4}$, set $\tilde{v}_i=4n^{4}\alpha/\epsilon^{3}$ if $v_i\ge 4n^{4}\alpha/\epsilon^{3}$, and set $\tilde{v}_i=v_i$ otherwise. \notshow{\yangnote{Remove: It is not hard to see that the distributions of the $\tilde{v}_i$'s can be computed in time polynomial in $n$, $1/\epsilon$ and the description complexity of the distributions of the $v_i$'s, if these are given to us explicitly. If we have oracle access to these distributions,  we can construct oracles for the distributions of the $\tilde{v}_i$'s that run in time polynomial in $n$, $1/\epsilon$ and the desired oracle accuracy.}}

Now let $P$ be a price vector such that ${\mathcal{R}}_{P}(\tilde{\mathcal{V}})\geq (1-\epsilon+{(4c+{19})\epsilon\over n})\cdot {\mathcal{R}}_{OPT}(\tilde{\mathcal{V}})$. It follows from Lemma~\ref{lem:regboundedrestrictprice} that we can efficiently construct 
a  price vector $P'\in[\epsilon\alpha/n^4, 2n^2\alpha/\epsilon^2]^{n}$, such that $${\mathcal{R}}_{P'}(\tilde{\mathcal{V}})\geq\left(1-\epsilon+{(4c+{19})\epsilon\over n}\right)\cdot {\mathcal{R}}_{OPT}(\tilde{\mathcal{V}})-{(c+{9})\epsilon\over n}{\mathcal{R}}_{OPT}(\tilde{\mathcal{V}})\geq \left(1-\epsilon+{(3c+{10})\epsilon\over n}\right)\cdot {\mathcal{R}}_{OPT}(\tilde{\mathcal{V}}).$$

Lemma~\ref{lem:regrestrictprice} implies that there exists a price vector $\hat{P}\in [\epsilon \alpha/n^4, 2n^2\alpha/\epsilon^2]^{n}$, such that $\mathcal{R}_{\hat{P}}(\mathcal{V})\geq \big{(}1-{(c+{9})\epsilon\over n}\big{)}\cdot \mathcal{R}_{OPT}(\mathcal{V})$. By Lemma~\ref{lem:regbounding}, we know that $${\mathcal{R}}_{OPT}(\tilde{\mathcal{V}})\geq {\mathcal{R}}_{\hat{P}}(\tilde{\mathcal{V}})\geq\mathcal{R}_{\hat{P}}(\mathcal{V})-{c\epsilon\over n}\mathcal{R}_{OPT}(\mathcal{V})\geq \left(1-{(2c+{9})\epsilon\over n}\right) \cdot \mathcal{R}_{OPT}(\mathcal{V}).$$

So ${\mathcal{R}}_{P'}(\tilde{\mathcal{V}})\geq(1-\epsilon+{c\epsilon\over n})\cdot \mathcal{R}_{OPT}(\mathcal{V})$. We can now apply Lemma~\ref{lem:regbounding} again, and get $$\mathcal{R}_{P'}(\mathcal{V})\geq{\mathcal{R}}_{P'}(\tilde{\mathcal{V}})-{c\epsilon\over n}\mathcal{R}_{OPT}(\mathcal{V}) \geq(1-\epsilon)\cdot \mathcal{R}_{OPT}(\mathcal{V}).$$\end{prevproof}

\section{Algorithmic Results for MHR and Regular Distributions} \label{sec:app overall ptas} \label{sec:overall ptas}
The proofs of Theorems~\ref{thm:ptas mhr} and~\ref{thm:quasi ptas regular} follow immediately from Theorem~\ref{thm:general algorithm} using our reductions to bounded distributions (Theorems~\ref{thm:reduction MHR to balanced} and~\ref{thm:regreduction} of Sections~\ref{sec:truncate} and~\ref{sec:regular} respectively). We restate the theorems and prove them.

\begin{varthm}{{\bf \ref{thm:ptas mhr} [Restated]}} 
    Suppose we are given a collection of MHR distributions $\{F_i\}_{i \in [n]}$. Then, for any constant $\epsilon >0$, there is an algorithm that runs in time polynomial in the input and $n^{{1 \over \epsilon^{7}}}$  and computes a price vector $P$ such that $$\mathcal{R}_{P}\geq(1-\epsilon)\mathcal{R}_{OPT},$$ where $\mathcal{R}_{P}$ is the expected revenue under price vector $P$ when the buyer's values for the items are independently distributed according to the distributions $\{F\}_i$ and $\mathcal{R}_{OPT}$ is the revenue achieved by the optimal price vector.
\end{varthm}
\begin{prevproof}{Theorem}{thm:ptas mhr}
We apply Theorem~\ref{thm:reduction MHR to balanced} to reduce the item pricing problem for MHR distributions to the item pricing problem for bounded distributions. Then we use our algorithm from Theorem~\ref{thm:general algorithm} for bounded distributions. The resulting running time is polynomial in the input and ${n^{\log^{4} {1 \over \epsilon} \over  {\epsilon^8}}}$. Being a bit more careful in the application of our discretization lemmas we obtain running time polynomial in the input and {$n^{1/ \epsilon^{7}}$}. Recall that in the algorithm of Theorem~\ref{thm:general algorithm} we employed the reduction of Theorem~\ref{thm:discretization} to discretize supports and prices into sets of bounded cardinalities. To establish our reduction in Theorem~\ref{thm:discretization} we employed Lemma~\ref{lem:horizontal-discretization}, which in turn made use of Lemma~\ref{lem:nisan-value}, where we set $a={2 \over 3}$. Setting instead $a \approx {1 \over 2}$ would result in a different tradeoff of parameters, improving our running time here. 
\end{prevproof}

\begin{varthm}{{\bf \ref{thm:quasi ptas regular} [Restated]}} 
    Suppose we are given a collection of regular distributions $\{F_i\}_{i \in [n]}$. Then, for any constant $\epsilon >0$, there exists an algorithm that runs in time polynomial in the input and {$\max\left\{ n^{\log^{11} {n\over \epsilon} \cdot  \log \log {n \over \epsilon}},n^{ {\log^3 {n\over\epsilon} \cdot  \log {1\over\epsilon} \over {\epsilon}^{8}}} \right\}$}\notshow{\costasnote{$n^{{\log^3 n\over \epsilon^{9}}}$}}  and computes a price vector $P$ such that $$\mathcal{R}_{P}\geq(1-\epsilon)\mathcal{R}_{OPT},$$ where $\mathcal{R}_{P}$ is the expected revenue under price vector $P$ when the buyer's values for the items are independently distributed according to the distributions $\{F\}_i$ and $\mathcal{R}_{OPT}$ is the revenue achieved by the optimal price vector.
\end{varthm}

\begin{prevproof}{Theorem}{thm:quasi ptas regular}
We apply Theorem~\ref{thm:regreduction} to reduce the item pricing problem for regular distributions to the item pricing problem for bounded distributions. Then we use our algorithm from Theorem~\ref{thm:general algorithm} for bounded distributions. The resulting running time is polynomial in the input and {$\max\left\{ n^{\log^{11} {n\over \epsilon} \cdot  \log \log{n \over\epsilon}},n^{ {\log^3 {n\over\epsilon} \cdot  \log {1\over\epsilon} \over {\epsilon}^{8}}} \right\}$}\notshow{$\costasnote{n^{\log^{3} {n} \cdot \log^4 {1 \over \epsilon} \over  {\epsilon^8}}}$}. 
\end{prevproof}

\section{Proofs of Structural Results}\label{app:structural proofs} \label{sec:proofs of structural}


{\begin{prevproof}{Theorem}{thm:single price constant factor}
Let $\beta$ be the anchoring point of Theorem~\ref{thm:extreme MHR}. It follows from the properties of the anchoring point that pricing all the items at price $\beta/2$ achieves revenue
$${\beta \over 2} \cdot \Pr[ \max_i\{X_i\} \ge \beta/2] \ge {\beta \over 2} \cdot \left(1-{1\over \sqrt{e}}\right)= \beta \cdot c_1,$$
where $c_1 = {1\over 2}\left(1-{1\over \sqrt{e}}\right).$
On the other hand, Lemma~\ref{lem:trunchighprice} shows that the optimal revenue is upper bounded by
$$\beta \cdot \min_{\epsilon \in (0,{1\over 4})}{\left(2\log_2 \frac{1}{\epsilon} + {c_2(\epsilon)} \right)},$$
where $c_2(\epsilon)=36 \epsilon \log_2 ({1\over \epsilon})$. So pricing all items at $\beta/2$ achieves a constant factor approximation to the optimal revenue.
\end{prevproof}}

\begin{prevproof}{Theorem}{thm:constant prices suffice}
Suppose that the buyer's values are $\{{v}_i\}_{i\in[n]}$ where the $v_i$'s are mutually independent, MHR random variables. We can apply Lemma~\ref{lem:boundingprice} to restrict the price-vectors to $[\epsilon\cdot\beta, 2\log_2(\frac{1}{\epsilon})\cdot\beta]^{n}$, where $\beta = \beta(\{v_i\}_i)$ is the anchoring point of Theorem~\ref{thm:extreme MHR} computed from the distributions of the variables $\{v_i\}_i$. The loss in revenue from this restriction is bounded by Lemma~\ref{lem:boundingprice}. Having this restriction in place, we may now modify the variables $\{{v}_i\}_{i\in[n]}$ into a new collection of random variables $\{\tilde{v}_i\}_i$ as follows: for all $i\in [n]$, set  $\tilde{v}_i={\epsilon \over 2}\cdot \beta$ if $v_i < \epsilon \cdot \beta$, set $\tilde{v}_i=2\log_2 (\frac{1}{\epsilon})\cdot\beta$ if $v_i\ge2\log_2 (\frac{1}{\epsilon})\cdot\beta$, and set $\tilde{v}_i=v_i$ otherwise. Lemmas~\ref{lem:boundinghigh} and~\ref{lem:boundinglow} show that the expected revenue of any price vector $P\in [\epsilon\cdot\beta, 2\log_2(\frac{1}{\epsilon})\cdot\beta]^{n}$  is approximately the same for $\{{v}_i\}_{i\in[n]}$ and for $\{\tilde{v}_i\}_i$. Now we can apply Lemma~\ref{cor:discreteprice} to discretize $[\epsilon\cdot\beta, 2\log_2(\frac{1}{\epsilon})\cdot\beta]^n$. The chain of reductions we used guarantees that a nearly-optimal among discretized prize-vectors for  $\{\tilde{v}_i\}_i$ is also nearly-optimal among all possible price-vectors for $\{{v}_i\}_i$.
\end{prevproof}

\begin{prevproof}{Theorem}{thm:logn prices suffice}
Suppose that the buyer's values are $\{{v}_i\}_{i\in[n]}$ where the $v_i$'s are mutually independent, regular random variables. We can apply Lemma~\ref{lem:regrestrictprice} to restrict the price-vectors to  $[\epsilon\alpha/n^{4},2n^{2}\alpha/\epsilon^{2}]^{n}$ where $\alpha$ is chosen as in Appendix~\ref{sec:regextremevalue}. The loss in revenue from this restriction is bounded by Lemma~\ref{lem:regrestrictprice}. Having this restriction in place, we may now modify the variables $\{{v}_i\}_{i\in[n]}$ into a new collection of random variables $\{\tilde{v}_i\}_i$ as follows: for all $i\in [n]$, set  $\tilde{v}_i={\epsilon\alpha\over 4n^4}$ if $v_i < {\epsilon\alpha\over 2n^4}$, set $\tilde{v}_i=4n^{4}\alpha/\epsilon^{3}$, if $v_i\ge 4n^{4}\alpha/\epsilon^{3}$, and $\tilde{v}_i=v_i$ otherwise. Lemma~\ref{lem:regbounding} shows that the expected revenue of any price vector $P\in [\epsilon\alpha/n^{4},2n^{2}\alpha/\epsilon^{2}]^{n}$ is approximately the same for $\{{v}_i\}_{i\in[n]}$ and for $\{\tilde{v}_i\}_i$. Now we can apply Lemma~\ref{cor:discreteprice} to discretize $[\epsilon\alpha/n^{4},2n^{2}\alpha/\epsilon^{2}]^{n}$. The chain of reductions we used guarantees that  a nearly-optimal among discretized prize-vectors for  $\{\tilde{v}_i\}_i$ is also nearly-optimal among all possible price-vectors for $\{{v}_i\}_i$.
\end{prevproof}}

\subsection{Proof of Theorem~\ref{thm:structural theorem 3}: A Single Price Suffices for I.I.D. MHR Distributions} \label{sec:MHR iid}

We improve the running time of Theorem~\ref{thm:ptas mhr} for when the buyer's values are i.i.d.~according to some MHR distribution. The main technical idea that goes into the algorithm is establishing our structural result for i.i.d.~MHR distributions described by Theorem~\ref{thm:structural theorem 3}. In particular, we show that, if the number of items is a sufficiently large function of $1/\epsilon$, then using a single price suffices to get an $(1-\epsilon)$-fraction of the optimal revenue. Theorem~\ref{thm:alg} below summarizes the improvement on the running time as well as the structural result for i.i.d. MHR distributions.

We proceed to the details of our algorithm. To simplify our notation, let us assume that all the $v_i$'s are independent copies of the random variable $v$, and denote the cumulative distribution function of $v$ by $F$. Moreover, let $\alpha_n = \inf\left\{x|F(x)\geq 1-\frac{1}{n}\right\}$ (as in Definition~\ref{def:alpha}). We start by showing an analogue of Lemma~\ref{lem:tinycontribution}.
\begin{lemma}\label{lem:iidsmallcontribution}
    If $S=Con[v\geq (1+\epsilon)\alpha_{n}]$, then $S\leq \frac{6(1+\epsilon)\alpha_n}{n^{1+\epsilon}}.$
\end{lemma}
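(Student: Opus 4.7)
The plan is to imitate the structure of the proof of Lemma~\ref{lem:tinycontribution}, but exploiting the i.i.d.\ assumption to avoid the tournament/union argument over non-identically distributed tails. The key point is that in the i.i.d.\ case there is a single anchoring scale $\alpha_n$, so we can invoke the two basic MHR properties (Lemma~\ref{lem:concentrate} and Lemma~\ref{lem:contribution}) directly with a non-integer choice of the exponent $d=1+\epsilon$.

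First, I would apply Lemma~\ref{lem:concentrate} with $m=n$ and $d=1+\epsilon$ (both $\geq 1$) to obtain
\[
(1+\epsilon)\,\alpha_n \;\geq\; \alpha_{n^{1+\epsilon}}.
\]
Since $Con[v\geq x]$ is non-increasing in $x$, this yields
\[
S \;=\; Con[v\geq (1+\epsilon)\alpha_n] \;\leq\; Con[v\geq \alpha_{n^{1+\epsilon}}].
\]

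Next, I would apply Lemma~\ref{lem:contribution} with $m=n^{1+\epsilon}$ (which is $\geq 2$ whenever $n\geq 2$, the relevant regime for the i.i.d.\ pricing problem) to get
\[
Con[v\geq \alpha_{n^{1+\epsilon}}] \;\leq\; \frac{6\,\alpha_{n^{1+\epsilon}}}{n^{1+\epsilon}}.
\]
Finally, I would bound the numerator using the same inequality $\alpha_{n^{1+\epsilon}} \leq (1+\epsilon)\alpha_n$ from the first step, obtaining
\[
S \;\leq\; \frac{6(1+\epsilon)\,\alpha_n}{n^{1+\epsilon}},
\]
as claimed.

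There is no real obstacle here; the only subtlety is noticing that Lemma~\ref{lem:concentrate} was stated (and used) earlier in the paper for $m$ a power of $2$, but its proof only uses the concavity of $\log(1-F)$ on $[u_{min},u_{max}]$ and the identity $F(\alpha_p)=1-1/p$ for $p\in[1,+\infty)$, so it goes through verbatim for real $d\geq 1$. The only sanity check to flag is the small-$n$ boundary ($n^{1+\epsilon}\geq 2$); for $n=1$ the statement of the lemma is vacuous in the context of its intended use (pricing requires at least one item and one typically has $n$ large), and for $n\geq 2$ the hypothesis of Lemma~\ref{lem:contribution} is satisfied.
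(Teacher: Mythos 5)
Your proof is correct and follows exactly the paper's own argument: apply Lemma~\ref{lem:concentrate} with $m=n$, $d=1+\epsilon$ to get $(1+\epsilon)\alpha_n\geq\alpha_{n^{1+\epsilon}}$, use monotonicity of $Con[\cdot]$, invoke Lemma~\ref{lem:contribution} with $m=n^{1+\epsilon}$, and then bound $\alpha_{n^{1+\epsilon}}\leq(1+\epsilon)\alpha_n$ again. The side remarks about non-integer $d$ and the $n\geq2$ boundary are fine but not points the paper dwells on.
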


\begin{proof}
    By Lemma~\ref{lem:concentrate}, we know that $(1+\epsilon)\alpha_{n}\geq \alpha_{n^{1+\epsilon}}$. Thus, $ S\leq Con[v\geq \alpha_{n^{1+\epsilon}}].$ But Lemma~\ref{lem:contribution} gives $Con[v\geq \alpha_{n^{1+\epsilon}}]\leq 6\alpha_{n^{1+\epsilon}}/n^{1+\epsilon}$. Hence, 
    $$S\leq \frac{6\alpha_{n^{1+\epsilon}}}{n^{1+\epsilon}}\leq \frac{6(1+\epsilon)\alpha_{n}}{n^{1+\epsilon}}.$$
\end{proof}

Using Lemma~\ref{lem:iidsmallcontribution} and Lemma~\ref{lem:trunchigh}, we deduce that if we constrain our prices to be $\le (1+\epsilon)\alpha_{n}$, we lose no more than $\frac{6(1+\epsilon)\alpha_n}{n^{\epsilon}}$ revenue. Given that the optimal revenue with the restriction that all prices be $\le (1+\epsilon)\alpha_{n}$ is at most $(1+\epsilon)\alpha_{n}$, it follows that the optimal revenue without the restriction is at most $(1+\epsilon)\alpha_{n}+\frac{6(1+\epsilon)\alpha_n}{n^{\epsilon}} = (1+\epsilon)(1+\frac{6}{n^{\epsilon}})\alpha_n$. This is very close to $\alpha_n$ if $n$ is a sufficiently large function of $\epsilon$. If that's the case, it suffices to find a price vector achieving revenue close to $\alpha_n$.

\begin{lemma}\label{lem:iidalmostopt}
    If we use the price vector $P=((1-\epsilon)\alpha_n,(1-\epsilon)\alpha_n,\ldots,(1-\epsilon)\alpha_n)$, we receive revenue at least $ \left(1-e^{(-n^{\epsilon})}-\epsilon\right)\alpha_{n}$.
\end{lemma}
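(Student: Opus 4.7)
The plan is to observe that with a uniform price $p = (1-\epsilon)\alpha_n$ across all items, the buyer's behavior reduces to a one-dimensional event: the item maximizing $v_i - p$ is simply the item maximizing $v_i$, and a sale occurs iff $\max_i v_i \geq p$. Hence the expected revenue equals exactly
\[
p \cdot \Pr[\max_i v_i \geq p] = (1-\epsilon)\alpha_n \cdot \left(1 - F((1-\epsilon)\alpha_n)^n\right).
\]
So the task reduces to lower-bounding $\Pr[\max_i v_i \geq (1-\epsilon)\alpha_n]$, i.e.\ upper-bounding $F((1-\epsilon)\alpha_n)$.

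The key step is to exploit the MHR assumption through the concavity of $G(x) := \log(1-F(x))$, which was established in the proof of Lemma~\ref{lem:concentrate}. At the two anchor points $u_{min}$ and $\alpha_n$, we have $G(u_{min}) = 0$ and $G(\alpha_n) = \log(1/n) = -\log n$. Assuming $u_{min} \leq (1-\epsilon)\alpha_n$ (otherwise the event $\{\max v_i \geq (1-\epsilon)\alpha_n\}$ has probability $1$ deterministically and the lemma is immediate), write
\[
(1-\epsilon)\alpha_n \;=\; \lambda\, u_{min} + (1-\lambda)\, \alpha_n,
\qquad \lambda = \frac{\epsilon \alpha_n}{\alpha_n - u_{min}} \,\geq\, \epsilon,
\]
since $u_{min} \geq 0$. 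Concavity of $G$ gives $G((1-\epsilon)\alpha_n) \geq (1-\lambda)(-\log n) \geq -(1-\epsilon)\log n$, which exponentiates to
\[
1 - F((1-\epsilon)\alpha_n) \;\geq\; n^{\epsilon - 1}.
\]

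Plugging this back in and using $(1-x)^n \leq e^{-nx}$ for $x \in [0,1]$, we get
\[
\Pr[\max_i v_i \geq (1-\epsilon)\alpha_n] \;\geq\; 1 - (1 - n^{\epsilon - 1})^n \;\geq\; 1 - e^{-n^{\epsilon}}.
\]
Therefore the revenue under $P = ((1-\epsilon)\alpha_n,\ldots,(1-\epsilon)\alpha_n)$ is at least
\[
(1-\epsilon)\alpha_n\,(1 - e^{-n^{\epsilon}}) \;\geq\; \bigl(1 - \epsilon - e^{-n^{\epsilon}}\bigr)\,\alpha_n,
\]
as desired. There is no real obstacle here; the only subtlety is verifying that the concavity-based bound covers the corner case $u_{min} > 0$, but as noted, either $u_{min} \leq (1-\epsilon)\alpha_n$ and the convex-combination argument applies (in fact with an even better $\lambda > \epsilon$), or $u_{min} > (1-\epsilon)\alpha_n$ and the probability of a sale is trivially $1$.
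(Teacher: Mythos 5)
Your proof is correct and follows essentially the same route as the paper's: both bound $\Pr[v < (1-\epsilon)\alpha_n]$ by $1 - n^{\epsilon-1}$ using the concavity of $\log(1-F(x))$, and then combine this with independence and $(1-x)^n \le e^{-nx}$. The only cosmetic difference is that the paper invokes Lemma~\ref{lem:concentrate} as a black box (with $m=n^{1-\epsilon}$, $d=1/(1-\epsilon)$), whereas you unpack that lemma and apply the concavity of $G(x)=\log(1-F(x))$ directly at the points $u_{min}$, $(1-\epsilon)\alpha_n$, and $\alpha_n$.
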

\begin{proof}
Let $p = (1-\epsilon)\alpha_n$.
    By Lemma~\ref{lem:concentrate}, we know that $\frac{\alpha_{n^{1-\epsilon}}}{(1-\epsilon)}\geq \alpha_n$. Hence, for all $i$, $$\Pr[v_i < p] \leq \Pr[v_i < \alpha_{n^{1-\epsilon}}] \le1-\frac{1}{n^{1-\epsilon}}.$$
  It follows that  $$\Pr[\exists i, v_i\geq p]\geq 1-\left(1-\frac{1}{n^{1-\epsilon}}\right)^n\geq 1- e^{(-n^{\epsilon})}.$$
    
    Hence, with probability at least $1- e^{(-n^{\epsilon})}$, the buyer will purchase an item and will pay $p$. Hence, the revenue is at least $(1- e^{(-n^{\epsilon})})(1-\epsilon)\alpha_n\geq (1-e^{(-n^{\epsilon})}-\epsilon)\alpha_{n}$.
\end{proof}

Notice that, when $n\geq (1/\epsilon)^{1/\epsilon}$, $n^{\epsilon}\geq 1/\epsilon$. In this case, we have shown that $OPT\leq (1+\epsilon)(1+6\epsilon)\alpha_{n}\leq(1+8\epsilon)\alpha_{n}$. On the other hand, Lemma~\ref{lem:iidalmostopt}, says that we can achieve revenue at least $(1-\frac{1}{e^{1/\epsilon}}-\epsilon)\alpha_{n}$ using a single price. Since $e^{1/\epsilon}\geq 1/\epsilon$, this revenue is at least $(1-2\epsilon)\alpha_{n}$. Given that $(1+8\epsilon)(1-10\epsilon)\leq (1-2\epsilon)$, we have $(1-2\epsilon)\alpha_{n}\geq (1-10\epsilon)OPT$. So if we set the price for every item to be $(1-\epsilon)\alpha_{n}$, we achieve a revenue that is at least $(1-10\epsilon)OPT$.

\begin{theorem}\label{thm:alg}
If the values of the buyer are i.i.d. according to a MHR distribution, there is a PTAS for finding a price vector that achieves a $(1-\epsilon)$-fraction of the optimal revenue. The algorithm runs in time polynomial in $\log({\log n\over \epsilon})$, {$2^{{\log (1/\epsilon)\over\epsilon^{8}}}$} and the size of the input. Moreover, if $n\geq (12/\epsilon)^{12/\epsilon}$, there exists an efficiently computable price such that, if all items are priced at this price, the resulting revenue is at least { $(1-\epsilon)OPT$}.
\end{theorem}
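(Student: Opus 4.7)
The plan is to dichotomize on the size of $n$ relative to $\epsilon$, using a single-price argument when $n$ is large and invoking the generic MHR PTAS (Corollary~\ref{cor:ptas for MHR}) when $n$ is small. The cutoff will be $n \ge (12/\epsilon)^{12/\epsilon}$, which is exactly the threshold at which $n^{\epsilon} \ge 12/\epsilon$, making the tail corrections below small enough.

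In the large-$n$ regime I will show that pricing every item at $(1-\epsilon)\alpha_n$ already yields a $(1-\epsilon)$-fraction of $OPT$. The upper bound on $OPT$ comes from combining Lemma~\ref{lem:iidsmallcontribution} with Lemma~\ref{lem:trunchigh}: truncating any price vector at $(1+\epsilon)\alpha_n$ costs at most $6(1+\epsilon)\alpha_n/n^{\epsilon}$ in revenue, and the truncated revenue is trivially at most $(1+\epsilon)\alpha_n$, so $OPT \le (1+\epsilon)\bigl(1+6/n^{\epsilon}\bigr)\alpha_n$. The matching lower bound is Lemma~\ref{lem:iidalmostopt}, which gives achieved revenue at least $\bigl(1 - e^{-n^{\epsilon}} - \epsilon\bigr)\alpha_n$ under the single price $(1-\epsilon)\alpha_n$. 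For $n^{\epsilon} \ge 12/\epsilon$ one has $6/n^{\epsilon} \le \epsilon/2$ and $e^{-n^{\epsilon}} \le \epsilon$, and a short arithmetic check then produces the ratio $(1-\epsilon)$. The price itself is computed by binary search on the oracle for $F$: an initial scan from the given anchoring point locates a dyadic interval containing $\alpha_n$, after which refinement to accuracy $\epsilon \alpha_n$ takes $O(\log(\log n/\epsilon))$ oracle calls, matching the claimed runtime dependence.

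In the small-$n$ regime, $n < (12/\epsilon)^{12/\epsilon}$, I simply invoke the PTAS of Corollary~\ref{cor:ptas for MHR}, whose running time is $n^{O(1/\epsilon^7)}$. Substituting the upper bound on $n$ gives $\bigl((12/\epsilon)^{12/\epsilon}\bigr)^{O(1/\epsilon^7)} = 2^{O(\log(1/\epsilon)/\epsilon^8)}$, which absorbs into the claimed factor. The only real obstacle is the constant-level calibration in the large-$n$ case, namely verifying that the chosen threshold makes $(1+\epsilon)(1 + 6/n^{\epsilon})$ and $(1 - e^{-n^{\epsilon}} - \epsilon)^{-1}$ jointly within $(1-\epsilon)^{-1}$; this reduces to the elementary inequalities above and requires no new ideas beyond Lemmas~\ref{lem:iidsmallcontribution} and \ref{lem:iidalmostopt}.
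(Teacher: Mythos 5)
Your structural plan is exactly the paper's: dichotomize on $n$ at the threshold $(12/\epsilon)^{12/\epsilon}$, use a single posted price via Lemmas~\ref{lem:iidsmallcontribution}, \ref{lem:trunchigh}, and \ref{lem:iidalmostopt} for large $n$ (located by binary search from the oracle anchor), and fall back to Corollary~\ref{cor:ptas for MHR} for small $n$, with the small-$n$ running time collapsing to $2^{O(\log(1/\epsilon)/\epsilon^8)}$. However, there is a concrete calibration error in your large-$n$ analysis. You instantiate the lemmas with the target parameter $\epsilon$ itself, giving $OPT \le (1+\epsilon)(1+6/n^{\epsilon})\alpha_n$ and achieved revenue $\ge (1 - e^{-n^{\epsilon}} - \epsilon)\alpha_n$, and then assert that a short arithmetic check yields ratio $(1-\epsilon)$. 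It does not: even ignoring the binary-search imprecision, $\frac{1-2\epsilon}{(1+\epsilon)(1+\epsilon/2)} \approx 1-3.5\epsilon$, and the inequality $\frac{1-2\epsilon}{1+1.5\epsilon+0.5\epsilon^2} \ge 1-\epsilon$ rearranges to $2.5 \le \epsilon + 0.5\epsilon^2$, which fails for all $\epsilon<1$. The paper's own discussion before the theorem makes this explicit, obtaining only a $(1-10\epsilon)$-approximation at this level of granularity.

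The fix, which the paper carries out, is to set $\epsilon'=\epsilon/12$ at the outset, apply the whole large-$n$ argument with $\epsilon'$ in place of $\epsilon$ (so the threshold becomes $n \ge (1/\epsilon')^{1/\epsilon'} = (12/\epsilon)^{12/\epsilon}$, consistent with what you chose), and budget the losses: $(1-10\epsilon')$ from the $OPT$-vs-single-price comparison times $(1-2\epsilon')$ from using the binary-search estimate $p\in[1-\epsilon',1+\epsilon']\alpha_n$ and charging $(1-2\epsilon')p$, yielding $(1-12\epsilon')OPT=(1-\epsilon)OPT$. You should also make the binary-search imprecision enter the revenue bound explicitly rather than treating $\alpha_n$ as exactly known; as it stands, your argument silently assumes $\alpha_n$ is computed exactly, whereas with only oracle access you obtain a $p\in[1\pm\epsilon']\alpha_n$ and must lower the posted price slightly to guarantee the events of Lemma~\ref{lem:iidalmostopt} still hold. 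Neither issue requires new ideas, but the proof as written does not establish the $(1-\epsilon)$ guarantee claimed.
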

\begin{proof}
    {Let $\epsilon'=\epsilon/12$.} Depending on the value of $n$ our algorithm proceeds in one of the following ways:
    \begin{itemize}
        {\item If $n\geq (1/\epsilon')^{1/\epsilon'}$, we do binary search starting at an anchoring point of the distribution (see Appendix~\ref{sec:model}) to find some $p\in [1-\epsilon',1+\epsilon']\alpha_n$. {This takes time polynomial in $O(\log({\log n\over \epsilon'}))$ and  the size of the input, since $\alpha_n \le \alpha_2 \cdot \log_2 n$.} We then set every item's price to $(1-2\epsilon')p$. Since $(1-2\epsilon')p\leq (1-\epsilon')\alpha_{n}$, 
    $$\Pr[\exists\ i,\ v_{i}\geq (1-2\epsilon')p]\geq \Pr[\exists\ i,\ v_{i}\geq (1-\epsilon')\alpha_{n}].$$
     On the other hand, $(1-2\epsilon')p\geq (1-2\epsilon')(1-\epsilon')\alpha_{n}$. Thus, the revenue we obtain if we price all items at $(1-2\epsilon')p$ is at least $(1-2\epsilon')$ times the revenue under price vector $P=((1-\epsilon')\alpha_n,(1-\epsilon')\alpha_n,\ldots,(1-\epsilon')\alpha_n)$. Hence, the revenue is at least $(1-12\epsilon')OPT=(1-\epsilon)OPT$. 

    \item If $n<(1/\epsilon')^{1/\epsilon'}$, we simply use the algorithm for the non-i.i.d. case (Theorem~\ref{thm:ptas mhr}).}
    
    
    
     
 \end{itemize}
\end{proof}

%

\section{An interesting example} \label{sec:counter}

A natural property than one would expect to hold  is that, when the value distributions are discrete,  there  always exists an optimal solution that uses prices from the support of the value distributions. It turns out that this is not true. Here is an example:

Suppose that the seller has two items to sell, and the buyer's values for the items are $v_{1}$, which is uniform on $\{1, 5\}$, and $v_{2}$, which is uniform on $\{3,3.5\}$. Moreover, assume that, if there is a tie between the value-minus-price gap for the two items, the buyer tie-breaks in favor of item $1$. We claim that in this case the price vector $P=(4.5,3)$ achieves higher revenue than any price vector that uses prices from the set $\{1,3,3.5,5\}$ (where the values are drawn from.) Let us do the calculation. All our calculations are written in the form $$\mathcal{R}_{P} = p_{1}\times \Pr[item\ 1\ is\ the\ winner] +p_{2}\times\Pr[item\ 2\ is\ the\ winner].$$
\begin{enumerate}
\item When $P=(4.5,3)$

$\mathcal{R}_{P} = 4.5\times(1/2\times 1)+3\times(1/2\times 1)= 30/8$
\item When $P \in \{1,3,3.5,5\}^2$:
\begin{itemize}
\item If $P=(5,3.5)$ then

$\mathcal{R}_{P} = 5\times(1/2\times 1)+ 3.5\times(1/2\times 1/2)= 27/8<30/8$

\item If $P=(5,3)$ then

$\mathcal{R}_{P} = 5\times(1/2\times 1/2)+3\times(1\times 1/2 + 1/2\times 1/2) = 28/8<30/8$

\item For any other price vector, the maximum revenue is bounded by $3.5=28/8< 30/8$.
\end{itemize}
\end{enumerate}


\bibliographystyle{alpha}
\bibliography{costasbib}
\appendix
\end{document}